\newcommand{\N}{\ensuremath{\mathbb{N}}}
\newcommand{\R}{\ensuremath{\mathbb{R}}}
\newcommand{\C}{\ensuremath{\mathbb{C}}}
\newcommand{\mattwo}[4]{\left(
	\begin{array}{cc}
		#1 & #2 \\
		#3 & #4 \\  
	\end{array}
	\right)
}
\newcommand{\dif}{{\operatorname{d}}}
\newcommand{\tr}{{\operatorname{tr}}}
\newtheorem{prop}{Proposition}[section]
\newtheorem{cor}[prop]{Corollary}
\definecolor{changes}{RGB}{0, 0, 0}
\numberwithin{equation}{section}
\numberwithin{figure}{section}
\numberwithin{table}{section}
\author{L. Eigentler and J. A. Sherratt}
\newcommand{\Addresses}{{
		\bigskip
		\footnotesize
		
		L. Eigentler (corresponding author),  Maxwell Institute for Mathematical Sciences, Department of Mathematics, Heriot-Watt University, Edinburgh EH14 4AS, United Kingdom\par\nopagebreak
		\textit{E-mail address}: \texttt{le8@hw.ac.uk}
		
		\medskip
		
	J.A. Sherratt,  Maxwell Institute for Mathematical Sciences, Department of Mathematics, Heriot-Watt University, Edinburgh EH14 4AS, United Kingdom\par\nopagebreak
	\textit{E-mail address}: \texttt{J.A.Sherratt@hw.ac.uk}

}}
\title{Effects of precipitation intermittency on vegetation patterns in semi-arid landscapes}
\date{\vspace{-5ex}}
\begin{document}


\maketitle
\Addresses
\listoftodos
\begin{abstract} 
  	Patterns of vegetation are a characteristic feature of many semi-arid regions. The limiting resource in these ecosystems is water, which is added to the system through short and intense rainfall events that cause a pulse of biological processes such as plant growth and seed dispersal. We propose an impulsive model based on the Klausmeier reaction-advection-diffusion system, analytically investigate the effects of rainfall intermittency on the onset of patterns, and augment our results by numerical simulations of model extensions. Our investigation focuses on the parameter region in which a transition between uniform and patterned vegetation occurs. Results show that decay-type processes associated with a low frequency of precipitation pulses inhibit the onset of patterns and that under intermittent rainfall regimes, a spatially uniform solution is sustained at lower total precipitation volumes than under continuous rainfall, if plant species are unable to efficiently use low soil moisture levels. Unlike in the classical setting of a reaction-diffusion model, patterns are not caused by a diffusion-driven instability but by a combination of sufficiently long periods of droughts between precipitation pulses and water diffusion. Our results further indicate that the introduction of pulse-type seed dispersal weakens the effects of changes to width and shape of the plant dispersal kernel on the onset of patterns. 
\end{abstract}

MSC codes: 35R09, 35R12, 35B36 

Keywords: impulsive model, nonlocal dispersal, pattern formation, semi-arid landscapes, precipitation intermittency

\section{Introduction}

Self-organised vegetation patterns are a characteristic feature of semi-arid regions around the world. The formation of patterns is caused by a positive feedback between plant growth and water redistribution towards areas of high biomass \cite{Rietkerk2008}. Mechanisms involved in the establishment of such a feedback loop include the formation of infiltration-inhibiting soil crusts in areas of bare ground that induce overland water flow, a combination of strong local water uptake (vertically extended root systems) and fast soil water diffusion, nonlocal water uptake (laterally extended root systems), or a combination thereof \cite{Meron2016}. Redistribution of water towards dense biomass patches drives further plant growth in these regions and thus closes the feedback loop. First discovered through areal photography in the 1950s \cite{Macfadyen1950}, vegetation patterns have been detected in various semi-arid regions of the world (see \cite{Valentin1999, Gandhi2019} for reviews) such as in the African Sahel \cite{Mueller2013, Deblauwe2012}, Somalia \cite{Hemming1965, Gowda2018}, Australia \cite{Dunkerley2002, Heras2012}, Israel \cite{Sheffer2013, Buis2009}, Mexico and the US \cite{Deblauwe2012, Penny2013, Pelletier2012} and northern Chile \cite{Fernandez-Oto2019a}. The understanding of the evolution and underlying dynamics of patterned vegetation is of crucial importance as changes to properties such as pattern wavelength, recovery time from perturbations or the area fraction colonised by plants may provide an early indication of an irreversible transition to full desert \cite{Kefi2007, Rietkerk2004, Corrado2014, Gowda2016, Meron2018, Dakos2011, Saco2018, Zelnik2018}. 

The amount of empirical data on vegetation patterns is limited due to the inability to reproduce patterns in a laboratory setting and the long time scale involved in the formation and evolution of them. Thus, a range of different mathematical models describing the phenomenon have been proposed {\color{changes}(in particular by Rietkerk et al. \cite{Rietkerk2002} and Gilad et al. \cite{Gilad2004}), which focus} on various different processes that are involved in the formation of vegetation patterns. One model that stands out due to its deliberately basic description of the plant-water-dynamics in semi-arid environments is the Klausmeier model \cite{Klausmeier1999}. The excellent framework for mathematical analysis and model extensions provided by the Klausmeier reaction-advection-diffusion model has been utilised extensively in the past (e.g. \cite{Bennett2019,Sherratt2005,Sherratt2007,Sherratt2010,Sherratt2011,Sherratt2013III,Sherratt2013IV,Sherratt2013V,Siteur2014,Ursino2006, Eigentler2018nonlocalKlausmeier, Eigentler2019Multispecies,Wang2019,Wang2018a, Consolo2019, Consolo2019a, Marasco2014, Bastiaansen2018, Eigentler2019integrodifference, Eigentler2020savanna_coexistence}). 

Rainfall in semi-arid regions occurs intermittently, seasonally or as a combination of both. Under intermittent rainfall regimes only a small number of short-lasting precipitation events per year provide a sufficiently large amount of water to affect vegetation growing in these regions \cite{Noy-Meir1973}. {\color{changes} If such rainfall events are sufficiently separated, they cause a pulse of biological processes before decay-type phenomena of dry spells take over \cite{Noy-Meir1973}. Besides plant growth, seed dispersal is also commonly observed to be synchronised with precipitation events. One mechanism, widespread in dryland ecosystems, which causes such a behaviour is ombrohydrochory, the opening of a seed container due to contact with water \cite{Navarro2009, RheedevanOudtshoorn2013}. }Plants in semi-arid regions are sensitive to quantity, frequency and temporal spread of intermittent precipitation events \cite{Fravolini2005, Fan2016, Liu2012, Heisler-White2008}.

Experimental studies suggest that if the total precipitation volume is kept constant, then a lower frequency of rainfall events yields higher plant biomass \cite{Lundholm2004}, an increase in the aboveground net primary productivity \cite{Heisler-White2008} and an increase in the seedlings' survival rate \cite{Sher2004}. The main factor for these beneficial effects is the temporal increase in soil moisture caused by larger rain events, while a higher number of smaller individual events keeps the moisture level below a threshold needed for the activation of biological processes in plants \cite{Heisler-White2008, Sher2004}. Contradictory evidence regarding seedling survival exists, which suggests that the effects of rain intermittency depend on a range of factors \cite{Lundholm2004}. In the future, changes to the temporal variability of precipitation (in particular the intensity of rainfall events) are expected to occur globally \cite{IPCC2014,Easterling2000}. 

Despite the fact that seasonality, intermittency and intensity of precipitation has an important influence on semi-desert ecosystems, most mathematical models assume that rainfall occurs continuously {\color{changes} and uniformly} in time. Some simulation-based studies, however, have addressed the phenomenon by introducing seasonality (i.e. a wet and dry season) and intermittency of rainfall to existing models for dryland vegetation dynamics. {\color{changes}These include modifications of the Rietkerk model \cite{Guttal2007, Siteur2014a}, of the Klausmeier model \cite{Ursino2006} as well as of the Gilad model \cite{Kletter2009}.} Baudena et al. \cite{Baudena2007,Baudena2008} couple a model describing the soil moisture proposed by Laio \cite{Laio2001} for the upper soil layer, in which water is added during a wet season either at a constant rate or as an instantaneous event, to vegetation dynamics. The results of these studies show beneficial effects of rainfall intermittency, such as an increase in the area covered by vegetation \cite{Kletter2009} or plant biomass \cite{Kletter2009, Baudena2007} but also suggest that a lower frequency of rain pulses increases the minimum requirement on the total annual precipitation needed to avoid convergence to a bare soil state \cite{Ursino2006}. The latter result also suggests that the size of the parameter region in which pattern onset occurs reduces under intermittent rainfall regimes \cite{Ursino2006}.
Seasonality of precipitation may have similar effects \cite{Guttal2007}, but can also be detrimental to plants by reducing their biomass and area fraction covered \cite{Kletter2009, Baudena2007}. 

{\color{changes}Effects due to changes in the frequency of rainfall events have received very little attention in the mathematical modelling of vegetation patterns, with the works of Ursino and Contarini \cite{Ursino2006} on the Klausmeier model, Kletter et al.\cite{Kletter2009} on the Gilad model, and Siteur et al. \cite{Siteur2014a} on the Rietkerk model being notable exceptions. However, none of these papers consider both a wide range of biologically relevant interpulse times and the system dynamics in drought periods. For example, both Kletter et al. and Ursino and Contarini restrict their investigation to a small number of different precipitation frequencies, while Siteur et al. neglect the ecohydrological dynamics between rainfall events. Moreover, most theoretical approaches to study temporal non-uniformity in precipitation are simulation-based. In this paper, we introduce a model based on the Klausmeier model that captures both the impulsive nature of precipitation pulses and associated processes, and also the drought period dynamics. We keep our model sufficiently simple to allow for an analytical investigation of pattern onset in the system. This enables us to consider a wide range of different rainfall regimes and study the effects of precipitation intermittency on the ecohydrological dynamics.}

One approach to modelling a system in which pulse-type phenomena occur is the use of integrodifference equations. {\color{changes} In separate work \cite{Eigentler2019integrodifference}, we show that such a framework is insufficient to capture effects of precipitation intermittency as it is unable to account for the dynamics specific to drought periods between rainfall pulses. To instead describe situations in which pulse-type phenomena occur alongside the continuous processes of dry spells,} impulsive-type systems are used. Such models consist of a system of PDEs describing continuous processes on a finite time domain $(n-1)T<t<nT$, $n\in\N$ and a set of discrete equations that update the densities at times $nT$. The use of impulsive models is a relatively new approach in mathematical modelling but such models are suitable for the description of a wide range of systems. Previous applications include descriptions of populations whose life cycle consists of two non-overlapping stages, such as organisms whose larvae are subjected to a water flow \cite{Vasilyeva2016, Huang2017}; predator prey systems in which consumer reproduction occurs only once a year and is based on the amount of stored energy accumulated through consumption of prey during the year \cite{Wang2018} or that are periodically subjected to external inputs \cite{Akhmet2006}; and more general consumer-resource systems in which the consumer reproduction is synchronised \cite{Pachepsky2008, Lewis2012} or in which seasonal harvesting occurs \cite{Lewis2012}. Impulsive models can further provide a mechanistic interpretation of the underlying ecological processes involved in purely discrete systems \cite{Geritz2004}.

The modelling of plant dispersal as an instantaneous event requires its description by a convolution integral instead of the widely used and mathematically more accessible diffusion term. Biologically, however, this provides a more realistic description of the spatial redistribution of plants as the dynamics of seed dispersal are often affected by nonlocal processes \cite{Bullock2017}. The use of a convolution term in the description of seed dispersal is thus not a novelty of this paper but has been also been used in a number of previous models for dryland ecosystems \cite{Baudena2013, Pueyo2008, Pueyo2010, Eigentler2018nonlocalKlausmeier}.

In this paper, we introduce and analytically study an impulsive model based on the Klausmeier model to gain a better understanding of the effects of pulse-type processes on the onset of vegetation patterns. We motivate the presentation of the model in Section \ref{sec: models} by a review of the Klausmeier model and its most relevant results.
In Section \ref{sec: Impulsive} we derive conditions for the onset of patterns in the impulsive model based on a linear stability analysis. This allows us to investigate how changes in the rainfall regime affect pattern onset and provides an insight into the mechanism that is responsible for the formation of patterns in the model.
The analysis presented in Section \ref{sec: Impulsive} is tractable due to some simplifications, such as the use of a specific plant dispersal kernel and the restriction to a flat domain. In Section \ref{sec: Impulsive: Simulations} we augment our analytical results by numerical simulations of extensions of the basic model studied in Section \ref{sec: Impulsive} to analyse and discuss the effects of our simplifying assumptions. 
We present an interpretation of our results and address potential shortfalls in Section \ref{sec: Discussion}.


\section{Model description}\label{sec: models}
In this section we introduce the model which we use to investigate the effects of rainfall intermittency on the onset of patterns in semi-arid environments. We base our model on an extension of the Klausmeier model, whose most relevant results are reviewed.
\subsection{Klausmeier models}
One of the earliest models describing the plant-water dynamics in semi-arid environments is due to Klausmeier \cite{Klausmeier1999}. The relative simplicity of the model provides a framework for a rich mathematical analysis (e.g. \cite{Sherratt2005,Sherratt2007,Sherratt2010,Sherratt2011,Sherratt2013III,Sherratt2013IV,Sherratt2013V,Siteur2014,Ursino2006}). After a suitable nondimensionalisation \cite{Klausmeier1999,Sherratt2005} the model is
\begin{subequations}
	\label{eq: Intro Klausmeier local}
	\begin{align}
	\frac{\partial u}{\partial t} &= \overbrace{u^2w}^{\text{plant growth}} - \overbrace{Bu}^{\text{plant mortality}} + \overbrace{\frac{\partial^2 u}{\partial x^2}}^{\text{plant dispersal}}, \label{eq:Intro Klausmeier local plants}   \\
	\frac{\partial w}{\partial t} &= \underbrace{A}_{\text{rainfall}}-\underbrace{w}_{\text{evaporation}}-\underbrace{u^2w}_{\substack{\text{water consumption}\\\text{by plants}}} + \underbrace{\nu\frac{\partial w}{\partial x}}_{\substack{\text{water flow}\\\text{downhill}}}+ \underbrace{d\frac{\partial^2w}{\partial x^2}}_{\text{water diffusion}},
	\end{align}
\end{subequations}
where $u(x,t)$ denotes the plant density, $w(x,t)$ the water density, $x\in \R$ the space domain where $x$ is increasing in the uphill direction and $t>0$ denotes the time. The diffusion of water was not originally included in the model but is a well established  addition \cite{Kealy2012,Siteur2014, Stelt2013, Zelnik2013}. It is assumed that water is added to the system at a constant rate, evaporation effects are proportional to the water density \cite{Rodriguez-Iturbe1999, Salvucci2001} and the plant mortality rate is density-independent. {\color{changes}The nonlinearity in the water consumption and plant growth terms arises due to the positive feedback between local vegetation growth and water redistribution. Water uptake by plants is the product of the consumer density ($u$), the resource density ($w$) and a term that accounts for the increased resource availability due to the positive feedback caused, for example, by an increase of soil permeability in vegetated areas ($u$). This nonlinearity} drives the formation of spatial patterns. The parameters $A$, $B$, $\nu$ and $d$ represent rainfall, plant loss, the slope and water diffusion, respectively. 

{\color{changes}The Klausmeier model \eqref{eq: Intro Klausmeier local} combines all hydrological dynamics into one single variable $w$. By contrast, some other modelling frameworks distinguish between surface water and soil moisture dynamics \cite{Rietkerk2002, Gilad2004}. In this paper, we focus on the modelling framework presented by the Klausmeier model without such a distinction, but the application of our modelling approach to a system with both surface and soil water density is briefly discussed in Sec. \ref{sec: Discussion}.}

In a previous paper \cite{Eigentler2018nonlocalKlausmeier}, we have studied the effects of replacing the plant diffusion term in the Klausmeier model by a convolution of a probability density $\phi$ and the plant density $u$, i.e. 
\begin{subequations}
	\label{eq: Intro Klausmeier nonlocal}
	\begin{align}
	\frac{\partial u}{\partial t} &= u^2w - Bu + C\left( \phi(\cdot;a) \ast u(\cdot,t) - u(x,t)\right), \label{eq:Intro Klausmeier nonlocal plants}   \\
	\frac{\partial w}{\partial t} &= A-w-u^2w + \nu\frac{\partial w}{\partial x}+ d\frac{\partial^2w}{\partial x^2}.
	\end{align}
\end{subequations}
The additional parameters $C$ and $a$ represent the rate of plant dispersal and reciprocal width of the dispersal kernel, respectively. 

A linear stability analysis of both the local model \eqref{eq: Intro Klausmeier local} and the nonlocal model \eqref{eq: Intro Klausmeier nonlocal}, with the Laplace kernel
\begin{align}\label{eq:difference equations: simulations: Laplacian}
\phi(x) = \frac{a}{2}e^{-a|x|}, \quad a>0, x\in\R,
\end{align}
used in the latter, gives an insight into the nature of patterned solutions of the system. On flat ground, i.e. $\nu = 0$, the onset of spatial patterns occurs due to a diffusion-driven instability. Thus for any level of rainfall $A$, there exists a threshold $d_c \in \R$ on the diffusion coefficient such that an instability occurs for all $d>d_c$. The analysis for the nonlocal model with the Laplace kernel shows that an increase in the width of the dispersal kernel inhibits the formation of patterns by causing an increase in the diffusion threshold.

Unlike the Laplace kernel, other kernel functions do not provide a simplification sufficient to study the onset of patterns analytically. Numerical simulations, however, confirm that the trends observed from the linear stability analysis for the Laplace kernel also apply to other kernel functions \cite{Eigentler2018nonlocalKlausmeier}.

\subsection{Impulsive Model} 

The Klausmeier model assumes that all processes occur continuously in time. To account for the more realistic combination of pulse-type events associated with short, high intensity precipitation events with the continuous nature of plant loss, water evaporation and water dispersal, we propose an impulsive model to describe the plant and water dynamics in semi-arid environments.
Under the assumption that water transport and the decay-type processes of plant mortality and water evaporation are the only processes occurring in drought periods between rainfall pulses \cite{Noy-Meir1973}, the model is
\begin{subequations}
	\label{eq: Impulsive: intro: model general nondim}
	\begin{align}
	\frac{\partial u_n}{\partial t} &= \overbrace{-k_1u_n}^{\text{plant loss}}, \label{eq: Impulsive: intro: model general nondim u in year}\\
	\frac{\partial w_n}{\partial t} &= \underbrace{-k_2w_n}_{\text{evaporation}} + \underbrace{k_3 \frac{\partial w_n}{\partial x}}_{\substack{\text{water flow}\\\text{downhill}}} + \underbrace{k_4 \frac{\partial^2 w_n}{\partial x^2}}_{\text{water diffusion}}, \label{eq: Impulsive: intro: model general nondim  w in year}\\
	u_{n+1}(x,0) &= \tilde{f}\left(u_n(x,\tau),w_n(x,\tau)\right), \label{eq: Impulsive: intro: model general nondim u between year}\\
	w_{n+1}(x,0) &= \tilde{g}\left(u_n(x,\tau),w_n(x,\tau)\right), \label{eq: Impulsive: intro: model general nondim w between year}
	\end{align}
\end{subequations}
where $u_n=u_n(x,t)$, $w_n=w_n(x,t)$, $x \in \R$, $0<t<\tau$ and $n \in \N$. The spatial domain is considered to be infinite with $x$ increasing in the uphill direction. Between the $(n-1)$-th and $n$-th precipitation pulse, the interpulse PDEs \eqref{eq: Impulsive: intro: model general nondim u in year} and \eqref{eq: Impulsive: intro: model general nondim  w in year} are considered on the finite time domain $0<t<\tau$, where $\tau$ is the time (in years) between the occurrence of the pulse events described by the update equations \eqref{eq: Impulsive: intro: model general nondim u between year} and \eqref{eq: Impulsive: intro: model general nondim w between year}. The interpulse PDEs \eqref{eq: Impulsive: intro: model general nondim u in year} and \eqref{eq: Impulsive: intro: model general nondim  w in year} describe the continuous loss of plants at rate $k_1$, and evaporation at rate $k_2$. While no plant dispersal is assumed to occur during this phase, water diffuses with diffusion coefficient $k_4$ and flows downhill at velocity $k_3$. The simplistic nature of the PDE system allows for an analytical study of conditions for pattern onset to occur (Section \ref{sec: Impuslsive: LinStab}), but an extension which also includes plant growth during drought periods is considered using numerical simulations in Section \ref{sec: Impulsive: Simulations}.

The functions $\tilde{f}(u_n(x,\tau),w_n(x,\tau))$ and $\tilde{g}(u_n(x,\tau),w_n(x,\tau))$ in the update equations \eqref{eq: Impulsive: intro: model general nondim u between year} and \eqref{eq: Impulsive: intro: model general nondim w between year} describe the system's dynamics during short rainfall pulses, which are assumed to occur periodically in time. To account for plant growth and the associated consumption of water as well as seed dispersal synchronised with a precipitation event, we choose
\begin{align*}
\tilde{f}\left(u_n(x,\tau),w_n(x,\tau)\right) &= \overbrace{u_n(x,\tau)}^{\text{existing plants}} + \overbrace{\phi(\cdot;a) \ast \left(k_5\left(\frac{u_n(\cdot,\tau)}{k_6+u_n(\cdot,\tau)}\right)^2 \left(w_n(\cdot,\tau) + \tau k_7\right) \right)}^{\text{dispersal of newly added biomass}},\\
\tilde{g}\left(u_n(x,\tau),w_n(x,\tau)\right) &= \underbrace{w_n(x,\tau)}_{\text{existing water}} + \underbrace{\tau k_7}_{\text{rainfall}} -  \underbrace{\left(\frac{u_n(x,\tau)}{k_6+u_n(x,\tau)}\right)^2\left(w_n({\color{changes}x},\tau) + \tau k_7\right)}_{\text{water uptake}}.
\end{align*}
  In the update equation \eqref{eq: Impulsive: intro: model general nondim w between year} a constant amount water $\tau k_7$ is added to the existing water density. The parameter $k_7$ denotes the total amount of rainfall that occurs over one year and $\tau$ (in years) is the time between two rainfall events. The water volume added to the system during one precipitation event thus is $\tau k_7$. At the same time, water is converted into biomass. 
  
  {\color{changes} Similar to the Klausmeier model \eqref{eq: Intro Klausmeier local}, the term describing water consumption by plants consists of the total resource density ($w+\tau k_7$), a term describing the water uptake by the plants' roots ($u/(k_6+u)$), and a term accounting for the increased ability of plants to consume water in dense patches $u/(k_6+u)$). As in the Klausmeier model, the functional responses of the latter two to the plant density are chosen to be identical for mathematical convenience. However, the functional response is different to that used in the Klausmeier model. In the impulsive model $H_{\operatorname{up}}(u) = u/(k_6+u)$, motivated by the saturating behaviour of water infiltration into the soil based on empirical evidence \cite{Wijngaarden1985} and previous applications in mathematical models \cite{HilleRisLambers2001, Gilad2004}, while in the Klausmeier model $H_{\operatorname{up}}(u) = u$. The pulse-type occurrence of precipitation and water uptake in \eqref{eq: Impulsive: intro: model discrete dispersal rain water uptake} necessitates a saturating behaviour ($H_{\operatorname{up}}^2(u) <1$ for all $u\ge0$) of the functional response to ensure positivity of \eqref{eq: Impulsive: intro: model discrete dispersal rain water uptake w between year}. The parameter $k_6$ is the half saturation constant of the water infiltration and corresponds to the level of plant biomass at which the water infiltration into the soil is at half of its maximum. This water uptake term directly corresponds to the term in \eqref{eq: Impulsive: intro: model general nondim u between year}, describing plant growth, where $k_5$ quantifies the plant species' water to biomass conversion rate.  We have numerically tested the model for other nonlinearities in this term with such a saturating behaviour without observing any qualitative differences in the results on pattern onset.
  
  Finally, dispersal of the newly added biomass is described by the convolution term of that biomass with a probability density function $\phi$. This introduces an additional parameter $a$, describing the width of the dispersal kernel in a reciprocal way. This constitutes a second main difference to the models discussed above. While in the Klausmeier models the whole plant density undergoes diffusion/nonlocal dispersal, in the impulsive model only newly added biomass is dispersed, weakening the role of dispersal in the model. }
  
  No water redistribution is assumed to occur in this stage. While overland water flow during intense rainfall events is an area of active research \cite{Rossi2017, Thompson2011, Wang2015}, some hydrological modelling approaches suggest that if the contrast in water infiltration rates between bare and vegetated soil is small (e.g. in non crust-forming soil types such as sandy soil), then no water run-on occurs at plant patches during precipitation pulses \cite{Rossi2017}. An overview of all parameters, including estimates, is given in Table \ref{tab: Implsive: Intro: Parameters}. 

\begin{table}
	\begin{tabularx}{\textwidth}{lllX}
		\multicolumn{4}{l}{\textbf{Dimensional parameters of \eqref{eq: Impulsive: intro: model general nondim}}} \\
		\hline
		Parameter & Units & Estimates & Description \\
		\hline
		$k_1$ & year$^{-1}$ 						& \makecell[l]{1.8 \cite{Klausmeier1999}, 0.18 \cite{Klausmeier1999}, \\ 1.2\cite{Gilad2007}}	& Rate of plant loss  \\ \hline
		$k_2$ & year$^{-1}$ 						& \makecell[l]{4 \cite{Klausmeier1999,Gilad2007,Siteur2014}, \\ 0.2\cite{Rietkerk2002} }						& Rate of evaporation \\ \hline
		$k_3$ & m year$^{-1}$ 						& 0-365 \cite{Klausmeier1999} & Velocity of water flow downhill \\ \hline
		$k_4$ & m$^2$ year$^{-1}$ 					& 500 \cite{Siteur2014}, & Water diffusion coefficient \\ \hline
		$k_5$ & \makecell[l]{(kg biomass) \\ (kg H$_2${\color{changes}O})$^{-1}$} 	& \makecell[l]{0.01\cite{Rietkerk2002}, 0.003 \cite{Klausmeier1999}, \\0.002 \cite{Klausmeier1999}} & Yield of plant biomass per kg water \\ \hline
		$k_6$ & \makecell[l]{(kg biomass) \\ m$^{-2}$} 				& 0.05 \cite{Gilad2007} & Half saturation constant of water uptake  \\ \hline
		$k_7$ & \makecell[l]{(kg H$_2${\color{changes}O}) \\ m$^{-2}$ year$^{-1}$}	& \makecell[l]{250-750 \cite{Klausmeier1999},\\ 0-1000 \cite{Gilad2007}} & Total amount of rainfall in one year\\ \hline
		$\tau$ & year								& 0-1 & Interpulse time \\ \hline
		$a$ & m$^{-1}$						& 0.03-$100$ \cite{Bullock2017} & Scale parameter of dispersal kernel, reciprocal of the width \\ \hline \hline
		\multicolumn{4}{l}{\textbf{Nondimensional parameters of \eqref{eq: Impulsive: intro: model discrete dispersal rain water uptake}}} \\
		\hline
		Parameter & Scaling & Estimates & Description \\
		\hline
		$A$ & $k_2^{-1}k_5k_6^{-1}k_7$ 									& 0-15 \cite{Klausmeier1999,Gilad2007,Siteur2014} & Precipitation per year \\ \hline
		$B$ & $k_1k_2^{-1}$									 	& \makecell[l]{0.45 \cite{Klausmeier1999}, 0.3\cite{Gilad2007}  \\ 0.045 \cite{Klausmeier1999}} & Plant mortality rate\\ \hline
		$T$ & $k_2\tau$ 									& 0-4 & Interpulse time \\ \hline
		$\nu$ & $ k_2^{-1}k_3a$ 									& 0-$10^3$ \cite{Klausmeier1999, Bullock2017} & Slope (water flow downhill)\\ \hline
		$d$ & $k_2^{-1}k_4a^2$									& 0.1-$10^{6}$ \cite{Siteur2014, Bullock2017} &  Water diffusion coefficient
	\end{tabularx}
	\caption{Overview of parameters in \eqref{eq: Impulsive: intro: model general nondim} and \eqref{eq: Impulsive: intro: model discrete dispersal rain water uptake}. This table gives an overview of both the dimensional parameters of model \eqref{eq: Impulsive: intro: model general nondim} and the nondimensional parameters of \eqref{eq: Impulsive: intro: model discrete dispersal rain water uptake}, including their units (dimensional parameters) or scalings (nondimensional parameters), and their estimated values as well as an interpretation/description. {\color{changes}Note that parameter $k_5$ is dimensionless. However, for ease of interpretation, we distinguish between (kg biomass) and (kg H$_2$O).} The wide ranges for the water dispersal rates $\nu$ and $d$ arise from their dependence on the variations in the width $a$ of the plant dispersal kernel.}\label{tab: Implsive: Intro: Parameters}
\end{table}

{\color{changes}The formulation of \eqref{eq: Impulsive: intro: model general nondim} is based on a number of simplifying assumptions (e.g. flat terrain, no plant growth during drought periods, linear functional response to the water density in the water consumption term) to make to model analytically tractable (Sec \ref{sec: Impuslsive: LinStab}). In Sec. \ref{sec: Impulsive: Simulations}, we relax these assumptions and analyse their effects using numerical methods.}

The model can be nondimensionalised by $u = k_6\tilde{u}$, $w = k_5^{-1}k_6\tilde{w}$, $x = a^{-1}\tilde{x}$, $t=  k_2^{-1}\tilde{t}$, $A = k_2^{-1}k_5k_6^{-1}k_7$, $B = k_1k_2^{-1}$, $T = k_2\tau$, $\nu = k_2^{-1}k_3a$ and $d = k_2^{-1}k_4a^2$, to give 
\begin{subequations}
	\label{eq: Impulsive: intro: model discrete dispersal rain water uptake}
	\begin{align}
	\frac{\partial u_n}{\partial t} &= -Bu_n, \label{eq: Impulsive: intro: model discrete dispersal rain water uptake u in year}\\
	\frac{\partial w_n}{\partial t} &= -w_n + \nu \frac{\partial w_n}{\partial x} + d \frac{\partial^2 w_n}{\partial x^2}, \label{eq: Impulsive: intro: model discrete dispersal rain water uptake w in year}\\
	u_{n+1}(x,0) &= u_n(x,T) +  \phi(\cdot;1) \ast \left(\left(\frac{u_n(\cdot,T)}{1+u_n(\cdot,T)}\right)^2 \left(w_n(\cdot,T) + TA\right) \right), \label{eq: Impulsive: intro: model discrete dispersal rain water uptake u between year}\\
	w_{n+1}(x,0) &= \left(w_n(x,T) + TA\right) \left(1-\left(\frac{u_n(x,T)}{1+u_n(x,T)}\right)^2\right), \label{eq: Impulsive: intro: model discrete dispersal rain water uptake w between year}
	\end{align}
\end{subequations}
after dropping the tildes for brevity, where $u_n=u_n(x,t)$, $w_n=w_n(x,t)$, $x \in \R$, $0<t<T$ and $n \in \N$. While the dimensionless parameters $A$, $B$ and $T$ are combinations of several of the original parameters, they can be interpreted as the total amount of rainfall per year, rate of plant loss and time between separate rain and dispersal events, respectively. The water redistribution parameters $\nu$ and $d$ describe the ratio of the water flow coefficients (advection and diffusion, respectively) to the plant dispersal kernel width $1/a$. Their estimates are also included in Table \ref{tab: Implsive: Intro: Parameters}. 

In this form, $T=4$ corresponds to rain/dispersal events occurring once per year. Even though we present results for $0<T<4$ in this paper, it is important to emphasise that ecologically it is meaningless to consider the limit $T\rightarrow 0$. {\color{changes}In this limit, \eqref{eq: Impulsive: intro: model discrete dispersal rain water uptake} does not tend to an ecologically meaningful model with a continuous rainfall regime. Instead, it simply reduces to an integrodifference system given by \eqref{eq: Impulsive: intro: model discrete dispersal rain water uptake u between year} and \eqref{eq: Impulsive: intro: model discrete dispersal rain water uptake w between year}, in which no plant death and water evaporation occur.}


\section{Onset of patterns}\label{sec: Impulsive} 
A common method to study the onset of patterns is linear stability analysis. Spatial patterns occur if a steady state that is stable to spatially homogeneous perturbations becomes unstable if a spatially heterogeneous perturbation is introduced. In this section we apply such an approach to the impulsive model \eqref{eq: Impulsive: intro: model discrete dispersal rain water uptake} on flat ground for the Laplace kernel. Our analysis shows that while a smaller number of strong precipitation events inhibits their onset by decreasing the size of the parameter region supporting the onset of patterns, it also increases the requirements on the total amount of rainfall for plants to persist in a spatially uniform equilibrium. We further show that the introduction of temporal rainfall intermittency replaces water diffusion as the main cause of spatial patterns.


\subsection{Linear Stability Analysis}\label{sec: Impuslsive: LinStab}

{\color{changes}
The use of linear stability analysis to determine conditions for the onset of patterns in a system concentrates on the calculation of growth/decay rates of perturbations to a spatially uniform equilibrium. In PDE systems and integrodifference systems, spatially uniform steady states are constant in both space and time, and can be calculated by setting all derivatives to zero (PDE systems) or imposing $u_{n+1} = u_n$ (integrodifference systems). By contrast, spatially uniform equilibria of impulsive systems are not constant in time. Instead, they are periodic in time with period $T$, the time between the occurrences of pulse-type events, and undergo the same cycle during each interpulse period. Consequently, time derivatives in the interpulse PDEs cannot be neglected in the calculation of spatially uniform equilibria. For the given impulsive model
\begin{subequations}
	\label{eq: Impulsive: intro: model general}
	\begin{align}
	\frac{\partial u_n}{\partial t} &= -Bu_n, \label{eq: Impulsive: intro: model general u in year}\\
	\frac{\partial w_n}{\partial t} &= -w_n + \nu \frac{\partial w_n}{\partial x} + d \frac{\partial^2 w_n}{\partial x^2}, \label{eq: Impulsive: intro: model general w in year}\\
	u_{n+1}(x,0) &=\tilde{f}\left(u_n(x,T),w_n(x,T)\right), \label{eq: Impulsive: intro: model general u between year}\\
	w_{n+1}(x,0) &= \tilde{g}\left(u_n(x,T),w_n(x,T)\right), \label{eq: Impulsive: intro: model general w between year}
	\end{align}
\end{subequations}
where $u_n=u_n(x,t)$, $w_n=w_n(x,t)$, $x \in \R$, $0<t<T$ and $n \in \N$, the assumption of spatial uniformity reduces the impulsive system to the difference system
\begin{subequations}\label{eq: Impulsive: model: difference model to find equilibria}
\begin{align}
u_{n+1}(0) &=\tilde{f}\left(u_n(0)e^{-BT},w_n(0)e^{-T}\right), \\
w_{n+1}(0) &= \tilde{g}\left(u_n(0)e^{-BT},w_n(0)e^{-T}\right),
\end{align}
\end{subequations}
after solving \eqref{eq: Impulsive: intro: model general u in year} and \eqref{eq: Impulsive: intro: model general w in year}, where the densities during any interpulse period are given by $u_n(t) = u_n(0)e^{-Bt}$ and $w_n(t) = w_n(0)e^{-t}$ for $0\le t\le T$. Even though a non-trivial equilibrium $(\overline{u}(t), \overline{w}(t))$ of \eqref{eq: Impulsive: intro: model general} is a periodic function of time, we introduce the notation $\overline{u}^0$ and $\overline{w}^0$ to denote the equilibrium densities at the start of the interpulse period, i.e. $\overline{u}^0:=\overline{u}(0)$ and $\overline{w}^0:=\overline{w}(0)$. This yields that the general, time-dependent equilibrium densities can be written as $\overline{u}(t)=\overline{u}^0e^{-Bt}$ and $\overline{w}=\overline{w}^0e^{-t}$ for $0\le t\le T$. For brevity, we use the notation $(\overline{u}^0,\overline{w}^0)$ to refer to the equilibrium in the analysis that follows.  Thus, from the reduced difference model \eqref{eq: Impulsive: model: difference model to find equilibria} it follows that the equilibria of \eqref{eq: Impulsive: intro: model general} can be found by solving
\begin{align*}
\overline{u}^0 &=\tilde{f}\left(\overline{u}^0e^{-BT},\overline{w}^0e^{-T}\right), \\
\overline{w}^0&= \tilde{g}\left(\overline{u}^0e^{-BT},\overline{w}^0e^{-T}\right).
\end{align*}
Application of this procedure to \eqref{eq: Impulsive: intro: model discrete dispersal rain water uptake} gives the spatially uniform equilibria of the impulsive system as
}
\begin{align*}
(\overline{u}^0_d,\overline{w}^0_d) &= \left(0, \frac{ATe^T}{e^T-1}\right), 
(\overline{u}^0_\pm,\overline{w}^0_\pm) = \left({\frac { \left( AT-2{{e}^{-T}}+2 \right) {{e}^{-BT}
		}\pm \sqrt {\eta}+2{{e}^{-T}}-2}{2{{e}^{-BT}} \left( 1 - {{e}^{-BT}} \right) }}, \right.\\  &\left. \frac { 2\left(  \left( AT-2{{ e}^{-T}}+1 \right) {{e}^{-BT}}\pm\sqrt {\eta}+2{{ e}^{-T}}-1 \right) \left(1- {{e}^{-BT}} \right) }{ \left(  \left( AT-2{{e}^{-T}}+2 \right) {{e}^{-BT}}\pm\sqrt{\eta}+2{{e}^{-T}}-2 \right) {{e}^{-BT}}} \right),
\end{align*}
where
\begin{multline*}
\eta = \left( 4 \left( {{e}^{-T}} \right) ^{2}+ \left( -4AT-4\right) {{e}^{-T}}+{A}^{2}{T}^{2}+4AT \right)  \left( {{e}^{-BT}} \right) ^{2} \\ +4 \left( {{ e}^{-T}}-1 \right)  \left( AT-2{{e}^{-T}} \right) {{e}^{-BT}}  +4 \left( {{e}^{-T}}\right) ^{2}-4{{e}^{-T}}.
\end{multline*}
The steady states $(\overline{u}^0_\pm,\overline{w}^0_\pm)$ only exist provided that 
\begin{align}\label{eq: Impulsive: LinStab: lower bound on A for ss to exist}
A> A_{\min}:= \frac{2\left(1-e^{-T}+\sqrt{1-e^{-T}}\right)\left(1-e^{-BT}\right)}{Te^{-BT}},
\end{align}
to ensure positivity of $\eta$. In principle, this structure is very similar to that of the Klausmeier models \eqref{eq: Intro Klausmeier local} and \eqref{eq: Intro Klausmeier nonlocal}. For rainfall levels below $A_{\min}$ only the desert steady state $(\overline{u}^0_d,\overline{w}^0_d)$ exists and plants die out, while for sufficiently large amounts of precipitation two further spatially uniform equilibria with non-zero vegetation density exist. An initial conclusion therefore is the existence of an inhibitory effect of long drought periods. The existence threshold $A_{\min}$ of $(\overline{u}^0_+,\overline{w}^0_+)$ increases with the interpulse time $T$ and thus enlarges the parameter region in which the desert equilibrium $(\overline{u}^0_d,\overline{w}^0_d)$ is the only spatially uniform steady state. {\color{changes}Even though $A_{\min}$ does not yield any information on the existence of spatially non-uniform solutions for low precipitation levels, we use this threshold as a proxy for the minimum water requirements of the ecosystem. This crucial property is revisited in our discussion on model extensions in Sec. \ref{sec: Impulsive: Simulations}.}

Similar to the Klausmeier models, spatial patterns arise from the steady state $(\overline{u}^0_+,\overline{w}^0_+)$ which is stable to spatially homogeneous perturbations (Proposition \ref{prop: Impulsive: LinStab: stability spatially homog pert}). 
The stability structure of the steady states of the Klausmeier models is preserved in the impulsive model, i.e. the desert steady state $(\overline{u}^0_d,\overline{w}^0_d)$ and the vegetation steady state $(\overline{u}^0_+,\overline{w}^0_+)$ are stable to spatially homogeneous perturbations, while the other vegetation steady state $(\overline{u}^0_-,\overline{w}^0_-)$ is unstable for all biologically realistic parameter choices.

\begin{prop}\label{prop: Impulsive: LinStab: stability spatially homog pert}
	Let $A>A_{\min}$,
	\begin{align*}
	\overline{B_2} = \frac{1}{T} \ln\left({\color{changes}1+\frac{ATe^T\sqrt{e^T-1}\left(\sqrt{e^T} - \sqrt{e^T-1}\right)}{2\left(e^T-1\right)}}\right),
	\end{align*}
	and $J_1(B):=e^{-T(B+1)} (\overline{\alpha} \overline{\delta} - \overline{\gamma} \overline{\beta}){\color{changes}-1}$, where
	\begin{align}\begin{split}\label{eq: Impulsive: LinStab: Jacobian coefficients spatially homogeneous}
	\overline{\alpha} &=\tilde{f}_u\left(\overline{u}^0e^{-BT},\overline{w}^0e^{-T}\right), \quad  \overline{\beta} = \tilde{f}_w\left(\overline{u}^0e^{-BT},\overline{w}^0e^{-T}\right), \\ \overline{\gamma} &= \tilde{g}_u\left(\overline{u}^0e^{-BT},\overline{w}^0e^{-T}\right), \quad \overline{\delta} = \tilde{g}_w\left(\overline{u}^0e^{-BT},\overline{w}^0e^{-T}\right).
	\end{split}\end{align}
	If $J_1(B)$ admits a positive real root $\overline{B_1}$, the steady state $(\overline{u}^0_+, \overline{w}^0_+)$ is stable to spatially homogeneous perturbations if $B<\min\{\overline{B_1},\overline{B_2}\}$ provided that $\overline{B_2} \in \R$ or $B<\overline{B_1}$ provided that $\overline{B_2} \notin \R$. If no positive real solution of $J_1(B)=0$ exists, then $(\overline{u}^0_+, \overline{w}^0_+)$ is stable if $B<\overline{B_2}$ provided that $\overline{B_2} \in \R$.
\end{prop}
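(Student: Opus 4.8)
The plan is to exploit the fact that a spatially homogeneous perturbation keeps the solution spatially uniform, so that the stability question collapses onto the reduced difference system \eqref{eq: Impulsive: model: difference model to find equilibria}. Under spatial uniformity the interpulse PDEs integrate to $u_n(t)=u_n(0)e^{-Bt}$ and $w_n(t)=w_n(0)e^{-t}$, and the dynamics of the pair $(u_n(0),w_n(0))$ are governed by the planar map $F(U,W)=\bigl(\tilde f(Ue^{-BT},We^{-T}),\,\tilde g(Ue^{-BT},We^{-T})\bigr)$. Stability of $(\overline u^0_+,\overline w^0_+)$ to spatially homogeneous perturbations is therefore exactly linear stability of this fixed point of $F$, and the full Fourier analysis needed for genuine pattern onset can be set aside in this sub-case.

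First I would linearise $F$ at $(\overline u^0_+,\overline w^0_+)$. The chain rule gives the Jacobian
\[
M=\begin{pmatrix}\overline\alpha\, e^{-BT} & \overline\beta\, e^{-T}\\[2pt] \overline\gamma\, e^{-BT} & \overline\delta\, e^{-T}\end{pmatrix},
\]
with $\overline\alpha,\overline\beta,\overline\gamma,\overline\delta$ the partials in \eqref{eq: Impulsive: LinStab: Jacobian coefficients spatially homogeneous}, whence $\tr M=\overline\alpha e^{-BT}+\overline\delta e^{-T}$ and $\det M=e^{-(B+1)T}(\overline\alpha\overline\delta-\overline\gamma\overline\beta)$, so that $J_1(B)=\det M-1$. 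Linear stability of the fixed point is equivalent to both roots of the characteristic polynomial $P(\lambda)=\lambda^2-(\tr M)\lambda+\det M$ lying in the open unit disc; for this real quadratic I would invoke the Jury (Schur--Cohn) conditions
\[
P(1)=1-\tr M+\det M>0,\qquad P(-1)=1+\tr M+\det M>0,\qquad \det M<1 .
\]

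Next I would dispose of the redundant condition and pin down the two thresholds. Evaluating the partials explicitly (for uniform states the convolution in $\tilde f$ acts as the identity, since $\int\phi=1$) shows $\overline\alpha,\overline\delta>0$, $\overline\beta>0$ and $\overline\gamma<0$, hence $\tr M>0$ and $\det M>0$; consequently $P(-1)>0$ holds automatically for all biologically realistic parameters and never binds. The remaining two conditions supply the thresholds: $\det M<1$ is precisely $J_1(B)<0$, yielding $\overline{B_1}$ (the positive root of $J_1$), and when $J_1$ has no positive root $\det M<1$ holds throughout so this condition is vacuous; while $P(1)>0$, i.e.\ $\det(I-M)>0$, produces $\overline{B_2}$.

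The identification of $\overline{B_2}$ is the crux. The clean route is to observe that $P(1)=0$ forces $\lambda=1\in\mathrm{spec}(M)$, which is exactly the non-hyperbolic eigenvalue at the saddle--node where $(\overline u^0_+,\overline w^0_+)$ and $(\overline u^0_-,\overline w^0_-)$ collide, i.e.\ where $A=A_{\min}$ in \eqref{eq: Impulsive: LinStab: lower bound on A for ss to exist}. Solving $A=A_{\min}(B)$ for $B$ then returns the stated closed form after rationalising the $\sqrt{1-e^{-T}}$ factor. Equivalently one may substitute the explicit $(\overline u^0_+,\overline w^0_+)$ into $\det(I-M)=0$ and simplify using the steady-state relations; this brute-force route is where the main difficulty lies, both because the expressions carry the $\sqrt\eta$ of the steady state and because one must track the sign of $P(1)$ to convert the inequality into $B<\overline{B_2}$ (and to handle the case where the resulting expression is non-real, in which $P(1)$ keeps one sign and the condition is again vacuous). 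Combining the two active conditions $\det M<1$ and $P(1)>0$ yields stability precisely for $B<\min\{\overline{B_1},\overline{B_2}\}$, with the stated degenerations when either threshold is inactive.
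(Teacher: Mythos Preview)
Your proposal is correct and follows essentially the same route as the paper: reduce to the planar difference map $F$, linearise, and apply the Jury conditions to the resulting $2\times 2$ Jacobian. The paper's proof is terser than yours on two points. First, it writes the second Jury condition as $1+\det J>|\tr J|$ and then silently drops the modulus, whereas you justify this explicitly via the sign analysis $\overline\alpha,\overline\delta>0\Rightarrow\tr M>0$; your treatment is more complete here. Second, for the $P(1)$ threshold the paper simply computes $J_2(0)=0$ and $J_2'(0)=T>0$ (a limiting evaluation at $B=0$, where $\overline u^0_+\to\infty$) and \emph{defines} $\overline{B_2}$ as the smallest positive root of $J_2$, without deriving the closed form stated in the proposition. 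Your saddle--node identification, inverting $A=A_{\min}(B)$ for $B$, is the natural route to that closed form and indeed reproduces it after the rationalisation you describe. What the paper's $J_2(0)=0$, $J_2'(0)>0$ computation buys, and what you flag as still needing to be supplied, is the sign information confirming that $P(1)>0$ on $(0,\overline{B_2})$; conversely, the paper never verifies that its abstractly defined smallest positive root agrees with the stated formula. The two arguments are thus complementary halves of the same approach.
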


{\color{changes}The proof of Proposition \ref{prop: Impulsive: LinStab: stability spatially homog pert}, as well as all those of all other propositions, is deferred until the end of the section.}

From Proposition \ref{prop: Impulsive: LinStab: stability spatially homog pert} it follows that the steady state $(\overline{u}^0_+,\overline{w}^0_+)$ is stable to spatially homogeneous perturbations close to $B=0$ {\color{changes} for biologically relevant parameters (i.e. $A,T>0$).}
Similar calculations yield that $(\overline{u}^0_-, \overline{w}^0_-)$ is unstable close to $B=0$. In particular, for $B=0.45$, the highest estimate of the plant mortality parameter (see Table \ref{tab: Implsive: Intro: Parameters}), the steady state $(\overline{u}^0_+,\overline{w}^0_+)$ is stable for all $(A,T)$ pairs with $A>A_{\min}$, while similarly  $(\overline{u}^0_-,\overline{w}^0_-)$ is unstable.

We investigate the existence of spatial patterns by introducing spatially heterogeneous perturbations to the steady state $(\overline{u}^0,\overline{w}^0):=(\overline{u}^0_+,\overline{w}^0_+)$. The following propositions provide conditions for a steady state to be stable to such spatially heterogeneous perturbations and yield results on the effects of rainfall intermittency on the onset of spatial patterns.

\begin{prop}\label{prop: Impulsive: LinStab: stability to spatially heterogeneous perturbations general}
	Let $\tilde{f}$ be of the form $\tilde{f}(u,w) = u+\phi *\tilde{f}_1(u,w)$. A steady state  $(\overline{u}^0,\overline{w}^0)$ of the impulsive model \eqref{eq: Impulsive: intro: model general} is stable to spatially heterogeneous perturbations if $|\lambda(k)|<1$ for both eigenvalues $\lambda \in \C$ of
	\begin{align}
	\label{eq: Impulsive: LinStab: Jacobian spatial heterogeneous}
	J = \mattwo{\left(1+\widehat{\phi}(k)\tilde{\alpha} \right)e^{-BT}}{\widehat{\phi}(k)\tilde{\beta} e^{-\left(1-i\nu k+dk^2\right)T}}{ \tilde{\gamma} e^{-BT}}{\tilde{\delta} e^{-\left(1-i\nu k+dk^2\right)T}},
	\end{align}
	for $k>0$, where 
	\begin{align}\label{eq: Impulsive: LinStab: linearisation coefficients}
	\begin{split}
	\tilde{\alpha} &= \frac{\partial \tilde{f}_1}{\partial u}\left(\overline{u}^0e^{-BT},\overline{w}^0e^{-T}\right), \quad \tilde{\beta} = \frac{\partial \tilde{f}_1}{\partial w}\left(\overline{u}^0e^{-BT},\overline{w}^0e^{-T}\right), \\
	\tilde{\gamma} &= \frac{\partial \tilde{g}}{\partial u}\left(\overline{u}^0e^{-BT},\overline{w}^0e^{-T}\right), \quad \tilde{\delta} = \frac{\partial \tilde{g}}{\partial w}\left(\overline{u}^0e^{-BT},\overline{w}^0e^{-T}\right).
	\end{split}
	\end{align}
\end{prop}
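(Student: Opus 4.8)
The plan is to run a Floquet-type linear stability analysis of the time-periodic equilibrium, taking advantage of the fact that the interpulse PDEs \eqref{eq: Impulsive: intro: model general u in year}--\eqref{eq: Impulsive: intro: model general w in year} are linear, so that the only place a linearisation is genuinely needed is the pulse map \eqref{eq: Impulsive: intro: model general u between year}--\eqref{eq: Impulsive: intro: model general w between year}. First I would write $u_n(x,t)=\overline{u}(t)+\tilde{u}_n(x,t)$ and $w_n(x,t)=\overline{w}(t)+\tilde{w}_n(x,t)$, where $\overline{u}(t)=\overline{u}^0e^{-Bt}$ and $\overline{w}(t)=\overline{w}^0e^{-t}$ are the equilibrium densities throughout an interpulse period. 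Since \eqref{eq: Impulsive: intro: model general u in year}--\eqref{eq: Impulsive: intro: model general w in year} are linear, subtracting the base trajectory shows that $\tilde{u}_n$ and $\tilde{w}_n$ obey exactly the same equations; the periodic base state contributes nothing during the drought phase, and no approximation is incurred there.

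Next I would decompose the perturbations into spatial Fourier modes. Because the linearised interpulse equations have constant coefficients in $x$, distinct wavenumbers decouple and it suffices to follow a single mode $\tilde{u}_n(x,t)=\hat{u}_n(t)e^{ikx}$, $\tilde{w}_n(x,t)=\hat{w}_n(t)e^{ikx}$. Substitution turns the spatial operators into scalar multipliers, giving $\mathrm{d}\hat{u}_n/\mathrm{d}t=-B\hat{u}_n$ and $\mathrm{d}\hat{w}_n/\mathrm{d}t=-(1-i\nu k+dk^2)\hat{w}_n$, so that over $0<t<T$ the amplitudes evolve as $\hat{u}_n(T)=\hat{u}_n(0)e^{-BT}$ and $\hat{w}_n(T)=\hat{w}_n(0)e^{-(1-i\nu k+dk^2)T}$. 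These two exponential factors are precisely the multipliers appearing in the columns of $J$.

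I would then linearise the pulse map about the end-of-interpulse equilibrium $(\overline{u}^0e^{-BT},\overline{w}^0e^{-T})$, at which $\tilde{f}$ and $\tilde{g}$ are evaluated. Writing $\tilde{f}(u,w)=u+\phi*\tilde{f}_1(u,w)$ and using that $\phi$ is a probability density — so that its convolution with a constant is that constant — lets the steady-state relation $\overline{u}^0=\overline{u}^0e^{-BT}+\tilde{f}_1(\overline{u}^0e^{-BT},\overline{w}^0e^{-T})$ be subtracted cleanly, leaving $\tilde{u}_{n+1}(x,0)=\tilde{u}_n(x,T)+\phi*\left(\tilde{\alpha}\tilde{u}_n(x,T)+\tilde{\beta}\tilde{w}_n(x,T)\right)$ and $\tilde{w}_{n+1}(x,0)=\tilde{\gamma}\tilde{u}_n(x,T)+\tilde{\delta}\tilde{w}_n(x,T)$ with the coefficients of \eqref{eq: Impulsive: LinStab: linearisation coefficients}. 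Applying the Fourier mode replaces the convolution by multiplication with $\widehat{\phi}(k)$, since $\phi*e^{ikx}=\widehat{\phi}(k)e^{ikx}$. Composing this update with the interpulse multipliers from the previous step yields the one-period map $(\hat{u}_{n+1}(0),\hat{w}_{n+1}(0))^{\top}=J(\hat{u}_n(0),\hat{w}_n(0))^{\top}$, with $J$ exactly as in \eqref{eq: Impulsive: LinStab: Jacobian spatial heterogeneous}. The mode-$k$ amplitude is governed by iterating the constant matrix $J(k)$, hence decays to zero precisely when the spectral radius of $J(k)$ is below one; requiring $|\lambda(k)|<1$ for both eigenvalues and every $k>0$ (the case $k=0$ being the spatially homogeneous setting of Proposition \ref{prop: Impulsive: LinStab: stability spatially homog pert}) gives decay of every spatially heterogeneous mode, and hence the stated stability.

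I expect no deep obstacle here: the linearity of the interpulse PDEs removes what would otherwise be the hard step of linearising around a nonconstant orbit and computing genuine Floquet exponents, reducing everything to the two explicit scalar multipliers above. The only real care needed is bookkeeping around the time-periodic base state — ensuring that $\tilde{\alpha},\tilde{\beta},\tilde{\gamma},\tilde{\delta}$ are evaluated at the decayed values $(\overline{u}^0e^{-BT},\overline{w}^0e^{-T})$ reached at the end of the drought phase, and keeping the real multiplier $e^{-BT}$ and the complex multiplier $e^{-(1-i\nu k+dk^2)T}$ in their correct columns of $J$.
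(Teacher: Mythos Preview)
Your proposal is correct and follows essentially the same approach as the paper: linearise about the time-periodic equilibrium, pass to Fourier space so that the convolution becomes multiplication by $\widehat{\phi}(k)$ and the interpulse PDEs reduce to explicitly solvable ODEs, and then compose the resulting multipliers with the linearised pulse map to obtain the one-period matrix $J$. The only cosmetic difference is that the paper works with the full Fourier transform while you track a single mode $e^{ikx}$; the content and all the key bookkeeping (evaluating $\tilde{\alpha},\tilde{\beta},\tilde{\gamma},\tilde{\delta}$ at the decayed values, assigning $e^{-BT}$ and $e^{-(1-i\nu k+dk^2)T}$ to the correct columns) are identical.
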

The entries of the Jacobian \eqref{eq: Impulsive: LinStab: Jacobian spatial heterogeneous} are complex-valued. However, a significant simplification is achieved by considering the model on flat ground, i.e. the case of $\nu =0$, thus allowing an application of the Jury criterion {\color{changes}(see e.g. \cite{Murray1989})} to determine conditions such that $|\lambda(k)|<1$ for both eigenvalues of the Jacobian.

\begin{prop}\label{prop: Impuslive: LinStab: stability to spatially heterogeneous perturbations general}
	The steady state $(\overline{u}^0,\overline{w}^0)$ of the impulsive model \eqref{eq: Impulsive: intro: model general} on flat ground is stable to spatially heterogeneous perturbations if 
	\begin{align}\label{eq: Impulsive: LinStab: jury conditions spatial 2}
	1+\det(J(k))-\tr(J(k))>0,
	\end{align}
	for all $k>0$, where $J$ is the Jacobian defined in \eqref{eq: Impulsive: LinStab: Jacobian spatial heterogeneous} with $\nu=0$.
\end{prop}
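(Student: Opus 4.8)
The plan is to exploit the fact that on flat ground ($\nu=0$) the Jacobian $J(k)$ in \eqref{eq: Impulsive: LinStab: Jacobian spatial heterogeneous} has purely real entries, since the factor $e^{-(1-i\nu k+dk^2)T}$ collapses to the real number $e^{-(1+dk^2)T}$. By Proposition \ref{prop: Impulsive: LinStab: stability to spatially heterogeneous perturbations general}, stability to spatially heterogeneous perturbations amounts to $|\lambda(k)|<1$ for both eigenvalues at every $k>0$, and for a real $2\times2$ matrix this is decided by the Jury criterion applied to the characteristic polynomial $p(\lambda)=\lambda^2-\tr(J(k))\lambda+\det(J(k))$. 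The Jury criterion furnishes the three inequalities $p(1)=1+\det(J(k))-\tr(J(k))>0$, $p(-1)=1+\det(J(k))+\tr(J(k))>0$, and $\lvert\det(J(k))\rvert<1$. The stated hypothesis is exactly the first of these, so the task reduces to showing that the remaining two hold automatically for the vegetated equilibrium $(\overline{u}^0_+,\overline{w}^0_+)$, which throughout is taken to be stable to spatially homogeneous perturbations (Proposition \ref{prop: Impulsive: LinStab: stability spatially homog pert}).

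First I would record the signs of the linearisation coefficients. Writing $H(u)=u/(1+u)$ and evaluating at $(\overline{u}^0e^{-BT},\overline{w}^0e^{-T})$, the explicit forms $\tilde{f}_1(u,w)=H(u)^2(w+TA)$ and $\tilde{g}(u,w)=(w+TA)(1-H(u)^2)$ give $\tilde{\alpha}>0$, $\tilde{\beta}>0$, $\tilde{\gamma}<0$ and $\tilde{\delta}\in(0,1)$, together with the convenient identities $\tilde{\gamma}=-\tilde{\alpha}$ and $\tilde{\beta}+\tilde{\delta}=1$. Since $\widehat{\phi}(k)>0$ for the Laplace kernel, both diagonal entries of $J(k)$ are positive, so $\tr(J(k))>0$; and substituting the two identities collapses the determinant to
\begin{align*}
\det(J(k)) = e^{-(B+1+dk^2)T}\bigl(\tilde{\delta}+\widehat{\phi}(k)\tilde{\alpha}\bigr)>0.
\end{align*}
With $\tr(J(k))>0$ and $\det(J(k))>0$, the second Jury inequality $1+\det(J(k))+\tr(J(k))>0$ is immediate, disposing of one of the two remaining conditions.

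The main obstacle is the last Jury inequality $\det(J(k))<1$, which positivity alone does not deliver. Here I would use the closed form above: since $\widehat{\phi}(k)<\widehat{\phi}(0)=1$ and $e^{-dk^2T}<1$ for every $k>0$, each factor is strictly smaller than at $k=0$, whence $\det(J(k))<\det(J(0))=e^{-(B+1)T}(\tilde{\delta}+\tilde{\alpha})$. It then remains to identify $\det(J(0))$ with $J_1(B)+1$ from Proposition \ref{prop: Impulsive: LinStab: stability spatially homog pert}: at $k=0$ the convolution factor $\widehat{\phi}(0)=1$ reduces $J(0)$ to the homogeneous Jacobian with $\overline{\alpha}=1+\tilde{\alpha}$, $\overline{\beta}=\tilde{\beta}$, $\overline{\gamma}=\tilde{\gamma}$, $\overline{\delta}=\tilde{\delta}$, so that the assumed stability of $(\overline{u}^0_+,\overline{w}^0_+)$ to spatially homogeneous perturbations ($B<\overline{B_1}$, i.e. $J_1(B)<0$) is precisely $\det(J(0))<1$. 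Chaining the two estimates gives $\det(J(k))<1$ for all $k>0$, so all three Jury inequalities hold, $|\lambda(k)|<1$, and Proposition \ref{prop: Impulsive: LinStab: stability to spatially heterogeneous perturbations general} yields the claim. The only delicate points to write out carefully are the sign determination of the coefficients and the identification of $\det(J(0))$ with the homogeneous-stability quantity $J_1(B)$; a counterexample with $\tr(J),\det(J)>0$ but $\det(J)>1$ (yielding complex eigenvalues outside the unit circle) confirms that this third condition is genuinely needed and cannot be waived.
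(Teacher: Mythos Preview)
Your proof is correct and follows essentially the same route as the paper: apply the Jury criterion to the real Jacobian $J(k)$ at $\nu=0$, show that $\det(J(k))<1$ is inherited from stability to spatially homogeneous perturbations via the monotonicity $\det(J(k))<\det(J(0))$, and reduce the absolute value $|\tr(J(k))|$ to $\tr(J(k))$ using $\tilde{\alpha},\tilde{\delta}>0$. The only cosmetic difference is that you invoke the model-specific identities $\tilde{\gamma}=-\tilde{\alpha}$ and $\tilde{\beta}+\tilde{\delta}=1$ to collapse $\det(J(k))$ to $e^{-(B+1+dk^2)T}(\tilde{\delta}+\widehat{\phi}(k)\tilde{\alpha})$, whereas the paper bounds the general expression $e^{-(B+1+dk^2)T}\bigl(\tilde{\delta}+\widehat{\phi}(k)(\tilde{\alpha}\tilde{\delta}-\tilde{\beta}\tilde{\gamma})\bigr)$ directly using only the signs of the coefficients; both lead to the same inequality $\det(J(k))<e^{-(B+1)T}(\overline{\alpha}\,\overline{\delta}-\overline{\beta}\,\overline{\gamma})<1$.
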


This provides a sufficient condition for the occurrence of spatial patterns. Both the local and the nonlocal Klausmeier models undergo a diffusion-driven instability on flat ground for any level of rainfall, meaning that a sufficiently large ratio of water diffusion rate to plant diffusion rate yields a pattern-inducing instability (see Figure \ref{fig: difference: LinStab: nonlocal Klausmeier in Ad plane}). This is not the case for the impulsive model. For sufficiently high levels of rainfall, patterns cannot occur for any level of the diffusion coefficient $d$, the ratio of the water diffusion rate to the plant dispersal kernel width. It is indeed the time $T$ between rainfall pulses that determines for which levels of precipitation patterns can form. Only for smaller values of $A$ an increase of diffusion through the critical value $d_c(A)$ causes an instability and thus the onset of patterns. Reverting back to parameters in dimensional form, this also shows that for sufficiently low precipitation levels, wider plant dispersal kernels inhibit the onset of patterns, which is in agreement with results from the nonlocal Klausmeier model \eqref{eq: Intro Klausmeier nonlocal} \cite{Eigentler2018nonlocalKlausmeier}. Similar to the Klausmeier models, diffusion levels close to $d=0$ do not yield an instability for any rainfall parameters and there is a direct transition from the stable plant steady state to the desert steady {\color{changes}state} as $A$ is decreased through the lower bound $A_{\min}$. This is a conclusion of a numerical investigation (Figure \ref{fig: Impulsive: LinStab: jury2 in Ad parameter plane flat ground}) of the stability condition \eqref{eq: Impulsive: LinStab: jury conditions spatial 2} using the Laplace kernel \eqref{eq:difference equations: simulations: Laplacian} in the $A$-$d$ parameter plane. This analysis was performed for various different choices of the parameters $B$ and $T$ without showing any qualitative differences.

\begin{figure}
	\centering
	\subfloat[Impulsive model\label{fig: Impulsive: LinStab: jury2 in Ad parameter plane flat ground B045 T1}]{\includegraphics[width=0.48\textwidth]{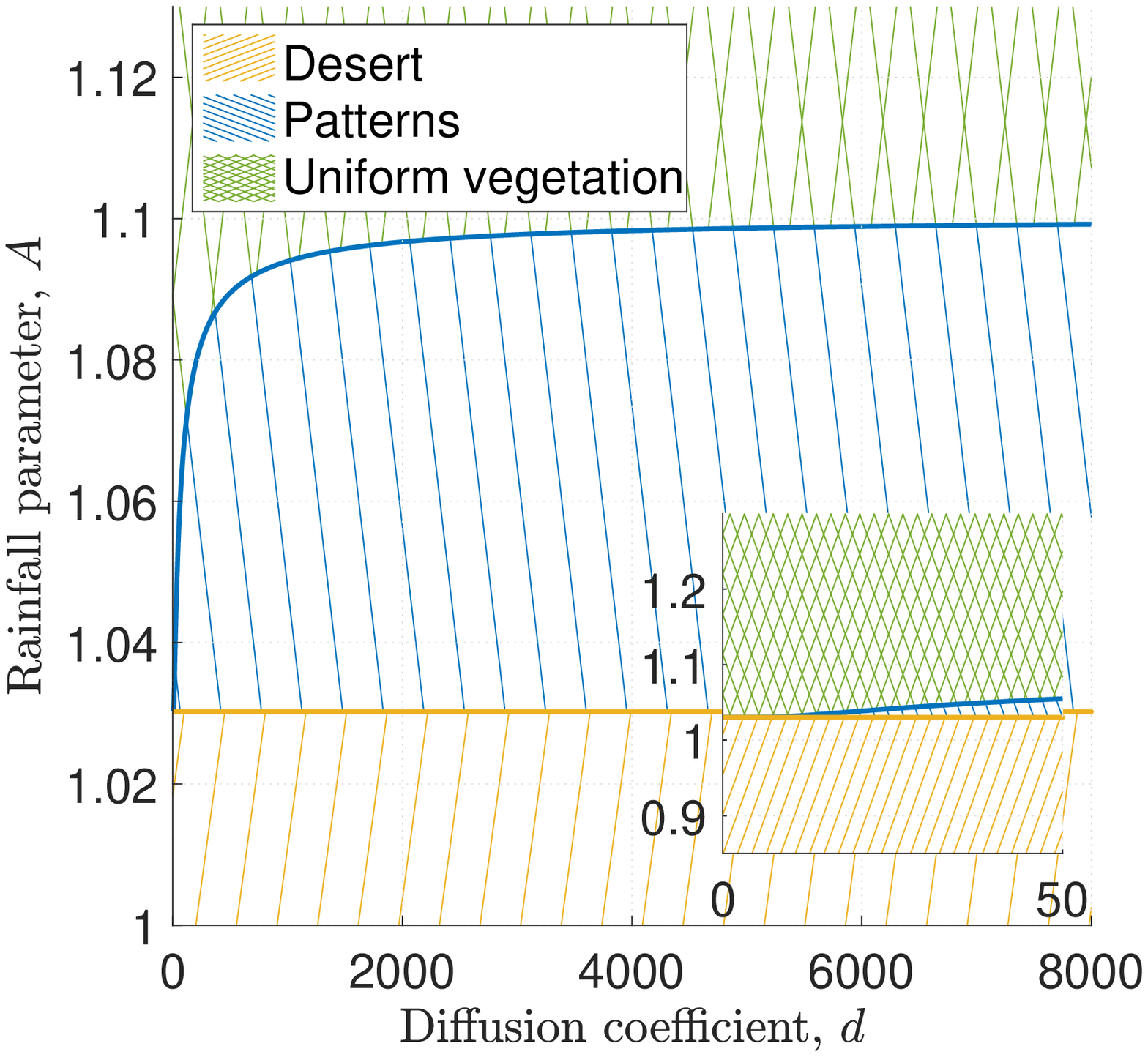}}
	\subfloat[Nonlocal Klausmeier model \label{fig: difference: LinStab: nonlocal Klausmeier in Ad plane}]{\includegraphics[width=0.48\textwidth]{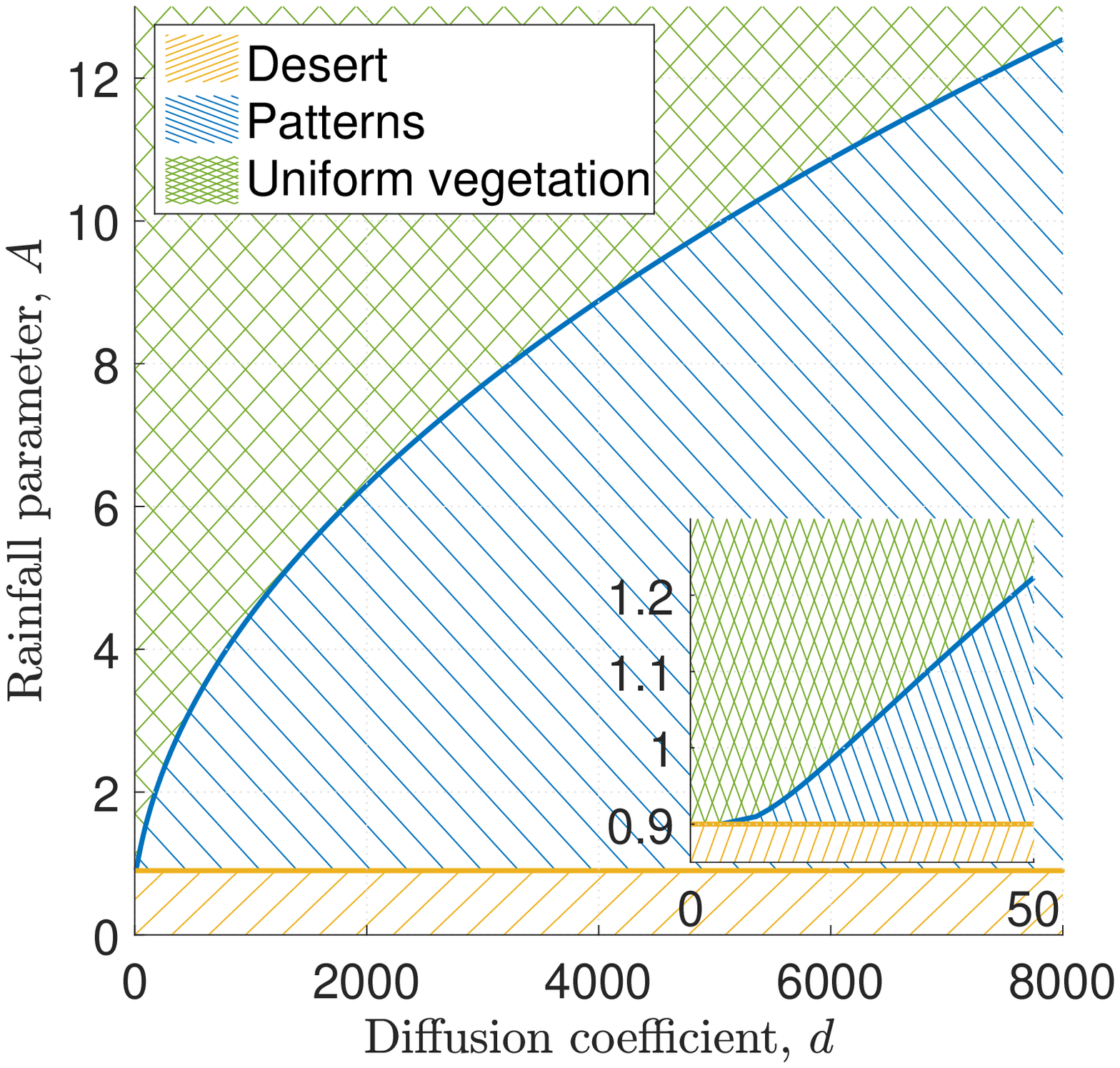}}
	\caption{The stability criterion \eqref{eq: Impulsive: LinStab: jury conditions spatial 2} in the $A$-$d$ parameter plane. Part (a) visualises where the second Jury condition \eqref{eq: Impulsive: LinStab: jury conditions spatial 2} changes sign and thus yields an instability, which causes the onset of spatial patterns (blue line). Given some value of $d$, the value of $A$ at which a transition between positivity and negativity of the condition occurs, is determined up to an interval of length $10^{-10}$. The level of $d$ is increased in variable increments. For $d$ close to 0, the increment is chosen to be $\Delta d = 0.1$, which then increases up to $\Delta d = 100$ as the value of $d$ increases. For any given $(A,d)$ pair, \eqref{eq: Impulsive: LinStab: jury conditions spatial 2} is evaluated for $k>0$ at increments of $\Delta k = 0.01$ until a value of $k_c$ is found for which the Jury condition is either negative or positive and increasing. In the former case, the $(A,d)$ pair supports the onset of spatial patterns, in the latter the interval $[k_c-\Delta k,k_c]$ is investigated further with smaller increments in $k$. If still no $k$ is found for which the Jury condition is negative it is assumed that \eqref{eq: Impulsive: LinStab: jury conditions spatial 2}  is not satisfied. The other parameter values used in this analysis were $T=0.5$ and $B=0.45$, with $\phi$ being the Laplacian kernel \eqref{eq:difference equations: simulations: Laplacian}. {\color{changes}This yields $A_{\min} = 1.03$, and $A_{\max} = 1.099$.} A comparison to the nonlocal Klausmeier model, which undergoes a diffusion-driven instability, is shown in (b). {\color{changes}Here $A_{\min} = 2B = 0.9$.} The insets show the behaviour close to $d=0$.}\label{fig: Impulsive: LinStab: jury2 in Ad parameter plane flat ground}
\end{figure}

The evaluation of \eqref{eq: Impulsive: LinStab: jury conditions spatial 2} in the $A$-$d$ parameter plane suggests a closer investigation of the stability condition \eqref{eq: Impulsive: LinStab: jury conditions spatial 2} for $d\rightarrow \infty$ (Proposition \ref{prop: Impulsive: LinStab: Amax in d infinity limit}) and $A=A_{\min}$ (Proposition \ref{prop: Impulsive: LinStab: dc in A=Amin case flat ground}). The former provides information on the level of rainfall $A_{\max}$ above which no instability can occur, while the latter yields the locus of $d_{A_{\min}}$, the minimum value of diffusion required for an instability to occur.

\begin{prop}\label{prop: Impulsive: LinStab: Amax in d infinity limit}
	If $d\rightarrow \infty$ in the impulsive model \eqref{eq: Impulsive: intro: model general} on flat ground with the Laplace kernel \eqref{eq:difference equations: simulations: Laplacian}, then $(\overline{u}^0,\overline{w}^0)$ is unstable to spatially heterogeneous perturbations if $A<A_{\max}$, where $A_{\max}$ satisfies
	\begin{align}\label{eq: Impulsive: LinStab: Jury2 condition flat ground d limit for all k}
	\left(1+\tilde{\alpha}\left(A_{\max}\right)\right)e^{-BT}-1 {\color{changes}= } 0.
	\end{align}
\end{prop}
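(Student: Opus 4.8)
The plan is to start from the flat-ground stability criterion \eqref{eq: Impulsive: LinStab: jury conditions spatial 2}, insert the Fourier transform of the Laplace kernel \eqref{eq:difference equations: simulations: Laplacian}, and identify what survives as $d\to\infty$. For the Laplace kernel with $a=1$ one has $\widehat{\phi}(k)=1/(1+k^2)$, which is positive, equals $1$ at $k=0$, and is strictly decreasing to $0$ on $(0,\infty)$. Writing out the entries of the Jacobian \eqref{eq: Impulsive: LinStab: Jacobian spatial heterogeneous} with $\nu=0$, I would first compute
\begin{align*}
1+\det(J(k))-\tr(J(k)) &= 1-\left(1+\widehat{\phi}(k)\tilde{\alpha}\right)e^{-BT}-\tilde{\delta}\,e^{-(1+dk^2)T} \\
&\quad +\left[\left(1+\widehat{\phi}(k)\tilde{\alpha}\right)\tilde{\delta}-\widehat{\phi}(k)\tilde{\beta}\tilde{\gamma}\right]e^{-BT}e^{-(1+dk^2)T}.
\end{align*}

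Next I would observe that for every fixed $k>0$ the factor $e^{-(1+dk^2)T}\to 0$ as $d\to\infty$, so the last two terms drop out and the criterion reduces to its limiting form $1-(1+\widehat{\phi}(k)\tilde{\alpha})e^{-BT}$. Since $\tilde{\alpha}=\partial\tilde{f}_1/\partial u$ evaluated at the vegetated equilibrium is strictly positive (the expression $\tilde{f}_1(u,w)=(u/(1+u))^2(w+TA)$ is strictly increasing in $u$ for $u>0$), and $\widehat{\phi}$ is strictly decreasing, the limiting expression is strictly increasing in $k$. Its infimum over $k>0$ is therefore approached as $k\to 0^+$ and equals $1-(1+\tilde{\alpha})e^{-BT}$. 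From this the instability criterion follows: the steady state violates \eqref{eq: Impulsive: LinStab: jury conditions spatial 2} at some wavenumber precisely when $(1+\tilde{\alpha})e^{-BT}>1$, i.e.\ when the left-hand side of \eqref{eq: Impulsive: LinStab: Jury2 condition flat ground d limit for all k} is positive. Defining $A_{\max}$ by equality in \eqref{eq: Impulsive: LinStab: Jury2 condition flat ground d limit for all k} and using that $\tilde{\alpha}=\tilde{\alpha}(A)$ is monotone in the rainfall parameter through the vegetated equilibrium, this positivity is equivalent to $A<A_{\max}$, which is the assertion.

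The main obstacle is that the statement involves a double limit which is not uniform near $k=0$: at $k=0$ the diffusive damping $e^{-(1+dk^2)T}=e^{-T}$ never disappears, so one cannot simply evaluate the reduced criterion at $k=0$. I would therefore argue for instability constructively rather than by a naive interchange of limits. Assuming $(1+\tilde{\alpha})e^{-BT}>1$, continuity of $\widehat{\phi}$ together with $\widehat{\phi}(k)\to 1$ as $k\to 0^+$ lets me fix a small $k_0>0$ with $(1+\widehat{\phi}(k_0)\tilde{\alpha})e^{-BT}>1$; then choosing $d$ large enough that the two $e^{-(1+dk_0^2)T}$ terms are negligible gives $1+\det(J(k_0))-\tr(J(k_0))<0$. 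This exhibits a destabilising wavenumber and establishes the instability in the $d\to\infty$ limit.

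The one remaining point to pin down carefully is the monotonicity of $\tilde{\alpha}(A)$, which is what converts the sharp criterion $(1+\tilde{\alpha})e^{-BT}>1$ into the stated inequality $A<A_{\max}$. Because $\tilde{\alpha}$ depends on $A$ both explicitly through the $w+TA$ factor and implicitly through $\overline{u}^0_+(A)$ and $\overline{w}^0_+(A)$, I expect verifying this sign to require substituting the closed-form equilibrium expressions and checking the derivative, and this is the step I would treat as the principal technical effort; the limit analysis itself is then routine.
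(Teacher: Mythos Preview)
Your argument is correct and follows essentially the same route as the paper: take $d\to\infty$ so that the $e^{-(1+dk^2)T}$ terms vanish, reduce \eqref{eq: Impulsive: LinStab: jury conditions spatial 2} to $1-(1+\widehat{\phi}(k)\tilde{\alpha})e^{-BT}$, and use the behaviour of $\widehat{\phi}$ near $k=0$ together with the monotonicity of $\tilde{\alpha}$ in $A$ to obtain the threshold \eqref{eq: Impulsive: LinStab: Jury2 condition flat ground d limit for all k}. The paper substitutes $\widehat{\phi}(k)=1/(1+k^2)$ explicitly and rearranges to a polynomial inequality in $k^2$, whereas you argue via monotonicity of $\widehat{\phi}$ and add a careful treatment of the non-uniformity of the limit near $k=0$ that the paper omits; both are minor executional differences on the same idea.
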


\begin{cor}\label{cor: Impulsive: LinStab: decrease of pattern supporting interval d infinity limit}
	The relative size $(A_{\max}-A_{\min})/A_{\min}$ of the interval $[A_{\min},A_{\max}]$ is proportional to $e^{-2T}$ as $T\rightarrow \infty$.	
\end{cor}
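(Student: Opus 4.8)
The plan is to first replace the implicit characterisation of $A_{\max}$ in Proposition~\ref{prop: Impulsive: LinStab: Amax in d infinity limit} by an explicit formula, and then to compare it with the explicit expression \eqref{eq: Impulsive: LinStab: lower bound on A for ss to exist} for $A_{\min}$ through a Taylor expansion in $e^{-T}$.

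First I would unpack the coefficient $\tilde\alpha$. Since $\tilde{f}(u,w)=u+\phi*\tilde{f}_1(u,w)$ with $\tilde{f}_1(u,w)=\left(u/(1+u)\right)^2(w+TA)$, one has $\partial_u\tilde{f}_1=2u(w+TA)/(1+u)^3$, so that, writing $U:=\overline{u}^0e^{-BT}$ for the plant density entering the pulse and $R:=\overline{w}^0e^{-T}+TA$ for the total resource, $\tilde\alpha=2UR/(1+U)^3$. The crucial simplification comes from the steady-state relations: using $\widehat{\phi}(0)=1$ in $\overline{u}^0=\tilde{f}(U,\overline{w}^0e^{-T})$ gives $\overline{u}^0(1-e^{-BT})=\left(U/(1+U)\right)^2R$, i.e.\ $R=(e^{BT}-1)(1+U)^2/U$. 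Substituting this into $\tilde\alpha$ collapses the expression to $\tilde\alpha=2(e^{BT}-1)/(1+U)$. The defining equation \eqref{eq: Impulsive: LinStab: Jury2 condition flat ground d limit for all k}, namely $(1+\tilde\alpha)e^{-BT}=1$, is equivalent to $\tilde\alpha=e^{BT}-1$, and hence forces $1+U=2$, that is $U=1$, at $A=A_{\max}$. Back-substituting $U=1$ (so that the uptake fraction equals $1/4$ and $R=4(e^{BT}-1)$) into the second equilibrium equation, which yields $R=TA/\left(1-\tfrac34 e^{-T}\right)$, gives the closed form
\begin{equation*}
A_{\max}=\frac{4(e^{BT}-1)}{T}\left(1-\tfrac34 e^{-T}\right).
\end{equation*}

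Next I would rewrite $A_{\min}$ in a directly comparable shape. Using $(1-e^{-BT})/e^{-BT}=e^{BT}-1$, expression \eqref{eq: Impulsive: LinStab: lower bound on A for ss to exist} becomes
\begin{equation*}
A_{\min}=\frac{2(e^{BT}-1)}{T}\left(1-e^{-T}+\sqrt{1-e^{-T}}\right),
\end{equation*}
so that both thresholds carry the common prefactor $(e^{BT}-1)/T$, which will cancel in the ratio of interest. The difference is
\begin{equation*}
A_{\max}-A_{\min}=\frac{2(e^{BT}-1)}{T}\left(1-\tfrac12 e^{-T}-\sqrt{1-e^{-T}}\right).
\end{equation*}
The expansion $\sqrt{1-e^{-T}}=1-\tfrac12 e^{-T}-\tfrac18 e^{-2T}+O(e^{-3T})$ shows the bracket equals $\tfrac18 e^{-2T}+O(e^{-3T})$, while in $A_{\min}$ one has $1-e^{-T}+\sqrt{1-e^{-T}}\to 2$. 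Dividing, the prefactors cancel and
\begin{equation*}
\frac{A_{\max}-A_{\min}}{A_{\min}}=\frac{1-\tfrac12 e^{-T}-\sqrt{1-e^{-T}}}{1-e^{-T}+\sqrt{1-e^{-T}}}\sim\frac{1}{16}\,e^{-2T}\qquad\text{as }T\to\infty,
\end{equation*}
which is the claimed proportionality (with constant $1/16$).

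The only genuinely delicate step is the algebraic reduction producing $U=1$: I would expect the main obstacle to be organising the steady-state equations so that the explicit but unwieldy formulas for $(\overline{u}^0_+,\overline{w}^0_+)$ and $\eta$ are never needed, and $A_{\max}$ is instead obtained directly from the relation $R=(e^{BT}-1)(1+U)^2/U$. Once $U=1$ is in hand the remainder is a routine expansion, and a consistency check against the figure values ($T=0.5$, $B=0.45$, giving $A_{\min}\approx 1.03$ and $A_{\max}\approx 1.099$) confirms the two closed forms.
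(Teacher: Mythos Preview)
Your argument is correct and in fact cleaner than the paper's. The paper does not derive a closed form for $A_{\max}$; instead it substitutes $A=A_{\min}(1+\varepsilon)$ into the instability criterion \eqref{eq: Impulsive: LinStab: Stab criterion}, linearises in $\varepsilon$, obtains an approximate $\varepsilon_{\max}=\varepsilon_{\max}(T)$, and then computes the logarithmic derivative $(\ln\varepsilon_{\max})'\to -2$ as $T\to\infty$. Your route is genuinely different: by using the steady-state relation $R=(e^{BT}-1)(1+U)^2/U$ to collapse $\tilde\alpha$ to $2(e^{BT}-1)/(1+U)$, you reduce the threshold condition to the single equation $U=1$, and from there extract an \emph{exact} expression $A_{\max}=4(e^{BT}-1)(1-\tfrac34 e^{-T})/T$. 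This makes the ratio $(A_{\max}-A_{\min})/A_{\min}$ an explicit function of $e^{-T}$ alone, so the expansion is a one-line Taylor computation and even yields the leading constant $1/16$. What your approach buys is an exact formula and the avoidance of any linearisation step; what the paper's approach buys is that it bypasses the need to identify $U=1$ and works directly with the defining inequality, though at the cost of an approximation. The two answers agree asymptotically, and your numerical cross-check against the figure caption values confirms the closed forms.
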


Given a set of parameters $(B,T)$, \eqref{eq: Impulsive: LinStab: Jury2 condition flat ground d limit for all k} can be solved numerically to provide the level of rainfall $A$ at which a transition between uniform and patterned vegetation occurs in the limit $d\rightarrow \infty$. In combination with the preceding results, this is the threshold $A_{\max}$ beyond which no pattern onset can occur. This is in stark contrast {\color{changes}to} the classical case of a diffusion-driven instability which occurs in the Klausmeier models \eqref{eq: Intro Klausmeier local} and \eqref{eq: Intro Klausmeier nonlocal} for which $A_{\max}\rightarrow \infty$ as $d\rightarrow\infty$. Together with the lower bound on the rainfall parameter $A_{\min}$ this allows a classification of the $T$-$A$ parameter plane into three regions (Figure \ref{fig: Impuslive: LinStab: dlimit AT plane full}); one in which the desert steady state is the only spatially uniform equilibrium to exist, one in which instability of the uniform plant steady state to spatially heterogeneous perturbations causes the onset of spatial patterns, and one in which any perturbations of the equilibrium $(\overline{u}^0,\overline{w}^0)$ decay and no pattern onset occurs. This classification of the $T$-$A$ parameter plane is based on the preceding linear stability analysis and the perturbation of the spatially uniform equilibrium $(\overline{u}^0,\overline{w}^0)$. This results in a classification that provides information regarding the onset of patterns but does not yield any knowledge of the existence of patterns away from their onset. While no systematic study of the whole parameter space using numerical continuation was performed, patterns for parameters outside the interval given by the linear stability analysis can be observed by slowly increasing/decreasing the rainfall parameter $A$ beyond/below the pattern onset-supporting interval when the system is already in a patterned state.

Proposition \ref{prop: Impulsive: LinStab: Amax in d infinity limit} indicates that a decrease in the frequency of precipitation events requires a higher amount of rainfall to avoid an instability. This does not mean that pattern onset occurs for a larger parameter range as periods of droughts become longer, as an increase in $T$ also increases the lower bound on the rainfall for the vegetation steady state to exist. Indeed, Corollary \ref{cor: Impulsive: LinStab: decrease of pattern supporting interval d infinity limit} provides information on the size of the interval for the rainfall parameter $A$ that supports the onset of patterns relative to the lower bound \eqref{eq: Impulsive: LinStab: lower bound on A for ss to exist} on the rainfall (Figure \ref{fig: Impuslive: LinStab: dlimit A interval size}). For small values of $T$ the size of this interval is larger than the lower bound on the rainfall, for larger $T$ the size of the pattern onset-supporting interval of rainfall levels decreases at a rate proportional to $e^{-2T}$.

\begin{figure}
	\centering
	\subfloat[\label{fig: Impuslive: LinStab: dlimit AT plane full}]{\includegraphics[width=0.48\textwidth]{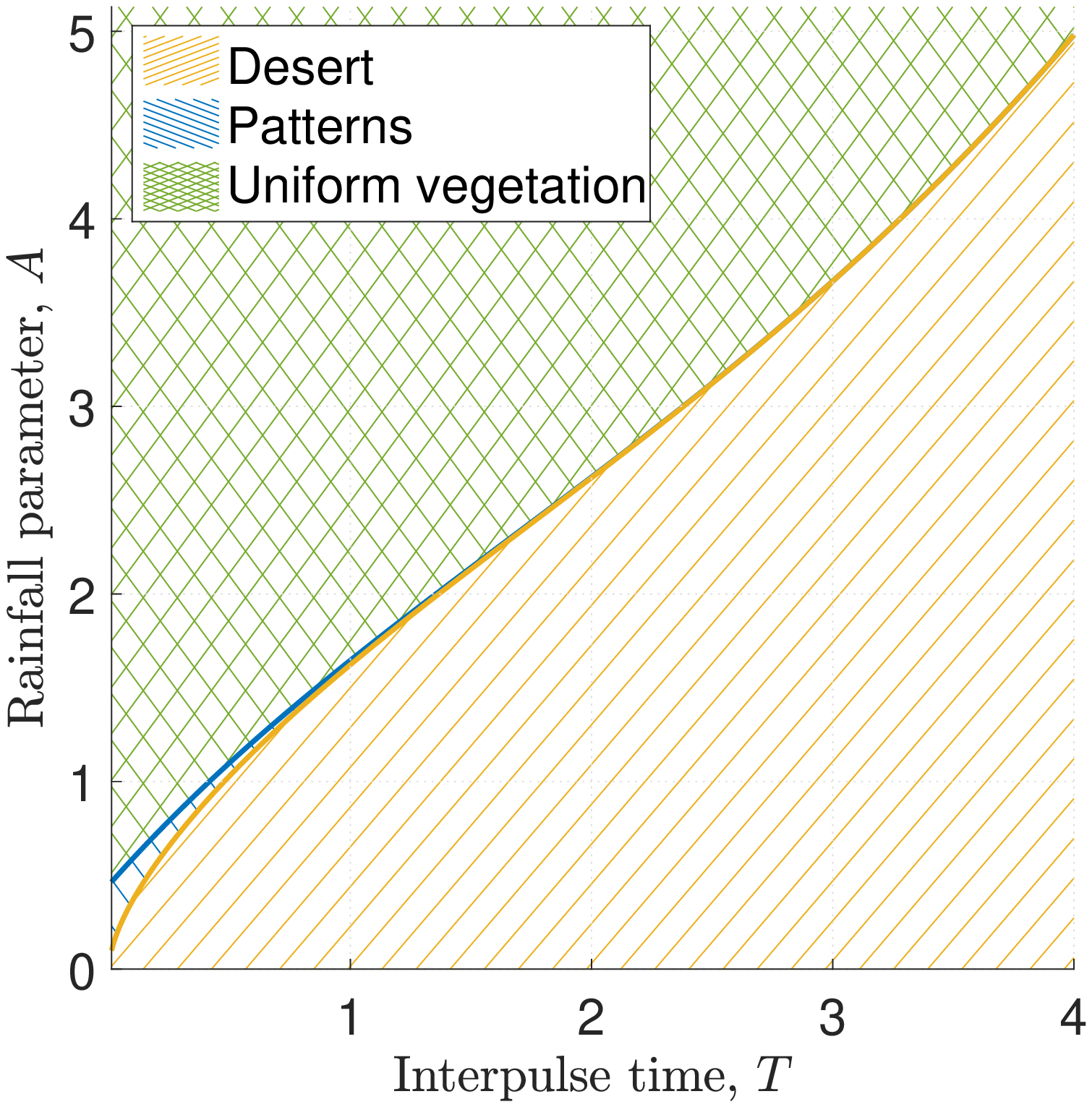}}
	\subfloat[\label{fig: Impuslive: LinStab: dlimit A interval size}]{\includegraphics[width=0.48\textwidth]{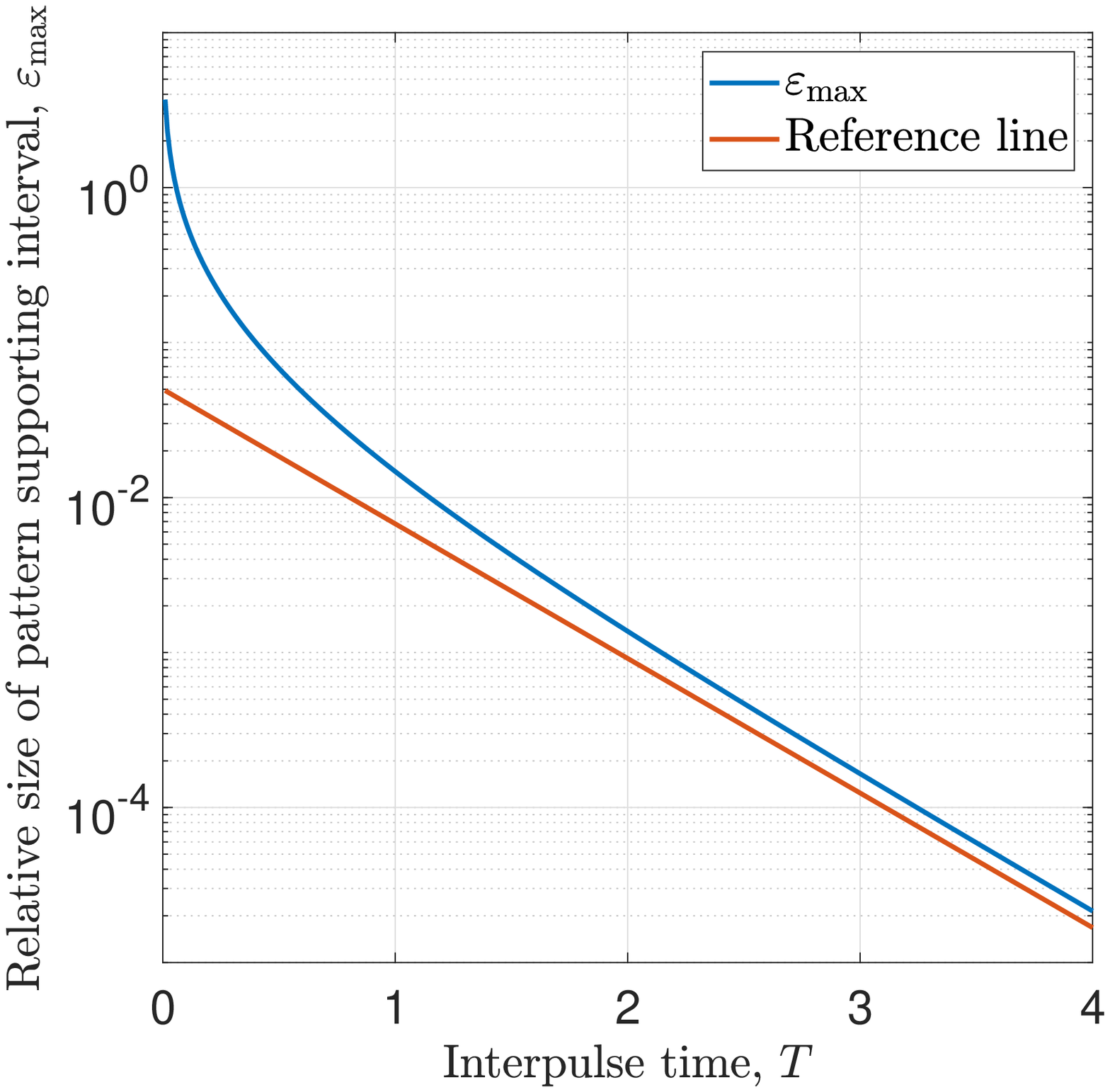}}
	\caption{Classification of the $A$-$T$ parameter plane and the relative size of the rainfall interval supporting pattern onset as $d\rightarrow \infty$. Part (a) shows a classification of the $A$-$T$ parameter plane in the limit $d \rightarrow \infty$ into regions in which uniform vegetation is stable, in which pattern onset occurs, and in which the desert state is the only spatially uniform equilibrium of the system. The transition $A=A_{\max}$ between uniform and patterned vegetation (blue line) is obtained by numerically solving \eqref{eq: Impulsive: LinStab: Jury2 condition flat ground d limit for all k}, while the lower bound $A=A_{\min}$ on the parameter region supporting pattern onset (yellow line) is obtained from the analytic condition \eqref{eq: Impulsive: LinStab: lower bound on A for ss to exist}. {\color{changes}The relative size $ \varepsilon_{\max}:= A_{\max}-A_{\min}$ of the parameter region supporting pattern onset is visualised in (b)} and is compared to a reference line of slope $\exp(-2T)$. The plant loss parameter is $B=0.45$. Note the logarithmic scale in (b).}\label{fig: Impuslive: LinStab: dlimit AT plane}
\end{figure}

For $A=A_{\min}$, the previous analysis (Figure \ref{fig: Impulsive: LinStab: jury2 in Ad parameter plane flat ground B045 T1}) suggests the existence of a threshold $d_{A_{\min}}$ on the diffusion coefficient $d$ below which no instability occurs. Similar to the Klausmeier models \eqref{eq: Intro Klausmeier local} and \eqref{eq: Intro Klausmeier nonlocal} this corresponds to a direct transition between the spatially uniform vegetation state and the desert state as the rainfall parameter $A$ decreases through $A_{\min}$.

\begin{prop}\label{prop: Impulsive: LinStab: dc in A=Amin case flat ground}
	If $A=A_{\min}$, there exists a threshold $ d_{A_{\min}}>0$ such that $(\overline{u}^0,\overline{w}^0)$ is unstable to spatially heterogeneous perturbations for $d>d_{A_{\min}}${\color{changes}.}
\end{prop}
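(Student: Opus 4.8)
The plan is to work directly with the second Jury condition of Proposition~\ref{prop: Impuslive: LinStab: stability to spatially heterogeneous perturbations general}, abbreviating $F(k):=1+\det(J(k))-\tr(J(k))$ for the Jacobian \eqref{eq: Impulsive: LinStab: Jacobian spatial heterogeneous} on flat ground. Since $J(k)$ is real when $\nu=0$ and $F(k)=\det(I-J(k))=(1-\lambda_1(k))(1-\lambda_2(k))$, while this product is non-negative for a complex-conjugate pair of eigenvalues, a single sign change $F(k_c)<0$ forces both eigenvalues to be real with one exceeding $1$, hence a genuine instability. It therefore suffices to exhibit one $k>0$ with $F(k)<0$ once $d$ is large. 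Using the Laplace kernel, for which $\widehat{\phi}(k)=(1+k^2)^{-1}$, I would first record the factorisation
\begin{align*}
F(k)=\left(1-P(k)\right)\left(1-\tilde{\delta}\,e^{-(1+dk^2)T}\right)-\widehat{\phi}(k)\,\tilde{\beta}\tilde{\gamma}\,e^{-BT}e^{-(1+dk^2)T},
\end{align*}
where $P(k):=\left(1+\widehat{\phi}(k)\tilde{\alpha}\right)e^{-BT}$ and $\tilde{\alpha},\tilde{\beta},\tilde{\gamma},\tilde{\delta}$ are the coefficients in \eqref{eq: Impulsive: LinStab: linearisation coefficients}.

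The key structural input is that $A=A_{\min}$ is precisely the fold at which $(\overline{u}^0_+,\overline{w}^0_+)$ and $(\overline{u}^0_-,\overline{w}^0_-)$ coalesce ($\eta=0$). The Jacobian of the spatially homogeneous difference map \eqref{eq: Impulsive: model: difference model to find equilibria} coincides with $J(0)$, and at a saddle-node this map has a multiplier equal to $1$, so that $F(0)=\det(I-J(0))=0$. Note that $F(0)$ is independent of $d$, since $d$ enters only through $dk^2$. Writing $P_0:=P(0)=(1+\tilde{\alpha})e^{-BT}$, the identity $F(0)=0$ is the fold relation
\begin{align*}
\left(1-P_0\right)\left(1-\tilde{\delta} e^{-T}\right)=\tilde{\beta}\tilde{\gamma}\,e^{-BT}e^{-T},
\end{align*}
which I would use repeatedly below.

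Because $F(0)=0$, the sign of $F$ for small wavenumbers is governed by the coefficient of $s:=k^2$ in its expansion. Substituting $\widehat{\phi}=1-s+O(s^2)$ and $e^{-(1+dk^2)T}=e^{-T}\left(1-dTs+O(s^2)\right)$, differentiating, and twice invoking the fold relation to collapse both the $d$-linear part and the constant part, I expect to obtain the affine expression
\begin{align*}
C_1(d):=\left.\frac{\partial F}{\partial s}\right|_{s=0}=T\left(1-P_0\right)d+\left(1-\tilde{\delta} e^{-T}\right)\left(1-e^{-BT}\right).
\end{align*}
The slope is negative: since $A_{\min}<A_{\max}$ (Corollary~\ref{cor: Impulsive: LinStab: decrease of pattern supporting interval d infinity limit}), Proposition~\ref{prop: Impulsive: LinStab: Amax in d infinity limit}, which characterises $A_{\max}$ by $P_0=1$ and gives instability as $d\to\infty$ precisely for $A<A_{\max}$, forces $P_0>1$ at $A=A_{\min}$. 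The constant term is positive, because $1-e^{-BT}>0$ for $B,T>0$ and, from $\tilde{\delta}=1-\left(\overline{u}^0e^{-BT}/(1+\overline{u}^0e^{-BT})\right)^2\in(0,1)$ on the vegetated state, $1-\tilde{\delta}e^{-T}>0$. Hence $C_1(d)<0$ exactly for $d>d_{A_{\min}}$, where
\begin{align*}
d_{A_{\min}}:=\frac{\left(1-\tilde{\delta} e^{-T}\right)\left(1-e^{-BT}\right)}{T\left(P_0-1\right)}>0.
\end{align*}
For such $d$, combining $F(0)=0$ with $\partial_s F|_{s=0}=C_1(d)<0$ yields $F(k)<0$ for all sufficiently small $k>0$, producing the required unstable mode.

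I expect the main obstacle to be the algebraic reduction of $C_1(d)$: the raw derivative mixes all four linearisation coefficients with the $d$-dependence, and the clean forms $T(1-P_0)$ and $(1-\tilde{\delta}e^{-T})(1-e^{-BT})$ emerge only after the fold relation $F(0)=0$ is substituted, so correctly identifying that relation is essential. A secondary point to verify is the sign $P_0>1$, which I would anchor in Proposition~\ref{prop: Impulsive: LinStab: Amax in d infinity limit} rather than attempt from the closed-form steady state. Finally, I would note that this argument only locates unstable modes in a (possibly $d$-dependent) neighbourhood of $k=0$, since the quadratic remainder in $s$ carries $d$-dependent coefficients; this is harmless, as instability requires only one wavenumber with $F(k)<0$, which the negative linear coefficient already guarantees.
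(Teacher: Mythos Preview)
Your argument is correct and takes a genuinely different, more conceptual route than the paper. The paper proceeds by brute force: it substitutes $A=A_{\min}$ to obtain the closed-form equilibrium, evaluates the linearisation coefficients $\tilde{\alpha}_{A_{\min}},\tilde{\beta}_{A_{\min}},\tilde{\gamma}_{A_{\min}},\tilde{\delta}_{A_{\min}}$ explicitly, factors $1+\det(J_{A_{\min}})-\tr(J_{A_{\min}})=\zeta\,(1-e^{-BT})$, and then asserts that the minimum over $k$ of the resulting expression $\zeta$ is decreasing in $d$, without supplying the details of that monotonicity. By contrast, you exploit the structural fact that $A=A_{\min}$ is the fold point, so that $F(0)=\det(I-J(0))=0$ automatically, and then expand $F$ to first order in $s=k^2$. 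Using the fold relation twice to collapse the cross-terms is exactly what makes the derivative reduce to the clean affine form $C_1(d)=T(1-P_0)d+(1-\tilde{\delta}e^{-T})(1-e^{-BT})$; this is the crux of your argument and it goes through. Your sign analysis is also sound: $P_0>1$ at $A_{\min}$ follows from Proposition~\ref{prop: Impulsive: LinStab: Amax in d infinity limit} (and could alternatively be checked directly from the explicit $\tilde{\alpha}_{A_{\min}}$, which gives $P_0-1=(1-e^{-BT})(1-r)/(1+r)$ with $r=\sqrt{1-e^{-T}}\in(0,1)$).

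What each approach buys: your expansion yields an \emph{explicit} threshold $d_{A_{\min}}=(1-\tilde{\delta}e^{-T})(1-e^{-BT})/\bigl(T(P_0-1)\bigr)$ and, since both numerator and denominator carry the factor $(1-e^{-BT})$, immediately confirms the paper's subsequent remark that $d_{A_{\min}}$ is independent of $B$. The paper's approach, in principle, aims at the sharp threshold via global monotonicity in $d$, whereas your argument only shows instability for $d$ above your value (the unstable modes you exhibit live near $k=0$); but this is all the proposition claims, so nothing is lost.
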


The threshold given by Proposition \ref{prop: Impulsive: LinStab: dc in A=Amin case flat ground} is independent of the plant loss parameter $B$. Plant mortality does, however, affect $A_{\min}$, the level of rainfall required for the plant steady state to exist. The simplification provided by setting $A=A_{\min}$ is not sufficient to determine the threshold $d_{A_{\min}}$ on the diffusion coefficient explicitly, but similar to the analysis in the $d\rightarrow \infty$ case, it can be determined numerically for a given set of parameters. The results of this show that an increase in the time interpulse $T$ causes an increase in the threshold $d_{A_{\min}}$ on the diffusion coefficient, i.e. a higher ratio of water diffusion to plant dispersal kernel width is required to cause an instability leading to the onset of patterns.


\begin{proof}[Proof of Proposition \ref{prop: Impulsive: LinStab: stability spatially homog pert}]
	Linear stability analysis of a steady state $(\overline{u}^0,\overline{w}^0)$ of the impulsive model \eqref{eq: Impulsive: intro: model general} in a spatially uniform setting is equivalent to linear stability analysis of the difference system {\color{changes} \eqref{eq: Impulsive: model: difference model to find equilibria} with $\tilde{f}(u,w) = u+(u/(1+u))^2(w+TA)$ and $\tilde{g}(u,w) = (w+TA)(1-(u/(1+u))^2)$. Linearisation about the steady state and introduction} of a perturbation proportional to $\lambda^n$ yields that the growth factor $\lambda \in \C$ is an eigenvalue of the Jacobian
	\begin{align*}
	J(\overline{u}^0,\overline{w}^0) = \mattwo{e^{-BT}  \overline{\alpha}}{e^{-T}\overline{\beta}}{e^{-BT} \overline{\gamma}}{e^{-T} \overline{\delta}}.
	\end{align*}
	The Jury conditions then yield stability of a steady state $(\overline{u}^0,\overline{w}^0)$ if 
	\begin{subequations}
		\label{eq: Impulsive: LinStab: Jury conditions}
		\begin{align}
		e^{-T(B+1)} \left(\overline{\alpha} \overline{\delta} - \overline{\gamma} \overline{\beta} \right) &<1, \label{eq: Impulsive: LinStab: Jury conditions 1}\\
		1+e^{-T(B+1)}  \left(\overline{\alpha} \overline{\delta} - \overline{\gamma} \overline{\beta} \right)  &> \left|e^{-BT} \overline{\alpha} +e^{-T} \overline{\delta} \right|, \label{eq: Impulsive: LinStab: Jury conditions 2}
		\end{align}
	\end{subequations}
	are both satisfied.

	The first Jury condition \eqref{eq: Impulsive: LinStab: Jury conditions 1} yields $J_1(B)<0$.	
	For $(\overline{u}^0,\overline{w}^0)=(\overline{u}^0_+, \overline{w}^0_+)$, $J_1(0)=-1$ and thus the condition is satisfied for $B<\overline{B_1}$, where $\overline{B_1}$ is the smallest real positive root of $J_1(B)$ provided it exists. 
	The second Jury condition \eqref{eq: Impulsive: LinStab: Jury conditions 2} is $J_2(B):= 1 + e^{-T(B+1)}(\overline{\alpha}\overline{\delta}-\overline{\beta}\overline{\gamma}) -e^{-BT}\overline{\alpha} - e^{-T}\overline{\delta} >0$. For $(\overline{u}^0,\overline{w}^0)=(\overline{u}^0_+, \overline{w}^0_+)$, $J_2(0)=0$ and $\dif J_2/\dif B(0) = T>0$ and thus the condition is satisfied for $B<\overline{B_2}$, which is the smallest real positive root of $J_2(B)$ provided it exists.
\end{proof}

{\color{changes}\begin{proof}[Proof of Proposition \ref{prop: Impulsive: LinStab: stability to spatially heterogeneous perturbations general}]
		Similar to the Proposition \ref{prop: Impulsive: LinStab: stability spatially homog pert}, this proof is based on a linear stability analysis. Unlike in the proof of Proposition \ref{prop: Impulsive: LinStab: stability spatially homog pert}, the system cannot be immediately reduced to a difference system. Additionally, the convolution in \eqref{eq: Impulsive: intro: model general u between year} adds a complication. However, both these issues can be addressed by performing the analysis in Fourier space.

	As is standard with linear stability analysis, we investigate the behaviour of perturbations $( \tilde{u}(x,t), \tilde{w}(x,t))$ to a spatially uniform equilibrium $(\overline{u}(t),\overline{w}(t))=(\overline{u}^0e^{-Bt}, \overline{w}^0e^{-t})$ by setting
	\begin{align}
	\label{eq: Impuslive: LinStab: perturbation from steady state}
	u_n(x,t) = \overline{u}^0e^{-Bt} + \tilde{u}(x,t) \quad \text{and} \quad w_n(x,t) = \overline{w}^0e^{-t} + \tilde{w}(x,t).
	\end{align}
	 Substitution into the update equations \eqref{eq: Impulsive: intro: model general u between year} and \eqref{eq: Impulsive: intro: model general w between year} and linearisation yields
	\begin{subequations}
		\label{eq: Impulsive: LinStab: tilde eq before fourier}
		\begin{align}
		\tilde{u}_{n+1}(x,0) &= \tilde{u}_n(x,T) + \phi * \left(\tilde{\alpha} \tilde{u}_n(\cdot,T) + \tilde{\beta} \tilde{w}_n(\cdot,T) \right), \\
		\tilde{w}_{n+1} (x,0) &= \tilde{\gamma}\tilde{u}_n(x,T) + \tilde{\delta} \tilde{w}_n(x,T),
		\end{align}
	\end{subequations}
	noting that $\tilde{g}(\overline{u}^0e^{-BT},\overline{w}^0e^{-T}) = \overline{w}^0$ and $\tilde{f}_1(\overline{u}^0e^{-BT},\overline{w}^0e^{-T}) = \overline{u}^0(1-e^{-BT})$ by the definition of the spatially uniform equilibria. The Fourier transform applied to \eqref{eq: Impulsive: LinStab: tilde eq before fourier} gives
	\begin{subequations}
		\label{eq: Impulsive: LinStab: fourier interm}
		\begin{align}
		\widehat{\tilde{u}_{n+1}}(k,0) &= \left(1+\widehat{\phi}(k)\tilde{\alpha} \right) \widehat{\tilde{u}_{n}}(k,T) + \widehat{\phi}(k)\tilde{\beta} \widehat{\tilde{w}_{n}}(k,T), \\
		\widehat{\tilde{w}_{n+1}}(k,0) &= \tilde{\gamma} \widehat{\tilde{u}_{n}}(k,T) + \tilde{\delta} \widehat{\tilde{w}_{n}}(k,T),
		\end{align}
	\end{subequations}
	making use of the convolution theorem. The functions $\widehat{\tilde{u}_{n}}$ and $\widehat{\tilde{w}_{n}}$ satisfy the interpulse PDEs \eqref{eq: Impulsive: intro: model general u in year} and \eqref{eq: Impulsive: intro: model general w in year}. Taking the Fourier transform of the interpulse PDEs \eqref{eq: Impulsive: intro: model general u in year} and \eqref{eq: Impulsive: intro: model general w in year} gives 
	\begin{align*}
	\frac{\partial \widehat{\tilde{u}_n} (k,t)}{\partial t} = -B \widehat{\tilde{u}_n} (k,t), \quad
	\frac{\partial  \widehat{\tilde{w}_n} (k,t)}{\partial t} = - (1-i\nu k+dk^2) \widehat{\tilde{w}_n} (k,t), 
	\end{align*}
	which can be solved to 
	\begin{align}\label{eq: Impulsive: LinStab: fourier sol of pdes}
	\widehat{\tilde{u}_n} (k,t) =  \widehat{\tilde{u}_n} (k,0) e^{-Bt}, \quad \widehat{\tilde{w}_n} (k,t) = \widehat{\tilde{w}_n} (k,0) e^{-(1-i\nu k+dk^2)t}.
	\end{align}
	
	Substitution into \eqref{eq: Impulsive: LinStab: fourier interm} yields
	\begin{align*}
	\widehat{\tilde{u}_{n+1}}(k,0) &= \left(1+\widehat{\phi}(k)\tilde{\alpha} \right)e^{-BT} \widehat{\tilde{u}_{n}}(k,0) + \widehat{\phi}(k)\tilde{\beta} e^{-\left(1-i\nu k+dk^2\right)T}\widehat{\tilde{w}_{n}}(k,0), \\
	\widehat{\tilde{w}_{n+1}}(k,0) &= \tilde{\gamma} e^{-BT} \widehat{\tilde{u}_{n}}(k,0) + \tilde{\delta} e^{-\left(1-i\nu k+dk^2\right)T} \widehat{\tilde{w}_{n}}(k,0).
	\end{align*}
	from \eqref{eq: Impulsive: LinStab: fourier interm}. This is a linear difference system to which standard tools of stability analysis can be applied. In other words, the assumption that the perturbations $\widehat{\tilde{u}_n}$ and $\widehat{\tilde{w}_n}$ are proportional to $\lambda^n$ yields that the growth factor $\lambda \in \C$ is an eigenvalue of the Jacobian $J$.
\end{proof}}

\begin{proof}[Proof of Proposition \ref{prop: Impuslive: LinStab: stability to spatially heterogeneous perturbations general}]
	To investigate a steady state's stability on flat ground, the Jury conditions can be used. An instability occurs, if at least one of 
	\begin{subequations}
		\label{eq: Impulsive: LinStab: jury conditions spatial}
		\begin{align}
		\det(J)-1&<0, \label{eq: Impulsive: LinStab: jury conditions spatial 1}\\
		1+\det(J)-|\tr(J)| &> 0, 
		\end{align}
	\end{subequations}
	is not satisfied for some $k>0$. The first Jury condition \eqref{eq: Impulsive: LinStab: jury conditions spatial 1} is automatically satisfied due to stability to spatially homogeneous perturbations, because
	\begin{multline*}
	\det(J) = e^{-T\left(B+1+dk^2\right)} \left(\left(1+\widehat{\phi}(k)\tilde{\alpha} \right) \tilde{\delta} - \widehat{\phi}(k)\tilde{\beta}\tilde{\gamma} \right) = e^{-T\left(B+1+dk^2\right)} \left(\tilde{\delta} +\widehat{\phi}(k)\left(\tilde{\alpha}\tilde{\delta}  - \tilde{\beta}\tilde{\gamma} \right)\right) \\
	< e^{-T\left(B+1\right)} \left(\tilde{\delta} +\tilde{\alpha}\tilde{\delta}  - \tilde{\beta}\tilde{\gamma} \right) =  e^{-T\left(B+1\right)} \left(\overline{\alpha} \overline{\delta}  - \overline{\beta}\overline{\gamma} \right) <1,
	\end{multline*}
	for all $k>0$, noting that $1+\tilde{\alpha}  = \overline{\alpha}$, $\tilde{\beta} = \overline{\beta}$, $\tilde{\gamma} = \overline{\gamma}$ and $\tilde{\delta} = \overline{\delta}$, where $\overline{\alpha}$, $\overline{\beta}$, $\overline{\gamma}$ and $\overline{\delta}$ are defined in \eqref{eq: Impulsive: LinStab: Jacobian coefficients spatially homogeneous}. The last inequality makes use of the steady state's stability to spatially homogeneous perturbations, which in particular guarantees that \eqref{eq: Impulsive: LinStab: Jury conditions 1} holds. Therefore, assuming a steady state's stability to spatially homogeneous perturbations, a sufficient condition for spatial patterns to occur is the existence of some wavenumber $k>0$ such that the second Jury condition \eqref{eq: Impulsive: LinStab: jury conditions spatial 2} does not hold. In the case of model \eqref{eq: Impulsive: intro: model discrete dispersal rain water uptake} this condition can be slightly simplified by noting that $\tilde{\alpha}>0$ and $\tilde{\delta}>0$ and therefore $\tr(J)>0$ for all $k>0$. The condition thus becomes $1+\det(J)-\tr(J)>0$. 
\end{proof}

\begin{proof}[Proof of Proposition \ref{prop: Impulsive: LinStab: Amax in d infinity limit}]
	If $d \rightarrow \infty$ and $\nu=0$, then the Jacobian \eqref{eq: Impulsive: LinStab: Jacobian spatial heterogeneous} becomes
	\begin{align*}
	J = \mattwo{\left(1+\widehat{\phi}(k)\tilde{\alpha} \right)e^{-BT}}{0}{ \tilde{\gamma} e^{-BT}}{0}.
	\end{align*}
	Its determinant is clearly zero and therefore the stability condition simplifies to $1- (1+\widehat{\phi}(k) \tilde{\alpha})e^{-BT} >0$, for all $k>0$. For the Laplacian kernel \eqref{eq:difference equations: simulations: Laplacian} this is a polynomial in $k^2$, which after rearranging becomes
	\begin{align}\label{eq: Impulsive: LinStab: Jury2 condition flat ground d limit Laplace}
	k^2> \frac{\left(\left(1+\tilde{\alpha}\right)e^{-BT}-1\right)}{1-e^{-BT}}.
	\end{align}
	Stability of the steady state requires \eqref{eq: Impulsive: LinStab: Jury2 condition flat ground d limit Laplace} to hold for all $k>0$. This is only possible if the right hand side of \eqref{eq: Impulsive: LinStab: Jury2 condition flat ground d limit Laplace} is negative. Thus, an instability causing the onset of spatial patterns occurs if 
	\begin{align}\label{eq: Impulsive: LinStab: Stab criterion}
	\left(1+\tilde{\alpha}\right)e^{-BT}-1 > 0.
	\end{align} {\color{changes}The coefficient $\tilde{\alpha}$ is decreasing in $A$ and thus there exists a threshold $A=A_{\max}$ such that an instability occurs for all $A<A_{\max}$.}
\end{proof}

\begin{proof}[Proof of Corollary \ref{cor: Impulsive: LinStab: decrease of pattern supporting interval d infinity limit}]
	Substitution of $A=A_{\min}(1+\varepsilon)$ into \eqref{eq: Impulsive: LinStab: Stab criterion} gives
	\begin{align*}
	\varepsilon < \varepsilon_{\max}:=\frac{1}{8\left(e^{\frac{3T}{2}}\sqrt{e^T-1}+e^{2T} - e^{\frac{T}{2}}\sqrt{e^T-1}-e^T \right)},
	\end{align*}
	after linearisation in $\varepsilon$. The right hand side $\varepsilon_{\max}$ denotes the relative size of the rainfall interval supporting pattern onset. Its logarithm decreases at rate
	\begin{align*}
	\left(\ln\left(\varepsilon_{\max} \right)\right)' = -\frac{4e^{\frac{3T}{2}}\sqrt{e^T-1}+4e^{2T} - 2e^{\frac{T}{2}}\sqrt{e^T-1}-5e^T+1}{2\sqrt{e^T-1}\left(e^T\sqrt{e^T-1}+e^{\frac{3T}{2}} - \sqrt{e^T-1} -e^{\frac{T}{2}}\right)} \rightarrow -2 \quad \text{as} \quad T\rightarrow \infty.
	\end{align*}
	This shows the exponential decay of the relative interval size $\varepsilon_{\max}$.
\end{proof}

\begin{proof}[Proof of Proposition \ref{prop: Impulsive: LinStab: dc in A=Amin case flat ground}]
	Setting $A=A_{\min}$ provides a significant simplification as the equilibrium becomes
	\begin{align*}
	\left( \overline{u}^0,\overline{w}^0\right) = \left(\sqrt{1-e^{-T}}e^{BT}, \frac{\left(e^{BT}-1\right)\left(1+2\sqrt{1-e^{-T}}\right)}{\sqrt{1-e^{-T}}} \right).
	\end{align*}
	Thus the coefficients $\tilde{\alpha}$, $\tilde{\beta}$, $\tilde{\gamma}$ and $\tilde{\delta}$ given by \eqref{eq: Impulsive: LinStab: linearisation coefficients} become
	\begin{align*}
	\tilde{\alpha}_{A_{\min}} &=\frac{2\left(e^{BT}-1\right)\left(2-e^{-T}+2\sqrt{1-e^{-T}}\right)}{\left(1+\sqrt{1-e^{-T}}\right)^3}, \quad
	&\tilde{\beta}_{A_{\min}} = \frac{1-e^{-T}}{\left(1+\sqrt{1-e^{-T}}\right)^2}, \\
	\tilde{\gamma}_{A{\min}} &= -\tilde{\alpha}_{A_{\min}}, \quad
	&\tilde{\delta}_{A_{\min}} = \frac{2\sqrt{1-e^{-T}}+1}{\left(1+\sqrt{1-e^{-T}}\right)^2},
	\end{align*}
	respectively. The Jacobian \eqref{eq: Impulsive: LinStab: Jacobian spatial heterogeneous} then is 
	\begin{align*}
	J_{A_{\min}} =  \mattwo{\left(1+\widehat{\phi}(k)\tilde{\alpha}_{A_{\min}} \right)e^{-BT}}{\widehat{\phi}(k)\tilde{\beta}_{A_{\min}} e^{-\left(1+dk^2\right)T}}{ \tilde{\gamma}_{A_{\min}} e^{-BT}}{\tilde{\delta}_{A_{\min}} e^{-\left(1+dk^2\right)T}},
	\end{align*}
	and hence the steady state $(\overline{u}^0,\overline{w}^0)$ is stable to spatially heterogeneous perturbations if
	\begin{align}\label{eq: Impulsive: LinStab: Amin case stability condition}
	1+\det\left( J_{A_{\min}}\right) - \tr\left(J_{A_{\min}}\right) = \zeta \left(1-e^{-BT}\right) >0 \Longleftrightarrow \zeta >0, \quad \text{for all} \quad k>0
	\end{align}
	where 
	\begin{multline*}
	\zeta  = \frac{1}{{ \left( {{ e}^{T/2}}+\sqrt {{{ e}^{T}}-	1} \right) ^{3}}} \left( \left(  \left( -2\tilde{\beta}_{A_{{\min}}}-2\tilde{\delta}_{A_{{\min}}} \right) \widehat{\phi}(k)+3\tilde{\delta}_{A_{{\min}}} \right) {{ e}^{-Td{k}	^{2}-T/2}} \right.  \\ \left. + \left(  \left( 4\tilde{\beta}_{A_{{\min}}}+4\tilde{\delta}_{A_{{\min}}} \right) \widehat{\phi}(k)  - 4\tilde{\delta}_{A_{{\min}}} \right) {{ e}^{-Td{k}^{2}+T/2}}+\sqrt {{{ e}^{T}}-1}{{ e}^{- \left( d{k}^{2}+1	\right) T}}\tilde{\delta}_{A_{{\min}}}  \right. \\ \left. +4\sqrt {{{ e}^{T}}-1} \left(\left( \tilde{\beta}_{A_{{\min}}}+\tilde{\delta}_{A_{{\min}}} \right) \widehat{\phi}(k)-\tilde{\delta}_{A_{{\min}}} \right) {{ e}^{-Td{k}^{2}}}  \right. \\ \left. + \left( -1+\left( 4-4\widehat{\phi}(k) \right) {{ e}^{T}} \right) \sqrt {{{ e}^{T}}-1	}+ \left( -3+2\widehat{\phi}(k) \right) {{ e}^{T/2}}-4{{ e}^{3/2T}}\left( \widehat{\phi}(k)-1 \right) \right).
	\end{multline*}
	The minimum of the function $\zeta$ is decreasing in $d$ and thus there exists a threshold $d_{A_{\min}}$ such that \eqref{eq: Impulsive: LinStab: Amin case stability condition} does not hold for any $d>d_{A_{\min}}$.
\end{proof}


\section{Simulations of model extensions}\label{sec: Impulsive: Simulations}

In the preceding linear stability analysis we have made a number of simplifying assumptions to make the derivation of the criteria for pattern onset analytically tractable. To investigate the impact of these simplifications on our results, we numerically investigate extensions of \eqref{eq: Impulsive: intro: model discrete dispersal rain water uptake} in which some previous assumptions are relaxed. 

The analysis in this section yields that the exponential decay (with increasing $T$) of the size of the parameter region supporting pattern onset is due to the temporal separation of the components of the pattern-inducing feedback loop and does not occur if plant growth processes extend into drought periods. Results obtained in this section also highlight the importance of understanding a plant species' response to low soil moisture levels. This functional response is established to have an important influence on the ecosystem dynamics under precipitation regimes with intermediate interpulse times. Finally, the effects of sloped terrain and changes to the plant dispersal kernel are investigated. 

\subsection{Method} \label{sec: Impulsive: Simluations: Methods}
Simulations to determine the parameter region in which pattern onset occurs are performed in two stages. Unless the non-trivial spatially uniform equilibria of the system can be calculated analytically, we initially integrate the corresponding space-independent model to determine the threshold $A_{\min}$ below which the desert equilibrium is the system's only spatially uniform steady state. The calculation of $A_{\min}$ further provides the equilibrium plant and water densities $(\overline{u}^0,\overline{w}^0)$ close to the threshold.

Simulations of the full model are then performed on the space domain $[-x_{\max}, x_{\max}]$ centred at $x=0$. This domain is discretised into $M$ equidistant points $x_1,\dots,x_M$ with $-x_{\max} = x_1 < x_2 < \dots < x_M = x_{\max}$ such that $\Delta x = x_2-x_1 = \dots =x_M - x_{M-1}$. The ODE system resulting from the discretisation of the interpulse PDE system \eqref{eq: Impulsive: intro: model discrete dispersal rain water uptake u in year} and \eqref{eq: Impulsive: intro: model discrete dispersal rain water uptake w in year} is integrated, and the densities at every space point are updated at the end of each interpulse period of length $T$. The discrete convolution term arising from the discretisation of \eqref{eq: Impulsive: intro: model discrete dispersal rain water uptake u between year} is obtained by using the convolution theorem and the fast Fourier transform, providing a significant simplification through a reduction of the number of operations from $O(M^2)$ to $O(M\log(M))$ required to obtain the convolution (see e.g. \cite{Cooley1969}).

To mimic the infinite domain used for the linear stability analysis (Section \ref{sec: Impuslsive: LinStab}), we  define the initial condition of the system as follows; on a subdomain $[-x_{\operatorname{sub}}, x_{\operatorname{sub}} ]$ centred at $x=0$ of the domain $[-x_{\max}, x_{\max}]$ the steady state $(\overline{u}^0, \overline{w}^0)$ near its existence threshold $A_{\min}$ is perturbed by a function containing a collection of applicable spatial modes, while on the rest of the domain the densities are initially set to equal the densities of the steady state $(\overline{u}^0, \overline{w}^0)$. {\color{changes}The restriction of the perturbations to a small subdomain is used to avoid difficulties posed by the boundaries. The size of the outer domain is therefore chosen large enough so that any boundary conditions (which are set to be periodic) that are imposed on $[-x_{\max},x_{\max}]$ do not affect the solution in the subdomain during the time that is considered in the simulation.} Figure \ref{fig: Impuslive: Simulations: example realisation} shows a typical patterned solution obtained by these simulations. 

We use model realisations obtained through this method to determine the critical rainfall level $A_{\max}$ below which pattern onset occurs in the different model extensions.

\begin{figure}
	\centering
	\includegraphics[width=0.9\textwidth]{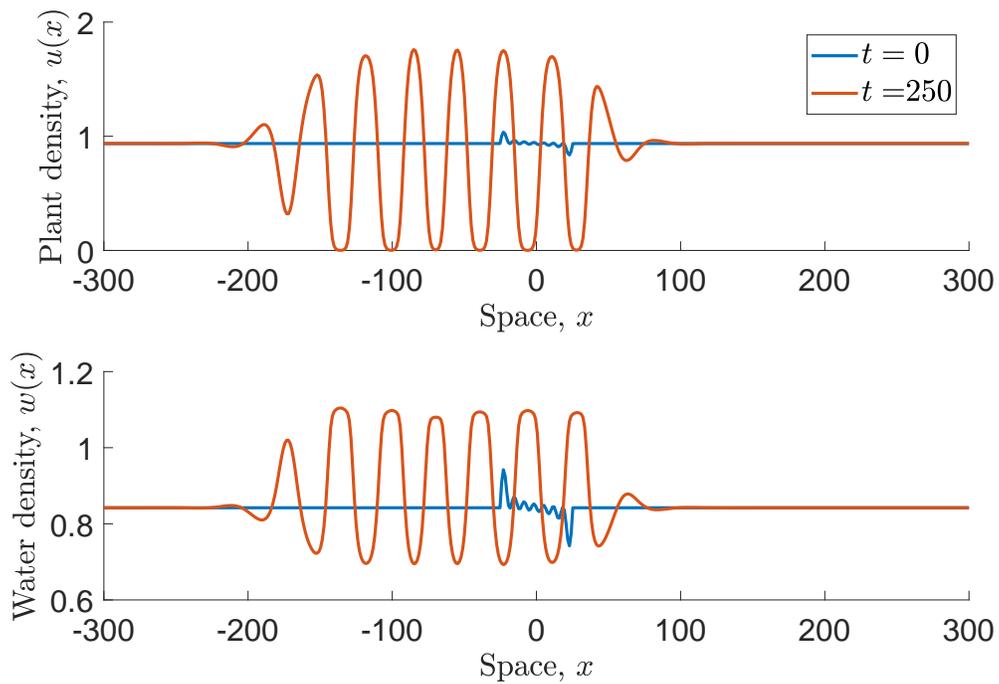}
	\caption{Solution of the impulsive model. This visualises a numerically obtained realisation of the impulsive model \eqref{eq: Impulsive: intro: model discrete dispersal rain water uptake} on flat ground. The plant dispersal kernel $\phi$ is set to the Laplacian kernel \eqref{eq:difference equations: simulations: Laplacian} and the parameter values are $B=0.45$, $A=1.623$, $d=100$ and $T=1$. The number of space points is $M=10^9$.}\label{fig: Impuslive: Simulations: example realisation}
\end{figure}

\subsection{Nonlinear water uptake}\label{sec: Impulsive: Simulations: Nonlinear water uptake}
In the original model \eqref{eq: Impulsive: intro: model discrete dispersal rain water uptake}, water consumption by plants (and the plant growth associated with it) is described by 
\begin{align*}
\operatorname{Up}(u,w) = G_{\operatorname{up}}(w)H_{\operatorname{up}}^2(u) = \left(w+TA\right)\left(\frac{u}{1+u}\right)^2.
\end{align*}
The linearity in $w$ is inherited from the Klausmeier model on which our impulsive model is based. Field observations indicate that dryland ecosystems remain dormant under low soil moisture levels and are only activated if the water density is sufficiently high \cite{Heisler-White2008,Sher2004}. Mathematically, such a property can be described by a Holling type III functional response \cite{Chesson2004}. To incorporate such a nonlinear response into the impulsive model, we consider an amended uptake function with  
\begin{align*}
\widetilde{G_{\operatorname{up}}}(w) = \frac{C_m\left(w+TA\right)^p}{C_h^p + \left(w+TA\right)^p}, \quad p>1,
\end{align*}
where $C_m$ is the maximum water uptake per unit biomass, $C_h$ is the half saturation constant of the water consumption and $p$ accounts for the strength of the nonlinearity. Typical parameter values are $C_m=20$, $C_h=\sqrt{2}$ and $p=4$ \cite{Chesson2004}. The introduction of this nonlinearity causes complications as positivity of the water density $w$ is no longer guaranteed by the update equation \eqref{eq: Impulsive: intro: model discrete dispersal rain water uptake w between year}. To avoid the occurrence of negative densities, we cap the new water uptake function $\widetilde{\operatorname{Up}}(u,w)$ by $w+TA$, i.e. set 
\begin{align*}
\widetilde{\operatorname{Up}}(u,w)=\max\{w+TA,\widetilde{G_{\operatorname{up}}}(w){\color{changes}\}}H_{\operatorname{up}}^2(u).
\end{align*}
 
{\color{changes}The most significant result of our numerical investigation of \eqref{eq: Impulsive: intro: model discrete dispersal rain water uptake} with a Holling type III functional response in the water uptake and plant growth terms is that the minimum of the existence threshold $A_{\min}$ of a non-trivial equilibrium (Figure \ref{fig: Impuslive: Simulations: AT plane nonlinear water uptake}) occurs for intermediate interpulse times.} Under the assumption that total annual rainfall $A$ is fixed, longer drought periods between precipitation pulses correspond to higher intensity rainfall events. Resource availability at the time of water uptake and plant growth is thus higher and exceeds the threshold required for plant growth processes to be activated, which is accounted for in the Holling type III functional response. Conversely, high frequency - low intensity precipitation pulses accumulating to the same amount of total annual rainfall volume are not sufficient to push the water density above this critical value. It is worth emphasising that further increases in the separation of precipitation events (and associated increases in rainfall intensity) to a low frequency - high intensity regime reverses the decrease in $A_{\min}$ due to the saturating behaviour of the water uptake function.

Further, the property that an increase in the interpulse time $T$ reduces the size of the parameter region in which onset of patterns occurs is unaffected by the introduction of a nonlinear water uptake term. Similar to the analytically derived exponential decay of the relative size of $[A_{\min},A_{\max}]$ for \eqref{eq: Impulsive: intro: model discrete dispersal rain water uptake} with a linear functional response (Corollary \ref{cor: Impulsive: LinStab: decrease of pattern supporting interval d infinity limit}), results of our numerical scheme for a Holling type III functional response also indicate an exponential decay of the interval's relative size with increasing interpulse times (Figure \ref{fig: Impuslive: Simulations: pattern forming interval size nonlinear water uptake T}). 

Numerical solutions of the model do, however, become unreliable as the interpulse time $T$ is increased. For larger $T$, the decay-type processes in the interpulse PDEs yield very low plant levels in the troughs of the pattern at the end of the interpulse period. This is a natural source of potential errors. Indeed, Figure \ref{fig: Impuslive: Simulations: numerical issue} depicts that numerical solutions of the system for large $T$ can yield negative plant densities at the end of the interpulse period, highlighting the difficulties encountered in a numerical approach.

To investigate the effects of the strength of the nonlinearity in more detail, we compare results on pattern onset as the strength of the nonlinearity gradually increases away from the linear behaviour considered in \eqref{eq: Impulsive: intro: model discrete dispersal rain water uptake}. While it is impossible to revert back to the linear term by parameter changes only, the behaviour for small values of the water density $w$ can be mimicked by choosing $p=1$ and $C_m=C_h$ sufficiently large. We use this as the reference point to the analytical results obtained in Section \ref{sec: Impuslsive: LinStab} and vary the extent of the nonlinearity in the functional response by fixing $C_m=20$ and setting $C_h = 20-(20-\sqrt{2})\xi$ and $p=1+3\xi$ for $0\le\xi\le1$. 
{\color{changes}For sufficiently low fixed interpulse times $T$, an initial increase of $\xi$ causes an increase of the rainfall level $A_{\min}$ that is required for a spatially uniform non-trivial equilibrium to exist (Figure \ref{fig: Impuslive: Simulations: nonlinear water uptake chi}).} As the strength of the nonlinearity increases further, $A_{\min}$ attains a maximum and then decreases below its level for the model with linear functional water uptake response analysed in Section \ref{sec: Impuslsive: LinStab}. 

The reasoning for this behaviour stems from the variation in the functional response $G_{\operatorname{up}}$ under changes of $\xi$, which is visualised in Figure \ref{fig: Impuslive: Simulations: water uptake functions nonlinear water uptake chi}. For sufficiently low $T$, the resource availability at the time of water uptake is also low. Thus a linear functional response yields a higher water consumption than a nonlinear response with moderate $\xi$, but a lower consumption than a nonlinear response with larger $\xi$. More precisely, the increase in the exponent $p$ and the associated concave-up shape of $G_{\operatorname{up}}$ causes the initial increase in $A_{\min}$. A further increase in $\xi$ decreases the half-saturation parameter $C_h$ and the range of resource densities affected by the concave-up behaviour decreases in size. This causes the eventual decrease in $A_{\min}$ as the strength of the nonlinearity is increased further. 

{\color{changes}For sufficiently large drought lengths $T$, the maximum in $A_{\min}$ occurs at $\xi=0$ and thus any $\xi>0$ reduces the minimum water requirements of the system.} The upper bound $A_{\max}$ of the parameter region supporting pattern onset mimics the behaviour of $A_{\min}$. The size of the parameter region in which pattern onset occurs increases slightly with increasing $\xi$, but changes to its size are insignificant compared to changes causes by variations in the interpulse time $T$.

\begin{figure}
	\centering
	\subfloat[Classification of the $A$-$T$ parameter plane. \label{fig: Impuslive: Simulations: AT plane nonlinear water uptake}]{\includegraphics[width=0.32\textwidth]{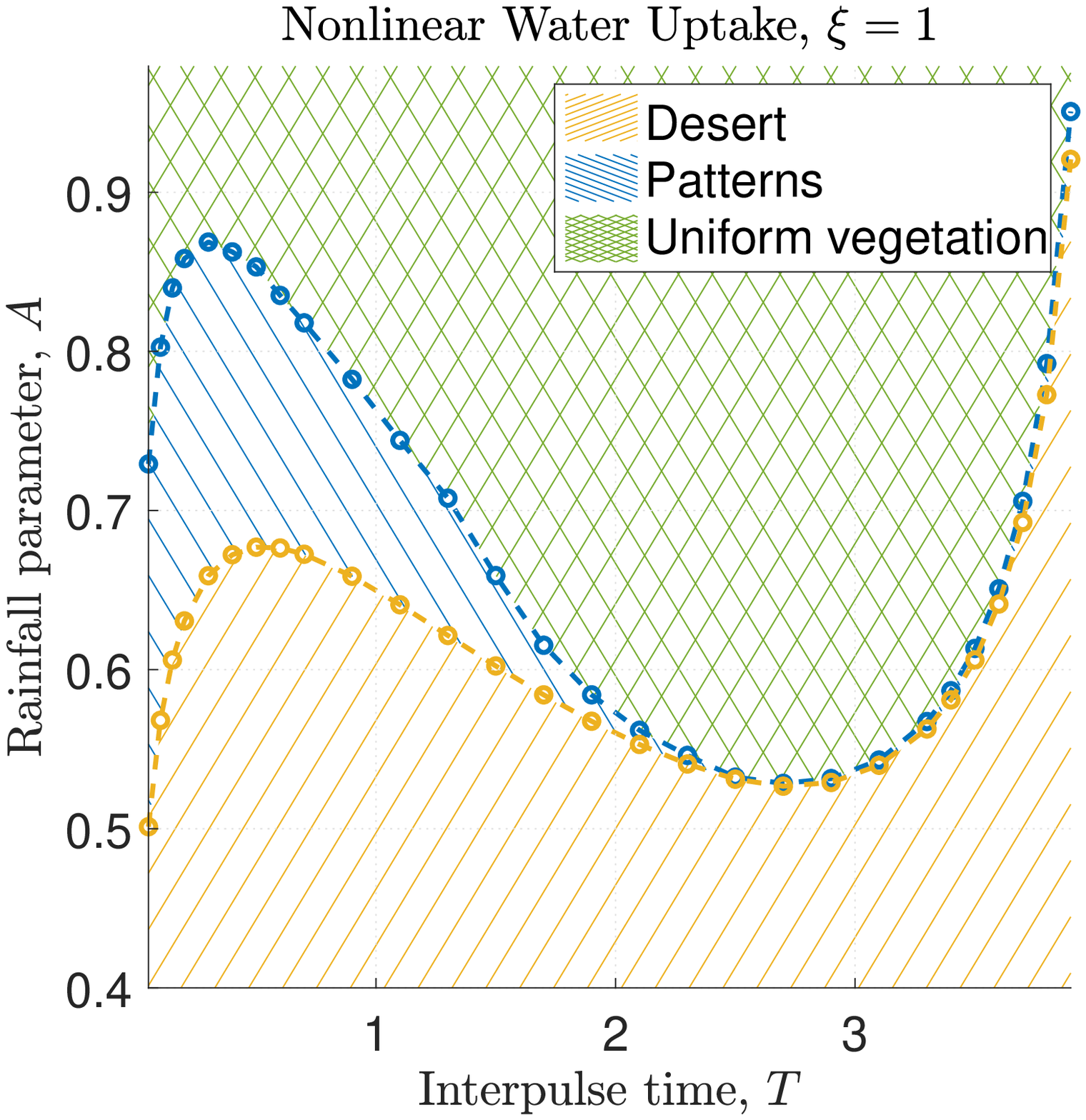}} \hfill
	\subfloat[Relative size of the rainfall interval supporting pattern onset.\label{fig: Impuslive: Simulations: pattern forming interval size nonlinear water uptake T}]{\includegraphics[width=0.32\textwidth]{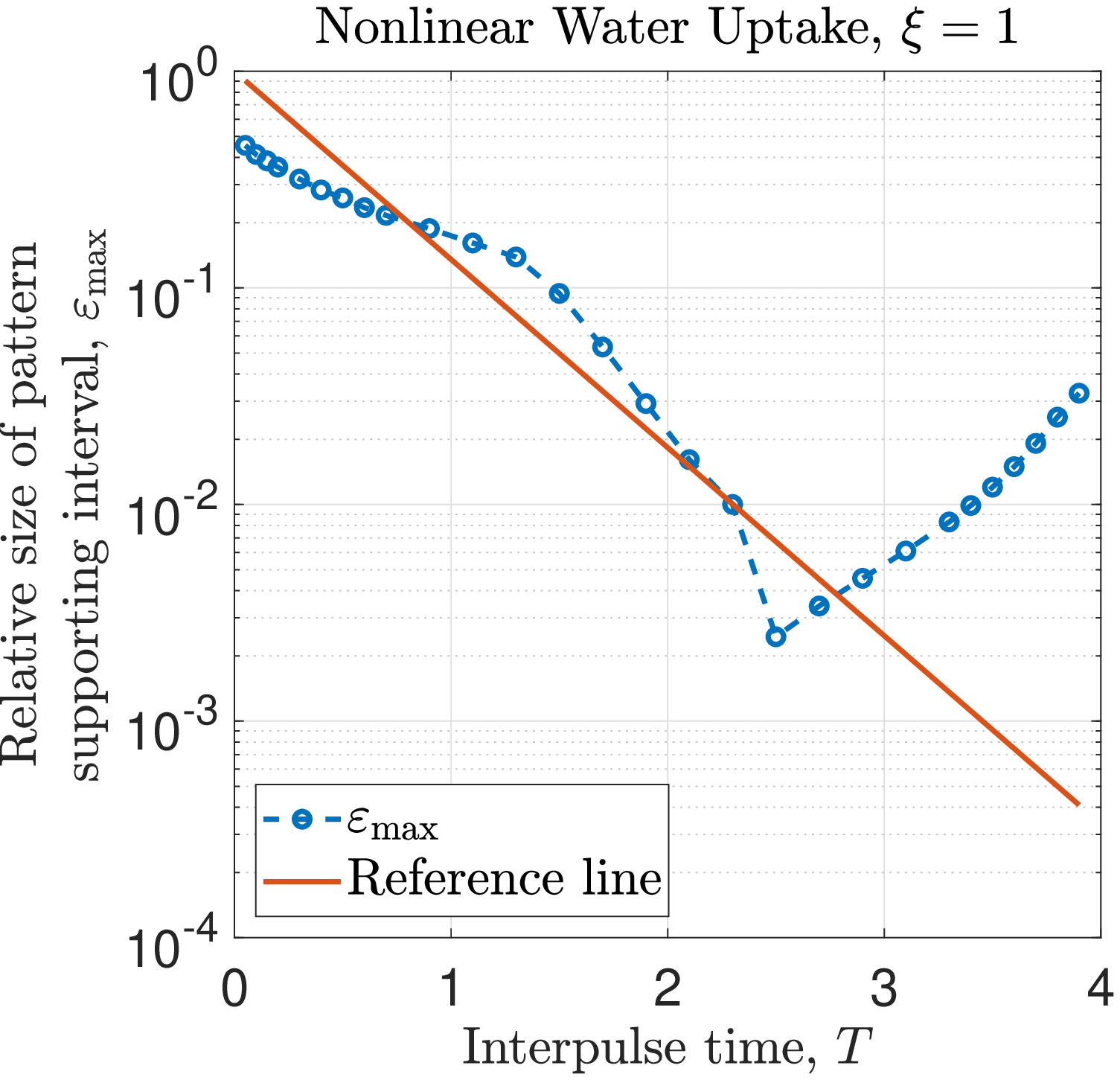}} \hfill
	\subfloat[Minimum plant density\label{fig: Impuslive: Simulations: numerical issue}]{\includegraphics[width=0.32\textwidth]{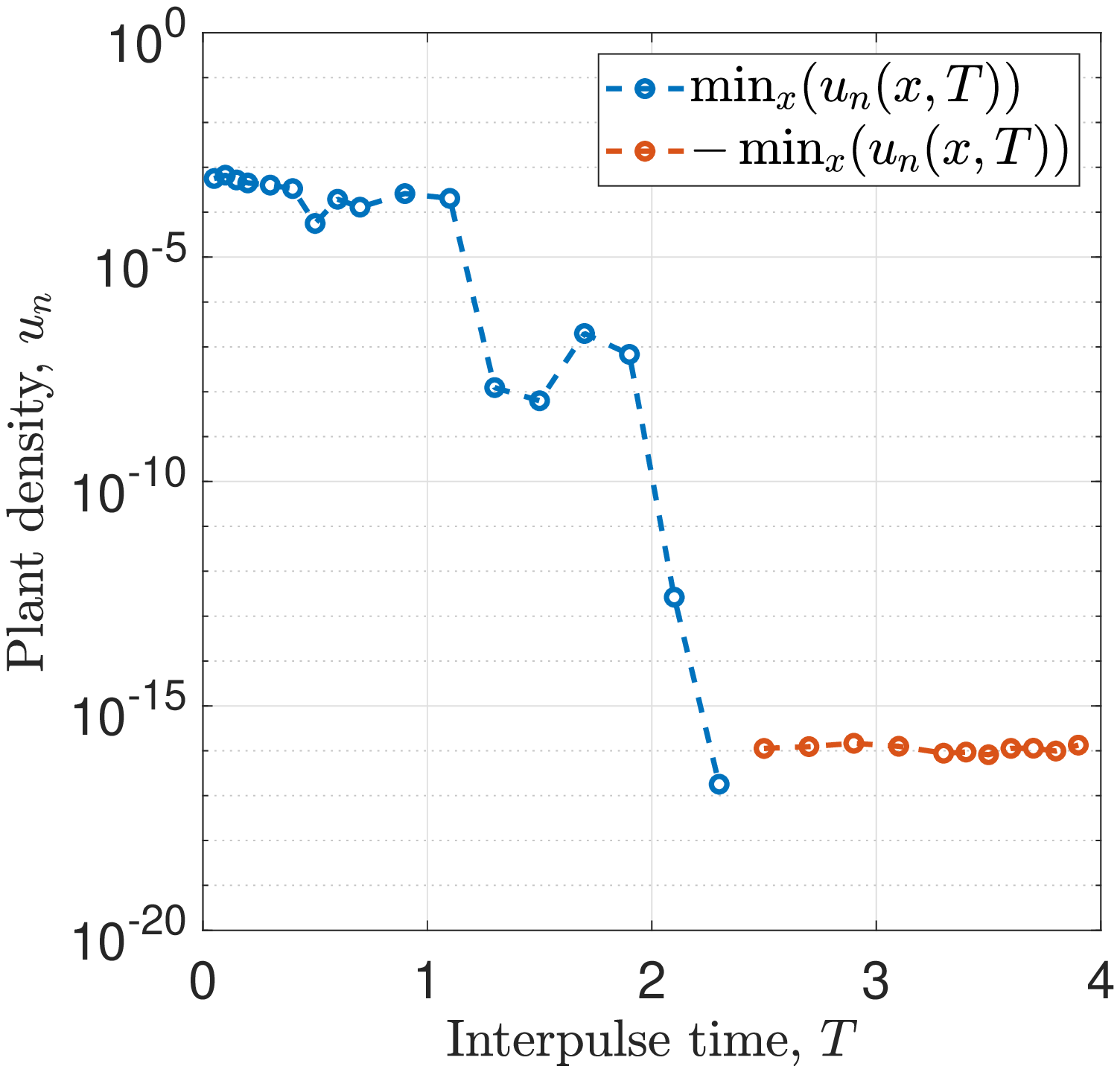}} \\
	\subfloat[Classification of the $A$-$\xi$ parameter plane. \label{fig: Impuslive: Simulations: nonlinear water uptake chi}]{\includegraphics[width=0.32\textwidth]{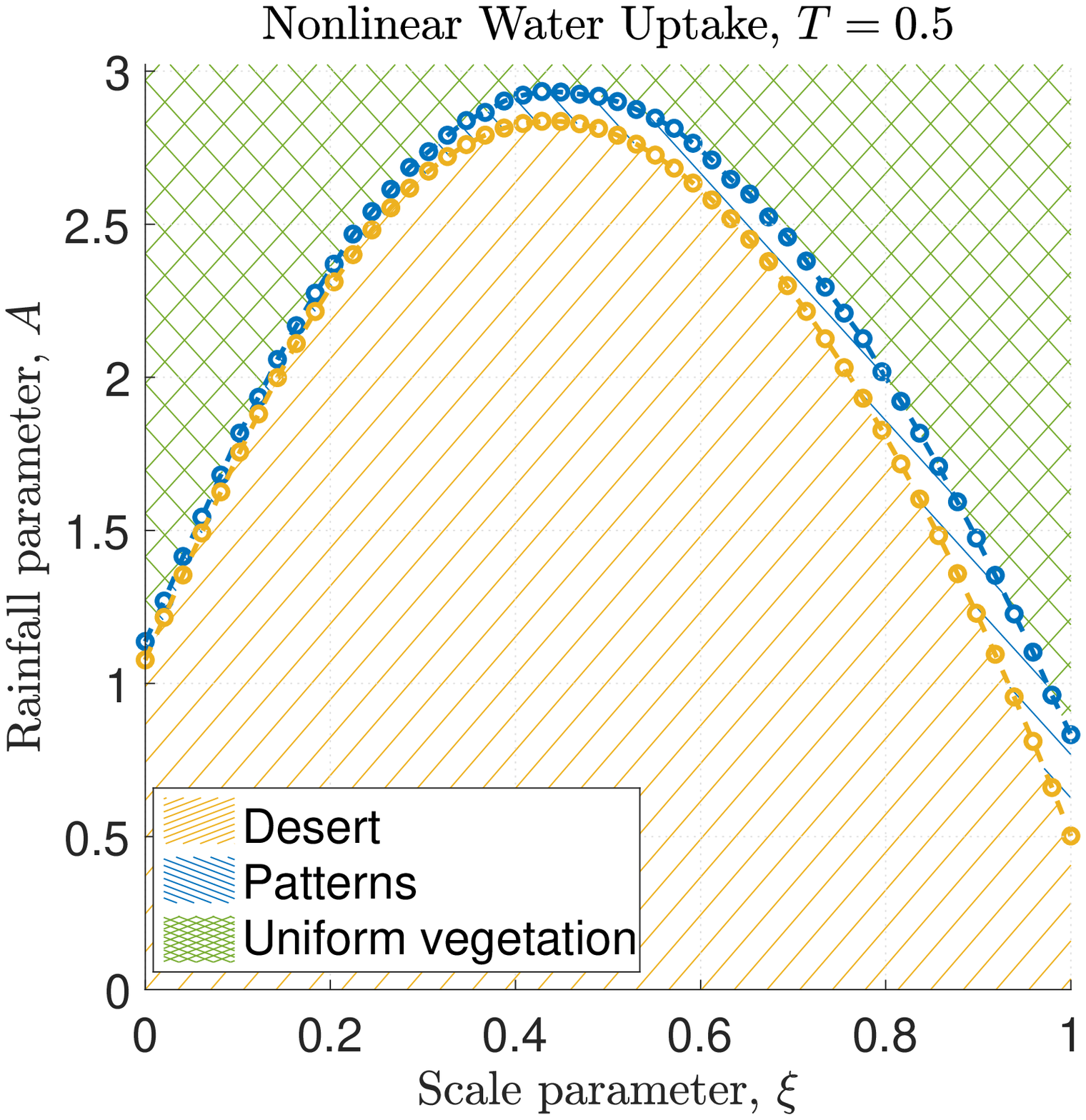}} \hspace{1cm}
		\subfloat[Water uptake functional response. \label{fig: Impuslive: Simulations: water uptake functions nonlinear water uptake chi}]{\includegraphics[width=0.32\textwidth]{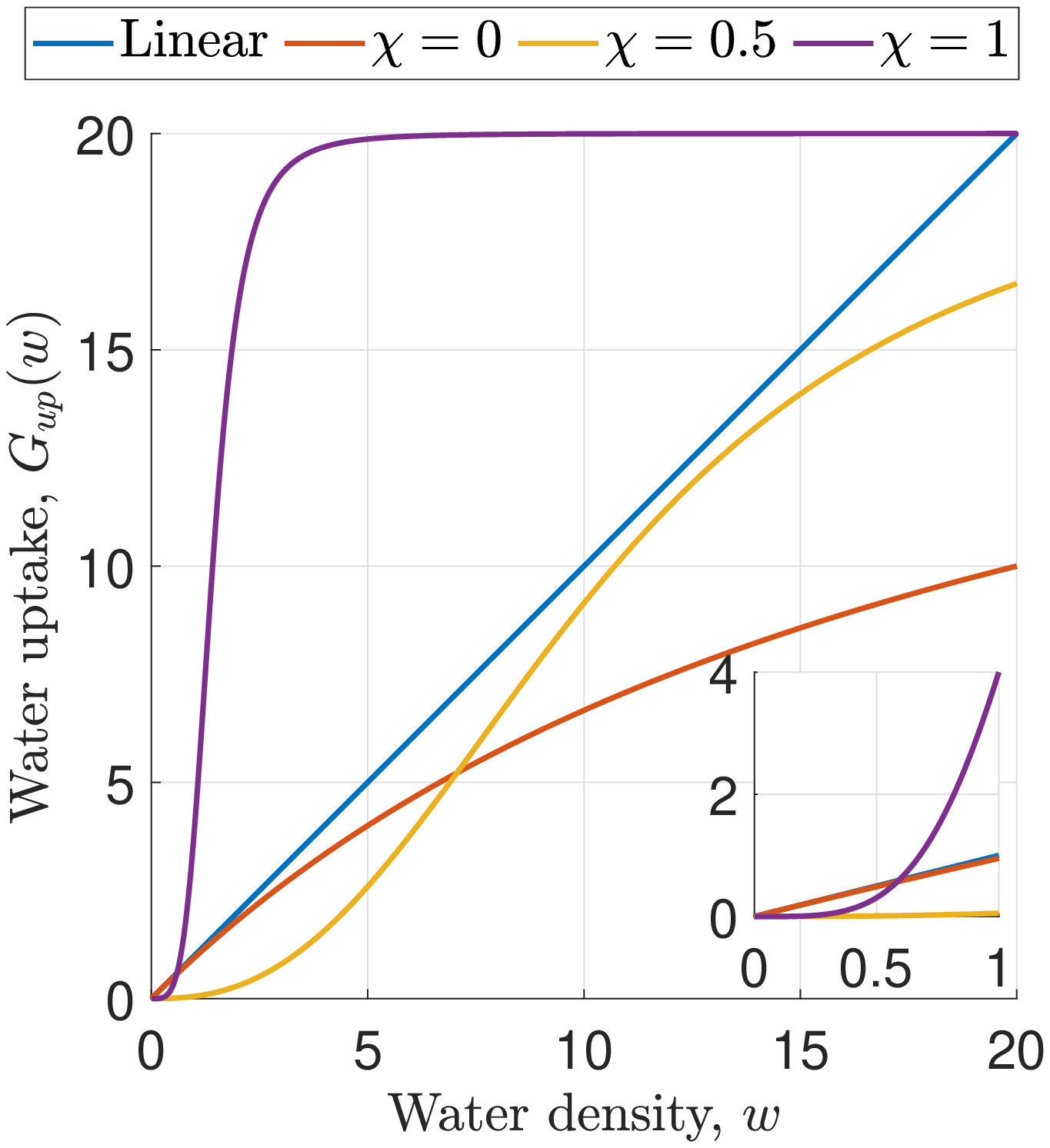}}
		\caption{Classification plots for a nonlinear functional response in the water uptake function. The classifications (a) and (d) into states of desert, onset of spatial patterns and uniform vegetation are based on the numerical scheme described in Section \ref{sec: Impulsive: Simluations: Methods}. The transition threshold $A_{\max}$ is determined up to an interval of size $10^{-5}$, the level of $A_{\min}$ up to an interval of size $10^{-8}$. The relative size of $[A_{\min},A_{\max}]$ corresponding to the classification in (a) is shown in (b), where the reference line is of slope $\exp(-2T)$. The parameter values used in both simulations are $B=0.45$ and $d=500$. The water uptake function $G_{\operatorname{up}}(w)$ is shown in (e) for several values of $\xi$, with its behaviour close to the origin shown in the inset. The minimum plant density before a rainfall pulse $\min_{x\in[-x_{\max},x_{\max}]}\{u_n(x,T)\}$ of a stable pattern is shown in (c), where the blue and red markers indicate positive and negative values of $u_n$, respectively. This visualises the numerical issues encountered in simulations for longer interpulse times $T$. }\label{fig: Impuslive: Simulations: nonlinear water uptake T}
\end{figure}

\subsection{Nonlinear PDEs} \label{sec: Impulsive: Simulations: Nonlinear PDEs}
The original impulsive model \eqref{eq: Impulsive: intro: model discrete dispersal rain water uptake} is based on the assumption that no plant-water interactions take place during drought periods. The interpulse equations thus form a system of linear and decoupled PDEs that describe linear decay of both plant and water densities between precipitation pulses. We relax this assumption by extending the occurrence of biomass growth into the interpulse phase. This changes the PDE system to
\begin{align*}
\frac{\partial u_n}{\partial t} &= -Bu_n + C\left(\frac{u_n}{1+u_n}\right)^2w_n, \\
\frac{\partial w_n}{\partial t} &= -w_n - C\left(\frac{u_n}{1+u_n}\right)^2w_n + d \frac{\partial^2 w_n}{\partial x^2}, 
\end{align*} 
where the nondimensional constant $C$ accounts for the rate of water uptake. The pulse equations \eqref{eq: Impulsive: intro: model discrete dispersal rain water uptake u between year} and \eqref{eq: Impulsive: intro: model discrete dispersal rain water uptake w between year} remain unchanged, i.e. there is still a pulse of plant growth synchronised with a precipitation event.

While a typical estimate is $C=10$ \cite{Gilad2007}, we use our numerical scheme to investigate how a gradual increase from $C=0$ (which corresponds to the model studied analytically in Section \ref{sec: Impulsive}) affects the pattern onset observed in the system. {\color{changes}An increase in the plants' growth rate during drought periods causes a decrease in the existence threshold $A_{\min}$ of a spatially uniform non-desert equilibrium and the precipitation level $A_{\max}$ at which pattern onset occurs (Figure \ref{fig: Impuslive: Simulations: Achi plane nonlinear water uptake}).} This decrease is caused by a reduction in total resource loss through evaporation and the associated increase in water availability to plants. In the original model \eqref{eq: Impulsive: intro: model discrete dispersal rain water uptake} ($C=0$), water that is not consumed by plants during the rainfall pulse undergoes exponential decay due to evaporation during the interpulse period and is lost from the system. If $C\ne 0$, however, water that enters the drought phase not only evaporates but also continues to be converted into plant biomass, which causes a reduction in evaporation losses.

The second main conclusion arising from the inclusion of a nonlinear coupling of the interpulse PDEs is the conservation of a large parameter region in which pattern onset occurs for large $T$ (Figure \ref{fig: Impuslive: Simulations: nonlinear PDEs T}), instead of an exponential decay of its size with increasing $T$. The existence of such a region is due to the inclusion of a pattern-inducing feedback in the interpulse PDEs. More water is consumed in regions of high biomass density, which causes the homogenising effect of water diffusion to redistribute more water towards these regions yielding further plant growth. If water uptake between pulses is weak (small $C$), or as in the original model non-existent ($C=0$), the system's only pattern-forming feedback loop consists of the nonlinearity in the plant growth term in the update equations in combination with the homogenising property of water diffusion in the interpulse PDEs. The latter loses its impact as $T$ is increased, as evaporation effects become dominant and cause a decrease in water availability at the end of the interpulse phase. The water density at the growth pulse therefore only depends on the intensity of the rain event, but is independent of the diffusion process that occurs before the rainfall pulse. This weakens the strength of the pattern-inducing feedback loop and causes the decrease in the size of the parameter region in which pattern onset occurs.

\begin{figure}
	\centering
	\subfloat[Classification of the $A$-$T$ parameter plane.\label{fig: Impuslive: Simulations: nonlinear PDEs T}]{\includegraphics[width=0.48\textwidth]{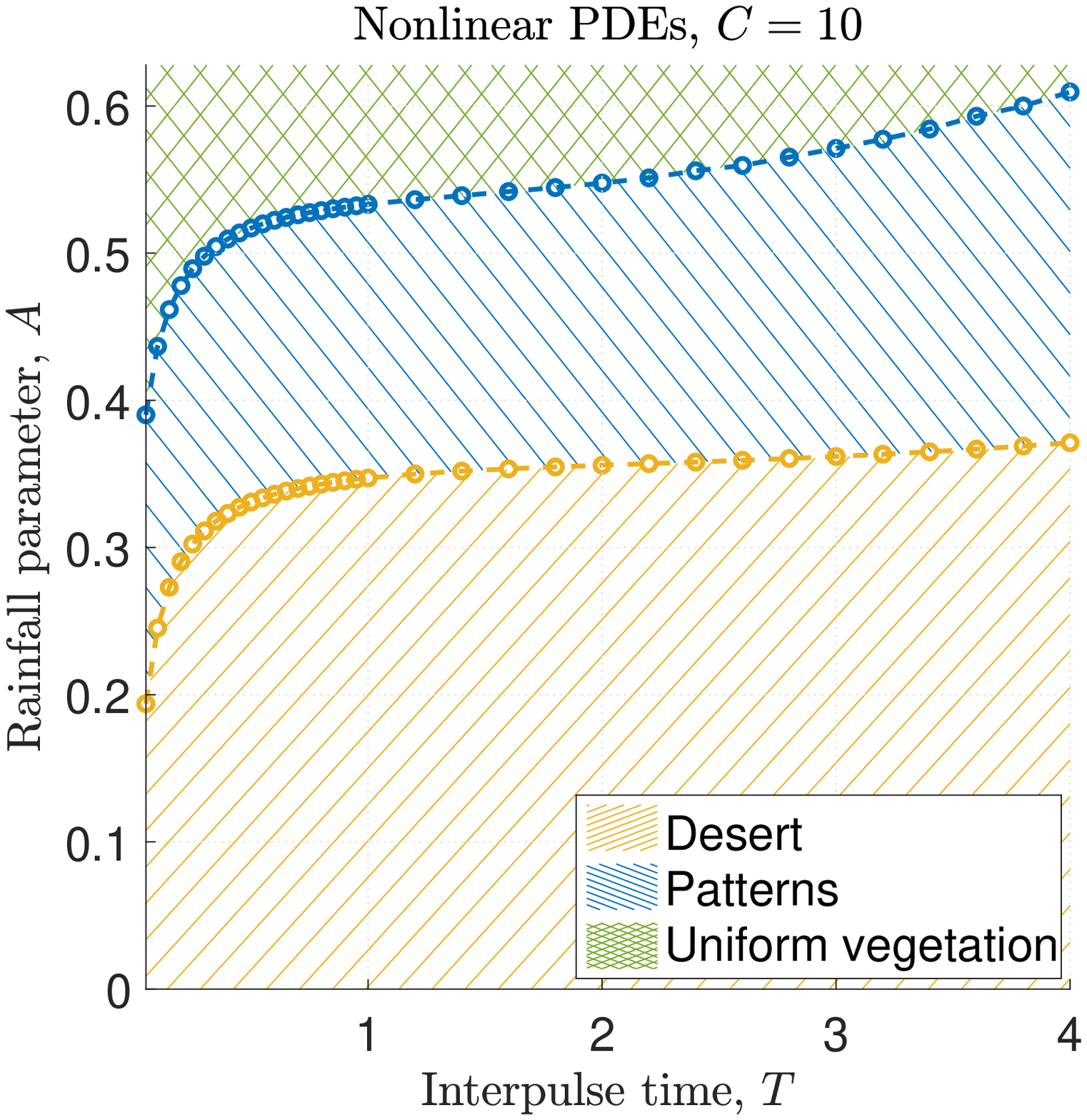}}
	\subfloat[Classification of the $A$-$C$ parameter plane. \label{fig: Impuslive: Simulations: Achi plane nonlinear water uptake}]{\includegraphics[width=0.48\textwidth]{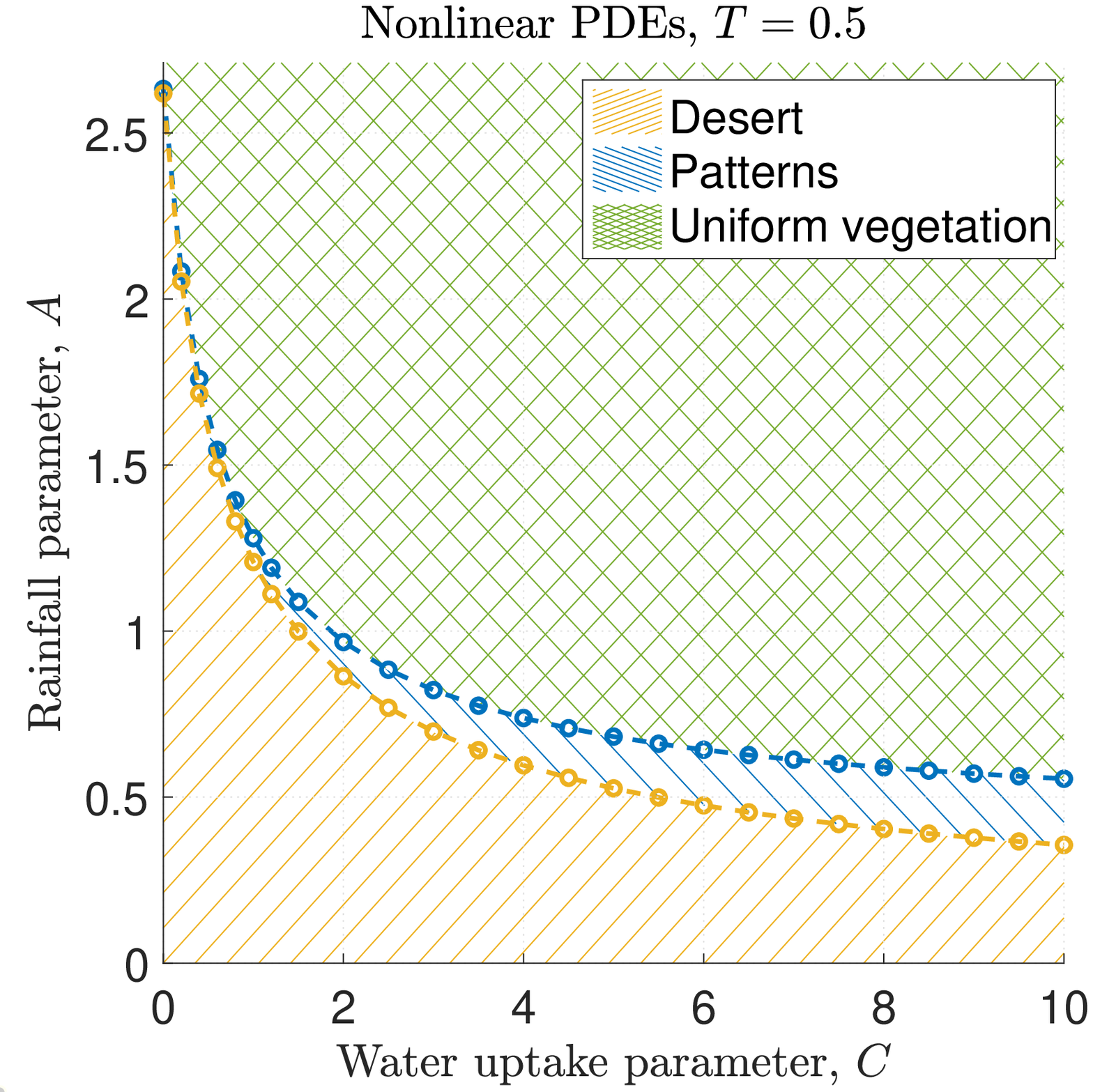}}
	\caption{Classification plots for the inclusion of plant growth in the interpulse PDEs. The classifications into states of desert, onset of spatial patterns and uniform vegetation is based on the numerical scheme described in Section \ref{sec: Impulsive: Simluations: Methods}. The transition threshold $A_{\max}$ is determined up to an interval of size $10^{-5}$, the level of $A_{\min}$ up to an interval of size $10^{-8}$. The parameter values used in both simulations are $B=0.45$ and $d=500$.}\label{fig: Impuslive: Simulations: nonlinear PDEs chi}
\end{figure}

\subsection{Kernel functions}
In the linear stability analysis in Section \ref{sec: Impuslsive: LinStab}, we set the plant dispersal kernel to the Laplace kernel \eqref{eq:difference equations: simulations: Laplacian}. Seed dispersal behaviour, however, depends both on species and environmental conditions \cite{Bullock2017}. Similar to the work on a previous model \cite{Eigentler2018nonlocalKlausmeier}, we use our numerical scheme to investigate effects caused by setting the dispersal kernel to the Gaussian
\begin{align}\label{eq:difference equations: simulations: Gaussian}
\phi(x) = \frac{a_g}{\sqrt{\pi}}e^{-a_g^2x^2}, \quad a>0, x\in\R,
\end{align}
and the power law distribution
\begin{align}\label{eq:difference equations: simulations: Power Law}
\phi(x) = \frac{(b-1)a_p}{2\left(1+a_p|x|\right)^b}, \quad a>0, b >3, x\in\R.
\end{align}

We base our comparison on the kernels' standard deviations, which are given by $\sigma(a) = \sqrt{2}/a$ for the Laplacian kernel \eqref{eq:difference equations: simulations: Laplacian}, $\sigma(a_g) = 1/(\sqrt{2}\,a_g)$ for the Gaussian kernel \eqref{eq:difference equations: simulations: Gaussian} and $\sigma(a_p) = \sqrt{2}/(\sqrt{b^2-5b+6}\,a_p)$ for the power law kernel \eqref{eq:difference equations: simulations: Power Law} provided $b>3$. It is perfectly reasonable to perform simulations with kernels of infinite standard deviation (e.g. $b<3$ in the power law kernel) but in the interest of comparing results for the kernels based on their standard deviation we consider only $b=3.1$ and $b=4$.

In the simulations we are interested in both the effects of changes to the shape of the dispersal kernel and the effects caused by a variation in the temporal intermittency of precipitation. As shown in Figure \ref{fig: Impuslive: Simulations: Amax changes T flat}, the latter bears much more influence on the rainfall threshold $A_{\max}$ than the choice of plant dispersal kernel. Indeed, the results obtained for all kernel functions follow the narrow band of exponentially decaying size in the $T$-$A$ parameter region in which pattern existence has been shown for the Laplace kernel in Section \ref{sec: Impuslsive: LinStab} and in particular in Figure \ref{fig: Impuslive: LinStab: dlimit AT plane full}.

While the effects of the kernel shape are negligible compared to changes of the interpulse time $T$, their influence on the system can still be studied if $T$ is fixed. Instead of varying $T$, we opt to investigate how the threshold $A_{\max}$, at which patterns cease to exist, changes under variations of the water diffusion coefficient $d$. This allows us to draw a connection to the results of the linear stability analysis visualised in Figure \ref{fig: Impulsive: LinStab: jury2 in Ad parameter plane flat ground}. Our numerical scheme shows that all kernel functions considered in the simulations qualitatively follow the same behaviour, which agrees with the analytically deduced result for the Laplace kernel in Section \ref{sec: Impuslsive: LinStab}. For sufficiently low levels of rainfall, the diffusion coefficient needs to exceed a threshold to give rise to an instability resulting in the onset of spatial patterns. There does, however, exist an upper bound (not shown in Figure \ref{fig: Impuslive: Simulations: Amax changes diffusion}) on the rainfall parameter for each kernel function above which pattern onset from a perturbation of the steady state is impossible. Due to the nondimensionalisation of the model an increase in the diffusion coefficient $d$ corresponds to a decrease in the width of the dispersal kernels. Thus, for a fixed kernel function an increase in kernel width inhibits the onset of patterns. Note, however, that information on the kernels' standard deviation, which we use as a measurement of kernel width, is insufficient to make comparisons between results for different kernel functions. Conditions for pattern onset also depend on the dispersal kernel's type of decay at infinity; for example $A_{\max}$ for the Laplace kernel and the power law kernel with $b=4$ coincide in Figure \ref{fig: Impuslive: Simulations: Amax changes diffusion}, even though their standard deviations are $\sigma_L = \sqrt{2}$ and $\sigma_P = 1$, respectively.

\begin{figure}
	\centering
	\subfloat[\label{fig: Impuslive: Simulations: Amax changes diffusion}]{\includegraphics[width=0.48\textwidth]{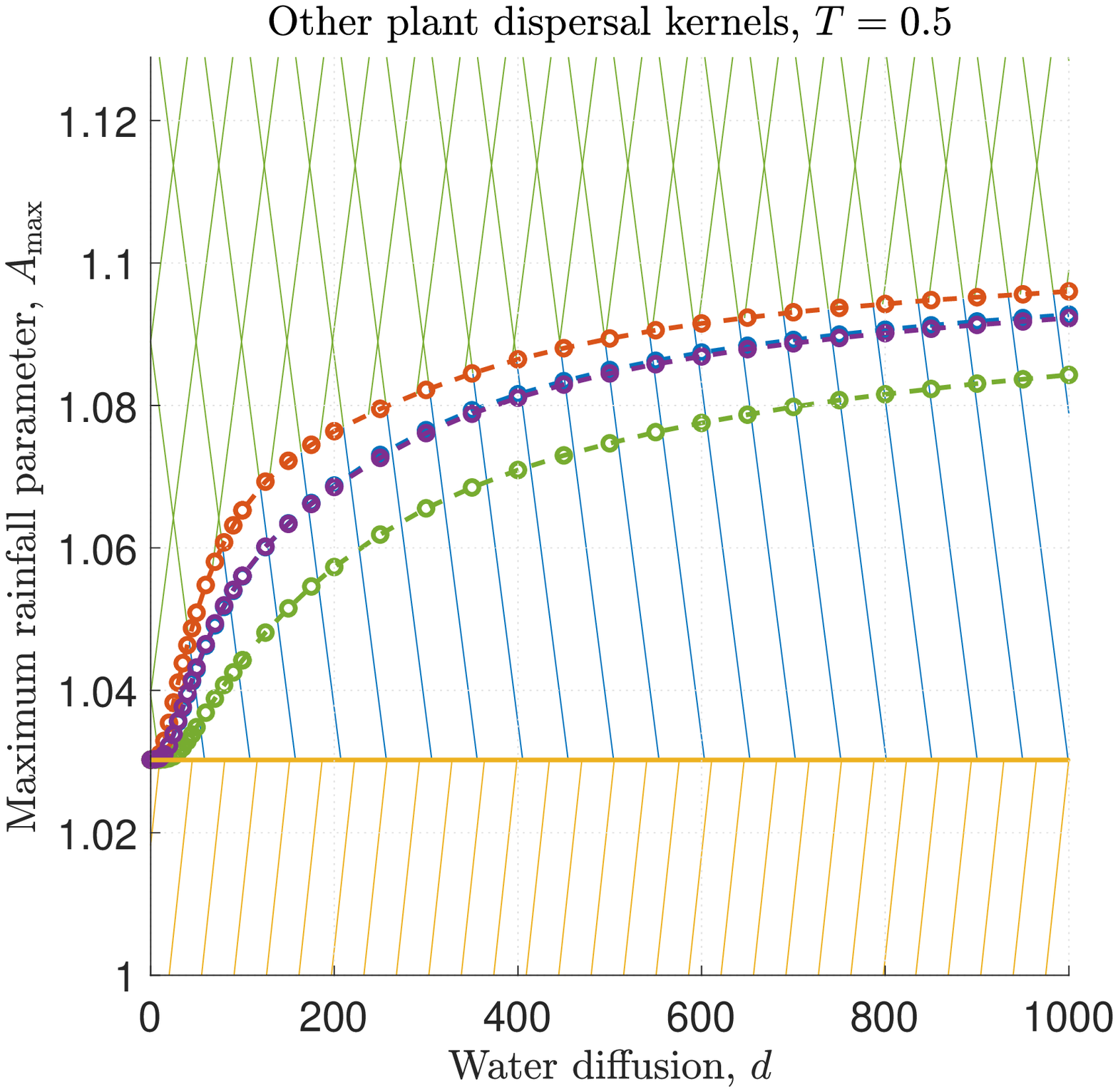}}
	\subfloat[\label{fig: Impuslive: Simulations: Amax changes T flat}]{\includegraphics[width=0.48\textwidth]{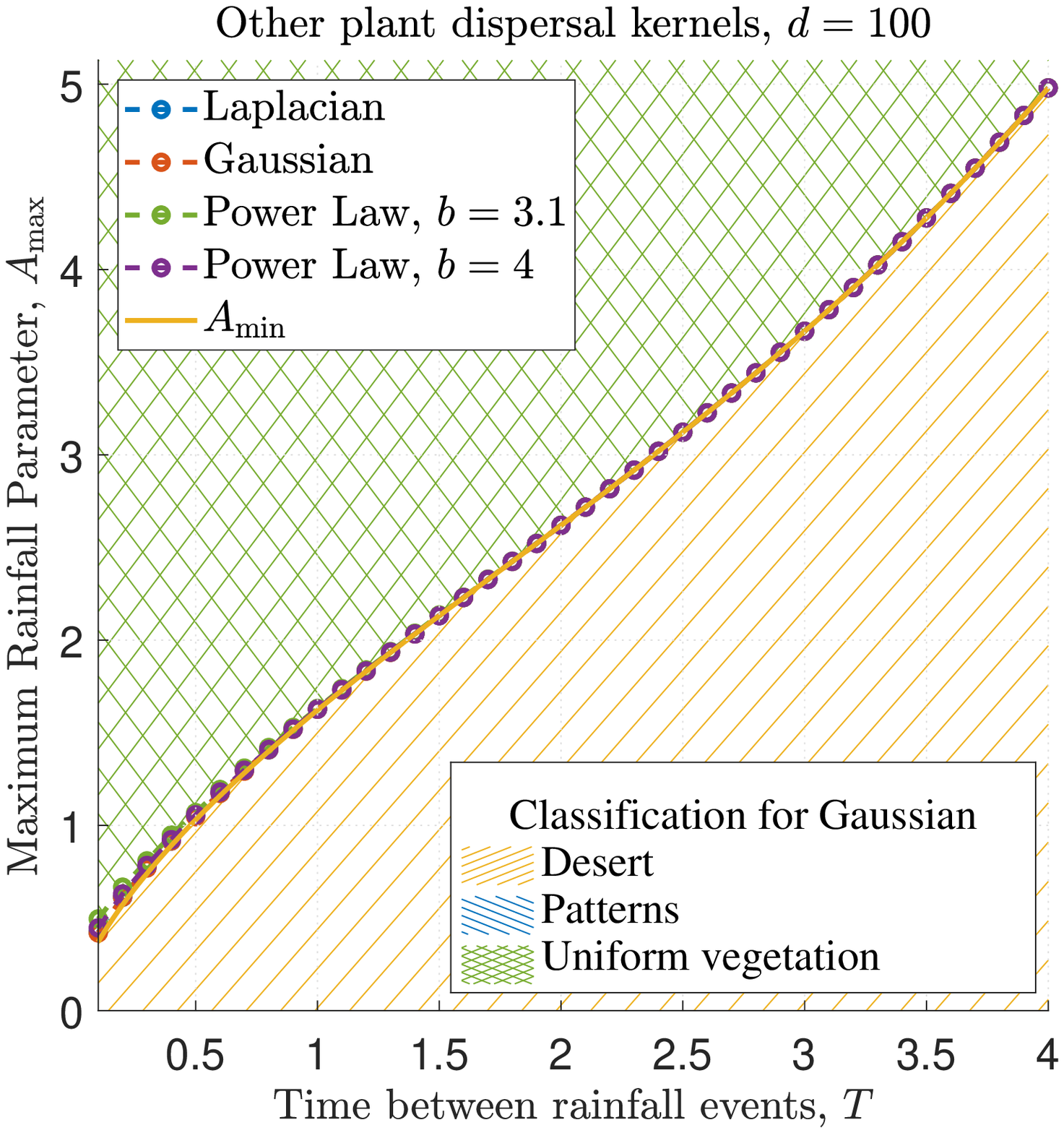}}
	\caption{Changes to $A_{\max}$ under variation of water diffusion and the time between rain pulses. This figure visualises changes to the critical rainfall parameter $A_{\max}$ under changes of the water diffusion rate $d$ ((a)) and the interpulse time $T$ ((b)). The rainfall threshold $A_{\max}$ is determined up to an interval of length $10^{-4}$ for $d=\{0,5,  \dots, 50, 60 \dots 100, 125, \dots, 200, 250, \dots, 1000 \}$ and $T=\{0.1,0.2,\dots, 4\}$, respectively. Plant mortality is set to $B=0.45$. The legend applies to both parts of the figure.}	
\end{figure}

\subsection{Slope}\label{sec: Impulsive: Simulations: Slope}

Finally, we lift the restriction of the flat spatial domain for which the linear stability analysis of \eqref{eq: Impulsive: intro: model discrete dispersal rain water uptake} was performed in Section \ref{sec: Impulsive}. Originally, the Klausmeier model was proposed to describe vegetation bands on sloped terrain and a lot of previous work has focussed on this scenario (e.g. \cite{Klausmeier1999, Sherratt2013IV, Eigentler2018nonlocalKlausmeier}). 
A numerical investigation into the existence of spatial patterns of \eqref{eq: Impulsive: intro: model discrete dispersal rain water uptake} on a sloped spatial domain shows that the threshold $A_{\max}$ at which a transition between uniform and patterned vegetation occurs, increases with increasing slope $\nu$ (Figure \ref{fig: Impuslive: Simulations: slope chi}). The lower bound $A_{\min}$ of the parameter region supporting the onset of spatial patterns from spatially nonuniform perturbations of the equilibrium, is a non-spatial property and thus independent of the slope parameter $\nu$. Thus the size of $[A_{\min},A_{\max}]$ increases with increasing $\nu$. Ecologically, this stems from an increase in the strength of the pattern-forming mechanism. On steeper slopes water flows downhill faster and thus increases the competitive advantage of existing biomass patches.

This increase in the size of $[A_{\min},A_{\max}]$ is, however, negligible compared to the decay of the interval's size for increasing interpulse time $T$ (Figures \ref{fig: Impuslive: Simulations: AT plane slope} and \ref{fig: Impuslive: Simulations: pattern forming interval size slope T}). Our results indicate that the interval's size decays exponentially, similar to the analytically obtained result (Corollary \ref{cor: Impulsive: LinStab: decrease of pattern supporting interval d infinity limit}) for the model on flat ground in the limit $d\rightarrow \infty$. We thus conclude that the simplified model ($\nu=0$) qualitatively yields the same results on the onset of patterns under variations in the length of the drought period $T$.

\begin{figure}
	\centering
	\subfloat[Classification of the $A$-$T$ parameter plane. \label{fig: Impuslive: Simulations: AT plane slope}]{\includegraphics[width=0.48\textwidth]{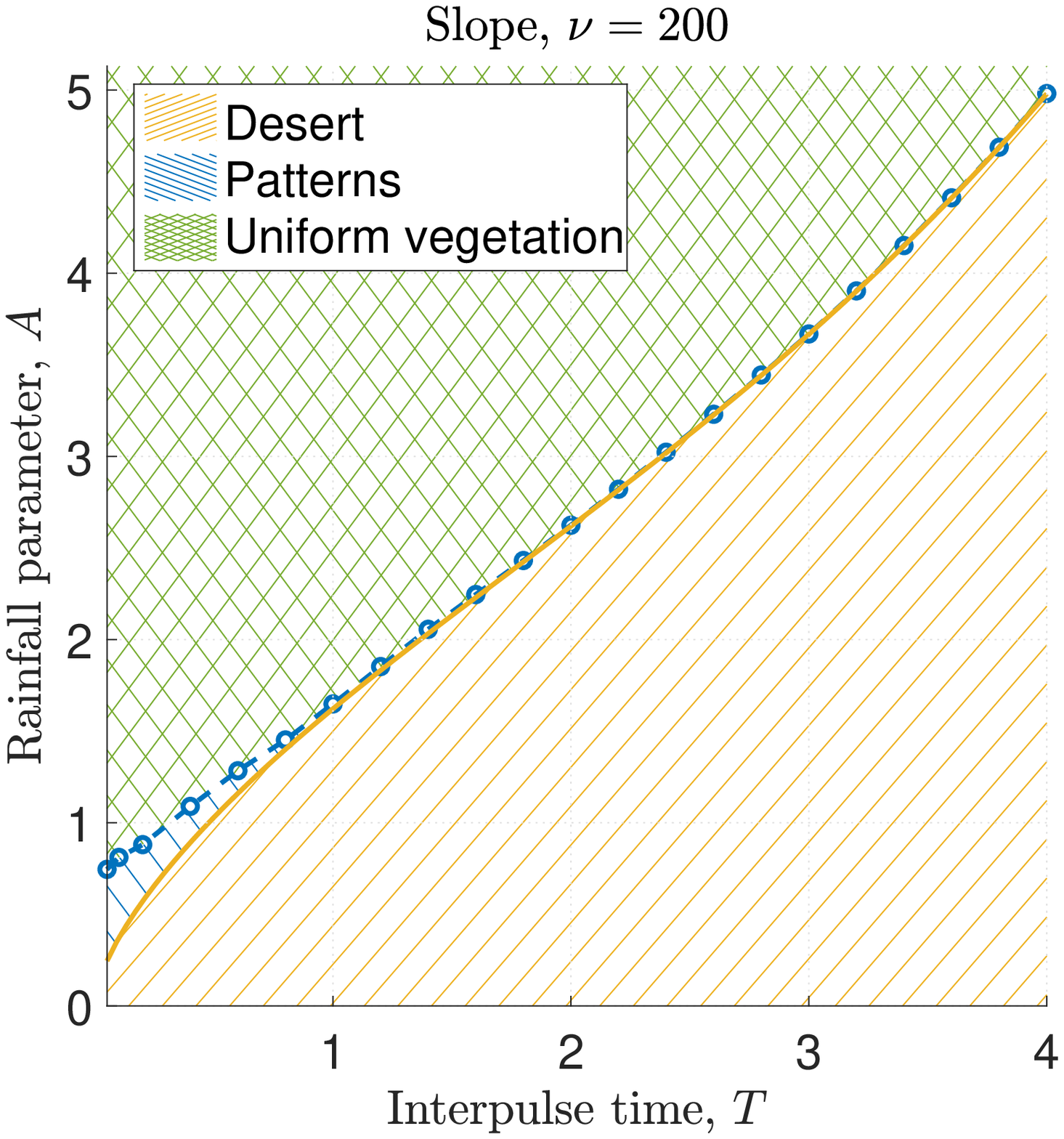}}
	\subfloat[Relative size of the rainfall interval supporting pattern onset.\label{fig: Impuslive: Simulations: pattern forming interval size slope T}]{\includegraphics[width=0.48\textwidth]{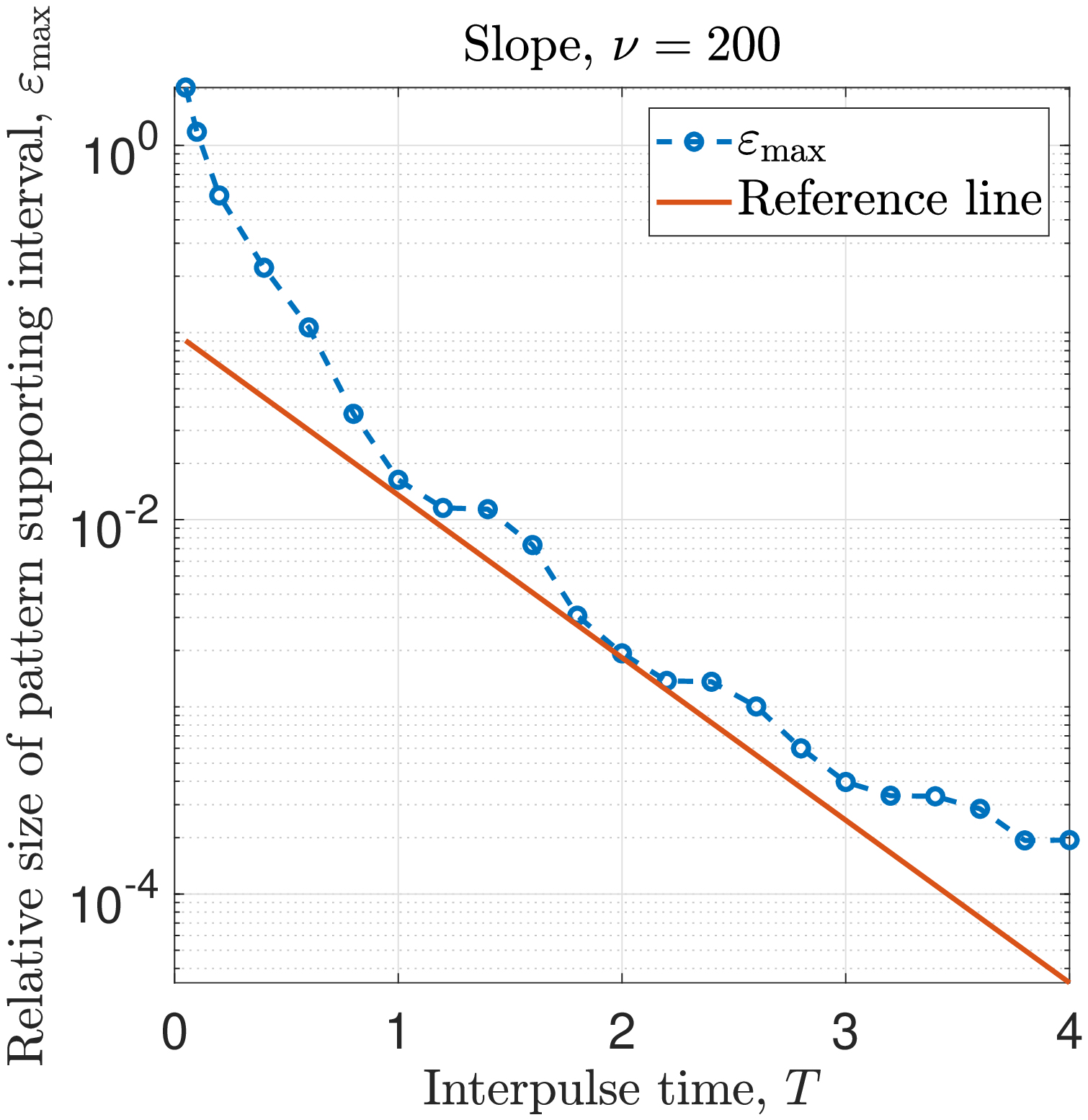}} \\
	\subfloat[Classification of the $A$-$\nu$ parameter plane. \label{fig: Impuslive: Simulations: slope chi}]{\includegraphics[width=0.48\textwidth]{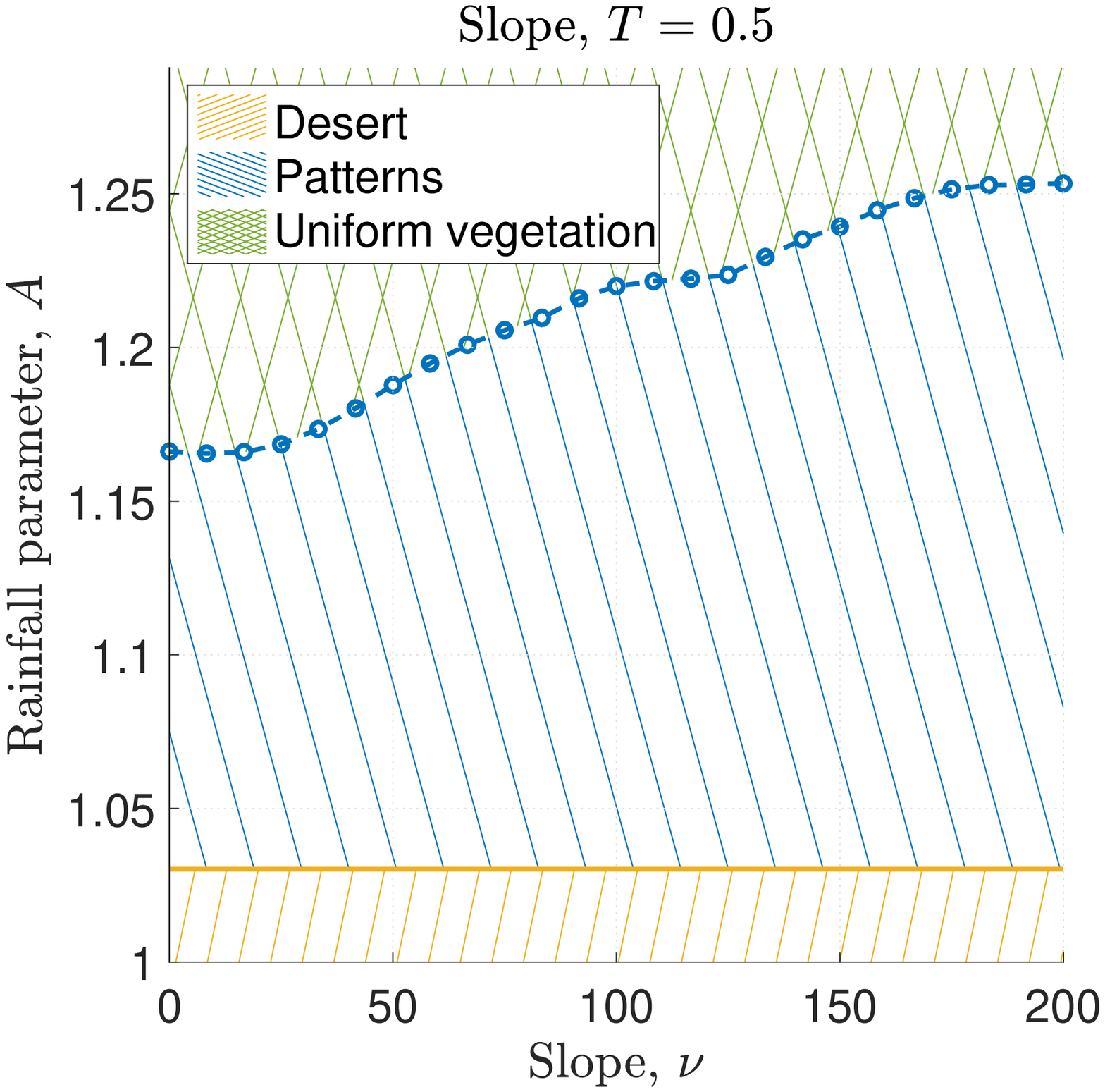}}
	\caption{Classification plots for the model on a slope. The classifications into states of desert, onset of spatial patterns and uniform vegetation are based on the numerical scheme described in Section \ref{sec: Impulsive: Simluations: Methods}. The transition threshold $A_{\max}$ is determined up to an interval of size $10^{-8}$, the level of $A_{\min}$ is given by \eqref{eq: Impulsive: LinStab: lower bound on A for ss to exist}. The relative size of $[A_{\min}, A_{\max}]$ is shown in (b), where the reference line has slope $\exp(-2T)$. The parameter values used in both simulations are $B=0.45$ and $d=500$.}\label{fig: Impuslive: Simulations: slope}
\end{figure}


\section{Discussion}\label{sec: Discussion}

In this paper we consider a new impulsive-type model to investigate the effects of rainfall intermittency on the onset of vegetation patterns in semi-arid environments. Most significantly, our results suggest that the decay-type behaviour which dominates during long drought periods inhibits the onset of spatial patterns and that ecosystems benefit from precipitation intermittency if plant species are unable to efficiently use low soil moisture levels.

The inhibition of patterns by low frequency rain events is quantified by the small size of the interval of rainfall levels in which pattern onset occurs. Therefore, plants are able to form a uniform vegetation cover for rainfall levels very close to the minimum required for the corresponding spatially uniform equilibrium to exist. This pattern-inhibitory effect in the impulsive model (see Proposition \ref{cor: Impulsive: LinStab: decrease of pattern supporting interval d infinity limit} and Figures \ref{fig: Impuslive: LinStab: dlimit AT plane} and \ref{fig: Impuslive: Simulations: Amax changes T flat}) can be explained by the weakening effect of the temporal separation of rainfall pulses on the plant growth-water redistribution feedback which is the main contributor in the formation of patterns \cite{Rietkerk2008, Lejeune1999}. This positive feedback loop consists of two processes; the increased water utilisation in regions of high biomass and the redistribution of water. In \eqref{eq: Impulsive: intro: model discrete dispersal rain water uptake} these processes occur in different stages. The soil modification by plants affects water consumption and plant growth which only occur in the update equations associated with a rainfall pulse, while water diffusion is accounted for in the interpulse PDE system. Therefore, if plants are in a patterned state, the water density immediately after a rainfall pulse is in antiphase to the plant density (i.e. high water density in regions of low biomass and vice versa). The homogenising property of diffusion thus redistributes water from patches of low biomass to regions where plant density is high. If, however, the separation of precipitation pulses is too long, this homogenising effect loses its impact as water evaporation becomes the dominant process.
In the model extension which assumes that plant growth occurs in both the pulse stage and during the interpulse period (Section \ref{sec: Impulsive: Simulations: Nonlinear PDEs}), the temporal separation of rainfall events does not weaken the pattern-inducing feedback. The closure of the feedback loop in the interpulse PDEs allows for more water transported to regions of high biomass during drought periods to be utilised and thus supports the pattern-forming mechanism. 

For a fixed interpulse time $T$, the reduction in water evaporation associated with this increase in water to biomass conversion causes a reduction in the minimum amount of precipitation required for a spatially uniform equilibrium to exist. We use this minimum on the rainfall parameter ($A_{\min}$) as a proxy for the minimum water requirements of the ecosystem, but emphasise the fact that spatially non-uniform stable states with non-zero plant densities are likely to exist for lower precipitation levels and no information on the resilience of the ecosystem can be extracted from the analysis presented in this paper. For both the extension with nonlinear interpulse PDEs and the original model \eqref{eq: Impulsive: intro: model discrete dispersal rain water uptake}, the threshold $A_{\min}$ increases with the drought period length $T$, which indicates that an increase in the time between rainfall events has a detrimental effect on the ecosystem. Even though this does not agree with the majority of reported field observations \cite{Sher2004, Heisler-White2008}, there exists evidence of this inhibitory effect for some dryland species, with an increase in seeds germination rates, a decrease in emergence rates and an increase in seedling mortalities under longer periods of droughts \cite{Lewandrowski2017, Lundholm2004}. This suggests that an ecosystem's response to temporal variability in precipitation is highly species-dependent and it is important to understand a species' response to oscillations in soil moisture to model its dynamics. Indeed, we have established that changes to the plants' water uptake functional response to the water density (Section \ref{sec: Impulsive: Simulations: Nonlinear water uptake}) can reverse the increasing behaviour of the minimum water requirement proxy $A_{\min}$ observed in the original model in which the functional response is linear (Figure \ref{fig: Impuslive: LinStab: dlimit AT plane full}). If species in an ecosystem remain dormant under low soil moisture levels caused by a high frequency - low intensity rainfall regime, then rainfall intermittency and the associated temporal increases in soil moisture can have a positive impact on the ecosystem \cite{Heisler-White2008, Sher2004}. Mathematically, we used a Holling type III functional response to model this dormant behaviour under low soil moisture levels. If the concave-up shape of this species-dependent functional response is sufficiently strong for low water densities, then $A_{\min}$ attains a minimum for an intermediate interpulse time $T$ because water uptake is maximised under such conditions. This is in agreement with results obtained for the Gilad \cite{Gilad2004} model \cite{Kletter2009}.

The dominant role of precipitation intermittency on the onset of patterns also manifests itself in the fact that, unlike in the Klausmeier models, diffusion alone is insufficient to cause pattern onset in the impulsive model. The onset of spatial patterns still requires the diffusion coefficient to exceed a threshold (Proposition \ref{prop: Impulsive: LinStab: dc in A=Amin case flat ground}) but in stark contrast to the Klausmeier models in which a sufficiently large level of diffusion can cause an instability for an arbitrarily large level of rainfall, the effects of diffusion are limited to a small interval of the rainfall parameter (Proposition \ref{prop: Impulsive: LinStab: Amax in d infinity limit}), whose size decreases exponentially as precipitation pulses become more infrequent (Corollary \ref{cor: Impulsive: LinStab: decrease of pattern supporting interval d infinity limit}). This deviation form the classical case of a diffusion-driven instability is due to the previously discussed temporal separation of the components of the pattern-inducing feedback that renders diffusion effects insignificant under long drought spells. This property is specific to the system considered in this paper and no generalisations can be made. Indeed, diffusion-driven instabilities have been shown to occur in other impulsive models \cite{Wang2018}.

A second key aspect of this paper is the effects caused by changes to the width and shape of the plant dispersal kernel. Contrary to the beneficial effect {\color{changes}associated with the inhibition of pattern onset due to} wide plant dispersal kernels shown by the model in this paper (Figure \ref{fig: Impuslive: Simulations: Amax changes diffusion}), plants in semi-arid regions are observed to establish narrow dispersal kernels \cite{Ellner1981}. This is, however, only a secondary effect caused by other adaptations such as protection from seed predators, that are not accounted for in these models but nevertheless affect the vegetation's evolution in arid regions \cite{Ellner1981}. The quantitatively small changes to the rainfall threshold $A_{\max}$ in the impulsive model are caused by the fact that in the impulsive model only the newly added biomass is dispersed, while in the other models the whole plant density undergoes dispersal. Combined with the claim that plants compensate for the negative effect of a narrow seed dispersal kernel by changes of traits not included in this model, this suggests the combination of the weak response of the impulsive model to changes in the width of the dispersal kernel and the stronger effect of rainfall intermittency provides a more realistic framework than a previous model in which the seed dispersal distance played an important role in the absence of any pulse-type events \cite{Eigentler2018nonlocalKlausmeier}.

To facilitate the mathematical analysis presented in this paper we have opted for a fully deterministic modelling of precipitation. The assumption that rainfall events occur periodically in time and are all of the same intensity is, however, an inaccurate description of the inherently stochastic nature of this key process. A more realistic description of such precipitation events can be achieved through a Poisson process with exponentially distributed rainfall intensities \cite{Rodriguez-Iturbe1999}. The model framework presented in this model is, however, insufficient to consider any stochasticity in the rainfall regime. Neither the original model \eqref{eq: Impulsive: intro: model discrete dispersal rain water uptake} nor any of its extensions presented in Section \ref{sec: Impulsive: Simulations} include mechanisms that allow plants to recover from a very low density. Thus, the eventual occurrence of a long drought period (possibly combined with low intensity pulses) under a stochastic precipitation regime inevitably causes the extinction of plants in the long term. In reality, plants have developed mechanisms such as seed dormancy that allow recovery from low biomass densities \cite{Lewandrowski2017}. Their inclusion in a mathematical model is required to better understand an ecosystem's response to stochasticity in environmental conditions. Nevertheless, it is possible to relate the results of the deterministic model presented in this paper to a stochastic setting. Similar to a previous study of effects of temporal variations of rainfall pulses on dryland ecosystems, the constants involved in the deterministic modelling of precipitation can be seen as the expected values that arise from the underlying stochastic processes \cite{Siteur2014a}. If this assumption is applied then our results on thresholds such as $A_{\max}$ present an approximation to the expectations of the respective quantities when any higher order moments (variance, etc.) of the random variables associated with the description of precipitation are neglected \cite{Siteur2014a}.

While the model extensions (and possibly combinations thereof) presented in Section \ref{sec: Impulsive: Simulations} provide a more realistic description of the ecosystem dynamics under a pulse type precipitation regime, the analytical study of the simpler model \eqref{eq: Impulsive: intro: model discrete dispersal rain water uptake} in Section \ref{sec: Impulsive} is an important tool to gain a better understanding of vegetation patterns in semi-arid environments. Numerical approaches tend to become unreliable as the length of the drought periods increases because decay-type processes of long dry spells reduce the plant density in troughs of the spatial pattern to very small values. This makes numerical integration techniques error-prone and emphasises the importance of analytical pathways into the problem (Figure \ref{fig: Impuslive: Simulations: numerical issue}).

{\color{changes}
The results presented in this paper are based on our analysis of a theoretical model and a comparison with empirical data would be desirable to test these hypotheses. Daily rainfall data is available from the 1980s to the present (see e.g. \cite{Maidment2017} for data from Africa), and data with a coarser temporal scale dates back to the 1940s \cite{Dieulin2019}. However, obtaining high-quality data for vegetation in dryland ecosystems is notoriously difficult due to the large spatial and temporal scales of the ecosystem dynamics. Some limited data obtained from satellite images exists (e.g. \cite{Deblauwe2012}), for example on wavelength which can be used as a proxy for biomass, but a comparison with any model predictions would require a better measure of key ecological properties, as well as a long time series of data points.
}

{\color{changes}In this paper, we have analysed the effects of rainfall intermittency on pattern onset in dryland vegetation in one space dimension only. On flat ground in particular, the consideration of a two-dimensional domain would be a natural extension. This could provide more insight into the patterns' properties such as its type (gap pattern, labyrinth pattern, stripes or spots) under changes to the precipitation regime \cite{Meron2012}. The analysis of the impulsive model on a two-dimensional domain would be significantly more challenging, but methods for studying pattern formation in PDEs on such domains exist (see, for example, \cite{Siero2015} for an analysis of the Klausmeier model), which hold the potential to be adapted to the framework of an impulsive model. }

A further natural area of potential future work could involve an accurate description of overland water flow during a rainfall event. For sloped terrain such a description has been provided and applied to a mathematical model describing the evolution of vegetation patterns by Siteur et al \cite{Siteur2014a}. Their argument is based on water instantaneously flowing downhill and infiltrating the soil in areas of high biomass and can thus not be applied to a flat spatial domain. Indeed, overland flow of water during intense rainfall events on semi-arid flat plains is the subject of ongoing research (e.g. \cite{Thompson2011, Wang2015, Rossi2017}). A detailed description of the overland water flow and infiltration into the soil that occurs before water is consumed by plants relies on a clear distinction between the surface water density and the soil moisture. Such a separation is used in alternative model frameworks \cite{Rietkerk2002, HilleRisLambers2001, Gilad2004}, which could be utilised to include the description of water redistribution during rainfall events under a pulse-type precipitation regime.

The model introduced in this paper is based on the Klausmeier model, which is a model that is deliberately kept simple to facilitate a mathematical analysis of it. A number of more complex models exist (see \cite{Zelnik2013,Borgogno2009, Martinez-Garcia2018} for reviews) that study different aspects of patterned vegetation in more detail by, for example, including two coexisting plant species \cite{Pueyo2010, Baudena2013,Gilad2007a, Eigentler2019Multispecies}, describing water uptake as a nonlocal process \cite{Gilad2007,Gilad2004} or considering effects of nonlocal grazing \cite{Siero2018, Siero2019}. For some of these models numerical studies have investigated the effects of temporal rainfall variability \cite{Guttal2007,Kletter2009, Siteur2014a} and an analytical analysis of those models similar to the work done in this paper could provide further insight how pulse-type phenomena affect patterns in semi-arid environments.

\section*{{\color{changes}Acknowledgements}}
Lukas Eigentler was supported by The Maxwell Institute Graduate School in Analysis and its Applications, a Centre for Doctoral Training funded by the UK Engineering and Physical Sciences Research Council (grant EP/L016508/01), the Scottish Funding Council, Heriot-Watt University and the University of Edinburgh.

\clearpage
\bibliography{bibliography}
\end{document}